\setlist[itemize]{leftmargin=*}
\newcommand{\etal}{\textit{et al.}\xspace}
\newcommand{\ie}{\textit{i.e.}\xspace}
\newcommand{\eg}{\textit{e.g.}\xspace}
\newcommand{\etc}{\textit{etc.}\xspace}
\newcommand{\mypara}[1]{\noindent\textbf{#1.} \xspace}
\newtheorem{theorem}{Theorem}
\newtheorem{lemma}[theorem]{Lemma}
\definecolor{revision}{RGB}{0,0,255}
\newcommand{\revstart}{\begin{color}{revision}}
\newcommand{\revend}{~\!\!\end{color}}
\newcommand{\method}{\textsc{Victor}\xspace}
\newcommand{\mlda}{{ML-DA}\xspace}
\newcommand{\mt}{{MT}\xspace}
\patchcmd{\hyper@makecurrent}{%
    \ifx\Hy@param\Hy@chapterstring
        \let\Hy@param\Hy@chapapp
    \fi
}{%
    \iftoggle{inappendix}{%
        \@checkappendixparam{chapter}%
        \@checkappendixparam{section}%
        \@checkappendixparam{subsection}%
        \@checkappendixparam{subsubsection}%
        \@checkappendixparam{paragraph}%
        \@checkappendixparam{subparagraph}%
    }{}%
}{}{\errmessage{failed to patch}}
\newcommand*{\@checkappendixparam}[1]{%
    \def\@checkappendixparamtmp{#1}%
    \ifx\Hy@param\@checkappendixparamtmp
        \let\Hy@param\Hy@appendixstring
    \fi
}
\apptocmd{\appendix}{\toggletrue{inappendix}}{}{\errmessage{failed to patch}}
\begin{document}

\title{\Large \bf \method: Dataset Copyright Auditing in Video Recognition Systems}

\author{
}

\author{\IEEEauthorblockN{Quan Yuan\IEEEauthorrefmark{1},
Zhikun Zhang\IEEEauthorrefmark{1}\IEEEauthorrefmark{2}\thanks{\IEEEauthorrefmark{2}Zhikun Zhang is the corresponding author.},
Linkang Du\IEEEauthorrefmark{3}, 
Min Chen\IEEEauthorrefmark{4}, \\
Mingyang Sun\IEEEauthorrefmark{5}, 
Yunjun Gao\IEEEauthorrefmark{1}, 
Shibo He\IEEEauthorrefmark{1},
Jiming Chen\IEEEauthorrefmark{1}\IEEEauthorrefmark{6}}

\IEEEauthorblockA{\IEEEauthorrefmark{1}Zhejiang University, %
\IEEEauthorrefmark{3}Xi’an Jiaotong University, 
\IEEEauthorrefmark{4}Vrije Universiteit Amsterdam, \\
\IEEEauthorrefmark{5}Peking University,
\IEEEauthorrefmark{6}Hangzhou Dianzi University\\ 
Email: 
\IEEEauthorrefmark{1}\{yq21, zhikun, gaoyj, s18he, cjm\}@zju.edu.cn, %
\IEEEauthorrefmark{3}linkangd@gmail.com, 
\IEEEauthorrefmark{4}m.chen2@vu.nl,
\IEEEauthorrefmark{5}smy@pku.edu.cn
}}

\IEEEoverridecommandlockouts
\makeatletter\def\@IEEEpubidpullup{6.5\baselineskip}\makeatother
\IEEEpubid{\parbox{\columnwidth}{
		Network and Distributed System Security (NDSS) Symposium 2026\\
		23 - 27 February 2026 , San Diego, CA, USA\\
		ISBN 979-8-9919276-8-0\\  
		https://dx.doi.org/10.14722/ndss.2026.240746\\
		www.ndss-symposium.org
}
\hspace{\columnsep}\makebox[\columnwidth]{}}

\maketitle 

\begin{abstract}

Video recognition systems are increasingly being deployed in daily life, such as content recommendation and security monitoring.
To enhance video recognition development, many institutions have released high-quality public datasets with open-source licenses for training advanced models.
At the same time, these datasets are also susceptible to misuse and infringement.
Dataset copyright auditing is an effective solution to identify such unauthorized use.
However, existing dataset copyright solutions primarily focus on the image domain; the complex nature of video data leaves dataset copyright auditing in the video domain unexplored.
Specifically, video data introduces an additional temporal dimension, 
which poses  significant challenges to the effectiveness and stealthiness of existing methods.

In this paper,
we propose~\method,
the first dataset copyright auditing approach for video recognition systems.
We develop a general and stealthy sample modification strategy that enhances the output discrepancy of the target model.
By modifying only a small proportion of samples (\eg, 1\%),
\method amplifies the impact of published modified samples on the prediction behavior of the target models.
Then, the difference in the model's behavior for published modified and unpublished original samples can serve as a key basis for dataset auditing.
Extensive experiments on multiple models and datasets highlight the superiority of~\method.
Finally, we show that \method is robust in the presence of several perturbation mechanisms to the training
videos or the target models.

\end{abstract}

\section{Introduction}
\label{sec:introduction}

Video recognition systems~\cite{wu2022survey} have become increasingly vital in real-world applications, including content recommendation on streaming platforms~\cite{deldjoo2016content}, activity recognition in surveillance systems~\cite{sun2022human}, autonomous driving~\cite{biparva2022video} and healthcare monitoring~\cite{elharrouss2021review},~\etc
To facilitate the research of video recognition,
many high-quality datasets are published by the related research institutions such as DeepMind~\cite{kay2017kinetics}, The Allen Institute for AI~\cite{sigurdsson2016hollywood}, and Stanford University~\cite{karpathy2014large}.
However, it should be noted that the video datasets are published with strict open-source licenses to protect the intellectual property of the data owner.
Unauthorized commercial deployment would not only easily violate these terms, but also raise serious legal and ethical issues.
According to an investigation by Proof News~\cite{2024youtube},
many companies used YouTube video datasets for AI training without permission, which conflicts with YouTube’s terms of service.
Therefore, it is crucial to audit the dataset copyright in video recognition systems.

Currently, existing dataset auditing research primarily focuses on image~\cite{sablayrolles2020radioactive,li2023black,guo2023domain} and audio~\cite{guo2025audio,miao2021audio} domains,
while few works have yet explored the dataset auditing for video recognition systems. 
Existing dataset auditing research can be divided into passive auditing and proactive auditing~\cite{du2025sok,huang2024general}. 
Passive auditing, primarily built on  membership inference~\cite{shokri2017membership}, 
infers whether the data of a user is utilized to train the target model~\cite{song2019auditing,chen2023face}. 
However, this method requires a large number of queries to the model, which is costly. 
On the other hand, it usually leads to a high false alarm rate, which is impractical. 
In contrast, proactive auditing injects certain marks such as radioactive data and backdoor into some samples before the dataset is released for subsequent verification~\cite{sablayrolles2020radioactive,li2022untargeted}. 
However, most prior studies either require altering data labels, which can degrade the model's performance on normal tasks, or rely on the knowledge of target model, which is often impractical in real-world scenarios.
In addition, 
compared to image data, the additional spatiotemporal dimension of video data and the diversity of various video models pose significant challenges, making most existing methods difficult to implement or ineffective in practice.

\mypara{Our Proposal}
In this work, we present \method~(\underline{Vi}deo 
re\underline{c}ognition
Audi\underline{tor}), a practical approach for auditing the dataset copyright by modifying a small portion of the samples in the published dataset. 
The core idea of~\method is to amplify the impact of published modified samples on the prediction behavior of the target model.
By evaluating the difference in the outputs of the suspect model for the modified samples and the original samples, \method determines whether the dataset has been misused.
If the difference in the output prediction of the original sample and the modified sample is small, it indicates a high likelihood that the suspect model was trained using the published dataset.
The design of~\method mainly faces the following challenges:

\textit{Challenge 1: How to avoid introducing side effects on model training?}
One prevalent strategy for influencing model behavior is to embed backdoors into the trained model. 
However, such operations can easily cause backdoored samples to produce incorrect predictions, %
introducing unexpected or exploitable vulnerabilities.
Accordingly, \method chooses to retain the original true label of each modified sample throughout the modification process to prevent potential side effects.

\textit{Challenge 2: How to amplify the impact of modified samples with low modification costs? 
}
On the one hand, the dataset owners usually have limited information about the target model. %
On the other hand, video data is often subject to operations such as interception and cropping before being processed by the target model.
These characteristics significantly hinder the applicability of traditional image dataset auditing techniques to video-based systems.
To ensure that the amplification effect remains robust across diverse
target models 
and video preprocessing methods, \method introduces procedural noise~\cite{co2019procedural} into all frames of selected samples. 
This subtle perturbation effectively enhances the behavioral difference between the released and original samples on the target model, without compromising the visual semantics.

\textit{Challenge 3: How to achieve high-precision auditing with low false positive rates?}
On the one hand, it is challenging to design a reliable mechanism for detecting dataset misuse.
On the other hand, if both samples produce low confidence scores for the correct label, their outputs may appear similar, thus obscuring the basis for auditing and increasing the risk of misjudgment.
To enable reliable detection of dataset usage, \method employs a subset of published original samples and their corresponding unpublished modifications to estimate a decision threshold. 
The auditing is carried out via hypothesis testing on the sequence of output differences between unpublished original and published modified samples. 
Moreover,
\method introduces post-processing steps to address scenarios in which both types of sample produce low output probabilities, thereby reducing misleading in the auditing decision.

\mypara{Evaluation}
We conduct experiments on multiple representative
video recognition models and three classic open-sourced video datasets to
illustrate the effectiveness of~\method.
The experimental results indicate that \method can effectively audit the dataset usage across various settings.
For instance, by modifying only 1\% of the samples, \method can achieve up to 100\% auditing accuracy across 
multiple datasets and suspect models.
We further analyze the effectiveness of various components %
and explore the impact of different parameter settings.

\mypara{Robustness}
In practice, the target model might be equipped
with various obfuscation techniques to hinder the dataset auditing.
Therefore,
we conduct the experiments to validate the robustness of our proposed~\method. 
In our work, three representative interference strategies (\ie, 
input preprocessing,
training intervention, post-adjustment) in a general machine learning model pipeline are considered.
We observe that the performance of~\method only slightly drops, which shows the robustness of~\method.

\mypara{Contributions}
In summary, the main contributions of the paper are three-fold:

\begin{itemize}
[itemsep=2pt,topsep=2pt,parsep=0pt]

\item To our knowledge, ~\method is the first dataset copyright auditing approach for video recognition systems.

\item 
We propose a label-invariant perturbation mechanism and a behavior difference-based verification strategy to enable effective auditing.
In particular,
\method injects procedural noise into a small fraction (\eg, 1\%) of the dataset without altering the label. 
This modification amplifies the influence of altered samples on the target model while effectively keeping visual content and model utility. 
Combined with careful designs such as threshold estimation and hypothesis testing, \method establishes a reliable foundation for video dataset auditing.

\item We conduct comprehensive experiments on multiple models and datasets to illustrate the effectiveness and robustness of \method.
\method is open-sourced at 
\url{https://github.com/sec-priv/VICTOR}.%

\end{itemize}

\section{Related Work}

\label{sec:related_work}

The dataset copyright auditing methods can be divided into passive auditing and proactive auditing methods,
depending on whether the original dataset is modified.

\mypara{Passive Auditing}
The solutions in passive auditing are usually implemented based on the idea of membership inference attack~\cite{shokri2017membership,carlini2022membership,zhang2022inference,liu2022ml}.
The fundamental principle of membership inference is to identify distinct characteristics between data that has been used in training and data that has not.
The existing passive auditing methods can be classified into decision boundary-based and behavior characteristics-based methods~\cite{du2025sok}.
For the decision boundary-based methods~\cite{maini2021dataset,li2021membership,choquette2021label,tian2023knowledge,szyller2023robustness}, the core intuition is that samples near the decision boundary in the training dataset are critical for classification. Therefore, by extracting boundary information from the model, the dataset owner can infer whether a particular dataset was used in its training.
For the behavior characteristics-based methods~\cite{chen2023face,du2024orl,liu2021encodermi,song2021systematic,du2025artistauditor,carlini2022membership,li2025vid},
the dataset owner utilizes the model’s outputs and hidden representations as the key basis to finish the auditing.
In particular, the model's behavior include the
loss value~\cite{sablayrolles2019white}, the log-likelihood value~\cite{dziedzic2022dataset,li2022user},
and the disparity between the outputs of
the target and the shadow models
(\ie, models trained on datasets that are similar to the training dataset of the target model)~\cite{dong2023rai2,liu2022your,salem2019ml}.
However, these passive auditing approaches usually face significant challenges such as high false positive rates and the need for intensive
queries to target models.
In light of the low accuracy and poor robustness of passive auditing solutions, we focus on the proactive auditing strategy to safeguard the dataset copyright in this work.

\mypara{Proactive Auditing}
A typical solution in proactive auditing is radioactive data-based auditing~\cite{sablayrolles2020radioactive,wenger2024data,guo2023domain,guo2024zeromark}.
This type of method 
injects an optimized radioactive mark into the vanilla training images. In practice, the radioactive marks
need to be propagated to the image space.
If the marked data are used in the training, the classification model
is updated with both the features and the radioactive mark.
In the copyright validation, the auditor detects the distribution
deviation induced by the radioactive marks.
However, the distribution shift may be slight by a single marked sample. 
Thus, the dataset owner needs to inject
a large number of marked images into the original dataset to provide statistical evidence that the model is trained on marked images.
In addition, recent studies indicate that the performance of radioactive data-based methods is limited~\cite{huang2024general,chen2025MembershipTracker}.

Another classic method of proactive auditing is backdoor-based auditing.
In this setting, the dataset owner embeds backdoors (or called triggers) to the original dataset.
If a model is training using this modified dataset,
the model will perform normally for benign samples.
When specific backdoors are present,
the model's predictions will change dramatically.
Depending on whether the true labels of the samples are altered during the backdoor injection,
the backdoor-based method can be classified into dirty-label backdoor~\cite{gu2019badnets,li2021invisible,li2020open,li2023black,li2022black,al2024look} and clean-label backdoor methods~\cite{li2022untargeted,souri2022sleeper,tang2023did}.
The dirty-label backdoor methods  
have better effectiveness, and the clean-label backdoor methods achieve better stealthiness.
However,
both the two type of methods are prone to introduce harmful influence and potential security risks~\cite{bouaziz2025data}.
The attacker can exploit the backdoor to interfere with the model performance on other normal samples~\cite{guo2025audio}.

Recently, Huang~\etal~\cite{huang2024general} proposed a data auditing framework and applied it to image classifiers and foundation models. 
The proposed framework leverages the existing black-box membership inference method, together with a sequential hypothesis testing to implement dataset auditing.
However, this approach requires a substantial proportion of data marking in the published version. 
Chen~\etal~\cite{chen2025MembershipTracker} employ a data marking component to mark the target data and adopt a membership inference-based verification process for auditing. 
However, this approach requires substantial modifications to the original images, which significantly degrades image quality.
Moreover, the dynamic nature of video data makes the auditing performance of the above methods insignificant.

\section{Preliminaries}
\label{sec:preliminary}

\subsection{Video Recognition}
\label{subsec:video_classification_system}

Video recognition systems are designed to analyze and interpret the spatio-temporal content of videos, with the goal of assigning semantically meaningful labels to entire clips or localized temporal segments~\cite{sadhu2021visual}. 
This system has already been widely applied for various fields, such as action recognition, gesture analysis, healthcare monitoring, and content categorization~\cite{pareek2021survey}.

Before being fed into the models, raw video inputs typically undergo several pre-processing steps~\cite{parashar2023data}. These include frame extraction, spatial resizing, temporal sampling (\eg, uniform, random, or dense sampling), and modality transformation (\eg, computing optical flow or extracting RGB and depth channels). 
These pre-processing operations are essential for reducing computational overhead, maintaining temporal consistency, and enhancing the extraction of 
spatio-temporal features~\cite{liang2023survey}.

Over the past decade, researchers have proposed various model architectures for video recognition, which can be categorized into three types:
2D CNN + RNN, 3D-CNN, and Transformer-based models.
The 2D CNN + RNN methods utilize 2D CNN for frame feature extraction and RNN
for capturing temporal dependencies between them~\cite{lin2019tsm,luo2019grouped,wang2016temporal,liu2016spatio}.
To learn stronger spatial-temporal representations, 3D
CNN-based methods were proposed~\cite{feichtenhofer2020x3d,feichtenhofer2019slowfast,tran2018closer,tran2019video,carreira2017quo}. 
These methods utilize
3D kernels to jointly leverage the spatio-temporal context
within a video clip.
Inspired by the success in natural language processing, transformer-based models are proposed~\cite{liu2022video,bertasius2021space,arnab2021vivit}. 
This type of method adopts self-attention to gauge the relevance of various frames.

\subsection{Threat Model}
\label{subsec:problem_statement}

\mypara{Application Scenario}
This paper aims to address a critical gap in video dataset copyright protection: detecting unauthorized commercial use of open-source licensed datasets.
\autoref{fig:scenario}
illustrates a typical application scenario where the data owners 
publish the dataset to the public.
In order to protect the dataset copyright, the dataset owner modifies some samples (videos) in the original dataset. 
The modified samples and other original samples constitute the dataset $D_1$ and are released to the public.
A malicious dataset user (attacker) can download the open-sourced video datasets and then train their models for unauthorized purposes.
To confirm whether the trained models use the dataset $D_1$, 
the auditor
queries these models to obtain the auditing results.

\begin{figure}[htbp]
\centering
\vspace{-0.2cm}
\includegraphics[width=0.42\textwidth] {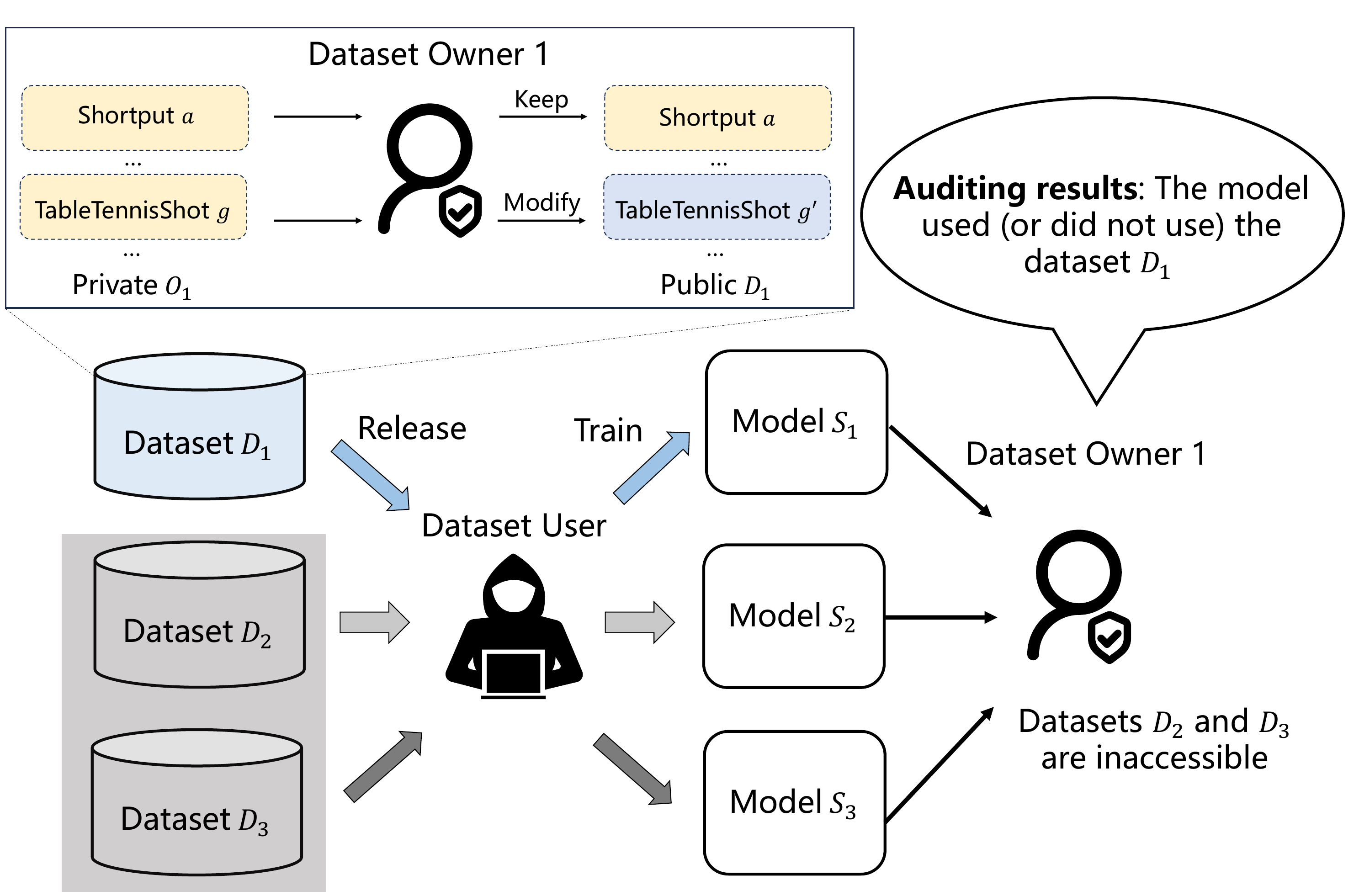}
\caption{An example of the application scenario. 
}
\vspace{-0.2cm}
\label{fig:scenario}
\end{figure}

\mypara{Attacker's Capabilities}
We assume that the attacker can download the published video dataset and train any video recognition model with the dataset.
The attacker may adopt various evasion strategies (such as input preprocessing, training intervention, and post-adjustment) to bypass the auditing mechanisms.

\mypara{Auditor’s Capabilities}
The auditor has complete knowledge of the protected dataset and can modify some samples of the dataset prior to release.
In addition to its own dataset, we consider that the auditor is unaware of the dataset information from other dataset owners.
To mimic the real-world application, we consider that
the auditor has black-box access to the suspect model.
Note that this is the most general and challenging scenario for the auditor.

\section{\method}
\label{sec:methodology}

\subsection{Can Image Auditing Methods Apply to Video?}
\label{subsec:challenge}

Before introducing our approach, one
might wonder why image auditing techniques cannot
be applied directly to video. 
The challenges faced when applying
the image auditing methods to video are as follows:

\mypara{Flexible Video Lengths}
The complexity of video data (including additional temporal dimension) is higher than image data.
In particular,
video data comprises sequences of frames with uncertain lengths, which are typically preprocessed through operations such as frame sampling and cropping before being input into neural networks.
Under these conditions, 
the difference in the amount of information between the raw data and the model input data is significant, making 
the 
traditional image auditing methods difficult to implement or ineffective.

\mypara{Complex Video Models}
Due to the increased architectural complexity of video recognition models and their reliance on aggregated information from multiple frames, image auditing techniques designed to perturb the output distribution are generally ineffective especially when the target model architecture and data pre-processing steps are unknown.
The greater robustness of video recognition models significantly increases the difficulty of influencing the model output.

\mypara{Harmful Backdoor Influence}
Current image auditing approaches based on backdoor injection typically introduce harmful triggers into the training data. 
These triggers may not only compromise the model's performance but also introduce exploitable vulnerabilities that attackers can leverage.

\subsection{Motivation}
\label{subsec:motivation}

Given that the dataset owner has the ability to modify the data prior to release,
our core idea is to amplify the behavioral differences of these modified samples across different target models (\ie, models that have used the dataset and those that have not).
During the copyright verification phase, these amplified differences can then be evaluated to determine whether the dataset has been used for training.

For the design goal of dataset auditing, we aim to achieve the following objectives:
1) Modify only a small fraction of the dataset at a low cost to reduce the discrepancies between the published and original datasets;
2) Ensure that the proposed method can reliably identify whether the released dataset was misused by leveraging the behavioral differences in the suspect model across various data pre-processing procedures and model architectures;
3) Avoid introducing harmful backdoor influence that could degrade the performance of the target model;
4) Maintain robustness against various adaptive attacks.

\subsection{Overview}
As depicted in \autoref{fig:framework}, the workflow of \method mainly consists of three phases: Sample modification, sample selection, and copyright verification. 
If the suspect model is trained on the target dataset, the auditor should output a positive auditing result;
otherwise, a negative auditing result.
We also summarize the frequently used mathematical notations in \autoref{table:math_notations}.

\begin{figure*}[htbp]
\centering
\includegraphics[width=0.92\textwidth]{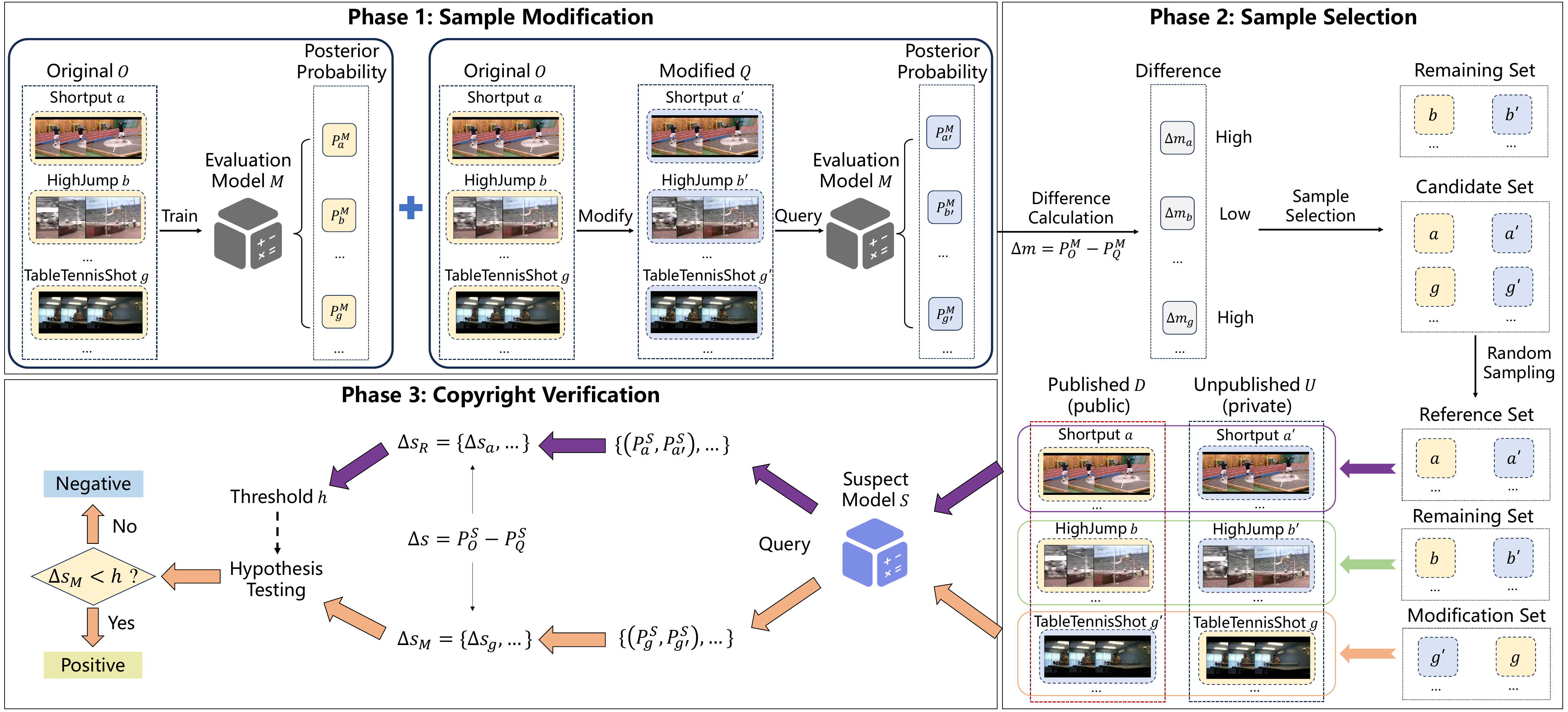}
\caption{\method overview. \method is composed of three phases: Sample modification, sample selection, and copyright verification.
In the sample modification phase, an evaluation model is trained based on the original dataset.
In addition, the modified version of original dataset is generated in this phase. 
In the sample selection phase, \method quantifies the difference between each modified sample and its original counterpart by analyzing the evaluation model's outputs.
This difference serves as a key criterion for selection.
Samples with larger differences form the candidate set, while those with smaller differences are placed in the remaining set. 
\method then randomly selects a subset of candidate samples as the modification set and another subset as the reference set.
\method proceeds to publish the modified samples from the modification set, along with the original samples from the reference and remaining sets.
In the copyright verification phase, \method assesses whether the dataset has been misused by analyzing the suspect model’s predictions on the reference and modification sets.
}
\vspace{-0.2cm}
\label{fig:framework}
\end{figure*}

\begin{table}[!t]
    \centering
    \caption{Summary of mathematical notations.}
    \label{table:math_notations}
    \vspace{-0.1cm}
    \footnotesize
    \setlength{\tabcolsep}{1.2em}
	\begin{tabular}{cl}
		\toprule
		\textbf{Notation} & \textbf{Description}  \\
		\midrule
            $O$ & Original dataset \\
		$Q$ & Modified dataset \\
            $D$ & Published dataset \\
		$U$ & Unpublished dataset \\
		$\varepsilon$ & Perturbation budget  \\
             $\delta_{\text{per}}$ & Parameter set for Perlin noise \\
		$M$ & Evaluation model \\
		$S$ & Suspect model  \\
		$P$ &  Posterior probability \\
		$\mathbb{R}$ & Reference set \\
		$\mathbb{M}$& Modification set \\
            $\mathbb{E}$& Remaining set \\
		$r_c$& Ratio of candidate set  \\
        $r_m$& Ratio of modification set  \\
        $r_r$& Ratio of reference set  \\
        $H$& Upper limit of the difference threshold  \\
        $B$& Boundary of low probability  \\
        $\beta$& Boundary coefficient  \\
        $\alpha$& Significant level  \\
		\bottomrule
	\end{tabular}
\end{table}

\mypara{Phase 1: Sample Modification}
In this phase,
an evaluation model is first trained based on the original video dataset.
In addition, we can generate the modified version for each sample (video) in the original dataset.
Then, the prediction outputs of original and modified versions can be calculated by the evaluation model.
These predictions can be served as the selection basis in the following phase.
The details of Phase 1 are in~\autoref{subsec:sample_modification}.

\mypara{Phase 2: Sample Selection}
Based on the output results from Phase 1, 
we hope to find the samples that are most likely to achieve amplification effect as the final published samples.
Here, we choose to calculate the difference in the predictions of the evaluation model on the original and modified datasets.
The samples with larger differences form the candidate set,
while those samples with smaller differences construct the remaining set.
Next, we select a subset of candidate samples as the modification set and another subset as the reference set.
Further, the other samples in the candidate set that were not selected are also included in the remaining set.
We then
release the public dataset, which includes the modified samples of the modification set, and the original samples of the reference and remaining sets.
The details of Phase 2 can be found in~\autoref{subsec:sample_select}.

\mypara{Phase 3: Copyright Verification}
During this phase, \method determines whether the dataset has been misused by leveraging the samples from the modification and reference sets.
The key insight is to assess whether there exists a statistically significant difference in the suspect model’s performance between the published and unpublished samples from the reference and modification sets.
The details of Phase 3 are referred to~\autoref{subsec:copyright_verify}.

\section{Design Details}
\label{sec:design_details}

\subsection{Sample Modification}
\label{subsec:sample_modification}

During the sample modification phase,
\method aims to obtain an evaluation model and a modified dataset based on the original dataset.
\autoref{algorithm:sample_modify} illustrates the basic process of this phase.

First, 
\method trains an evaluation model $M$ based on the original dataset $O$.
Then, the posterior probability prediction of each sample in $O$ can be computed according to the model $M$.
The evaluation model $M$ can serve as a basis for guiding sample selection. 
Intuitively, a larger prediction discrepancy between the original and modified samples on 
$M$ indicates a higher potential for amplification, making such samples more suitable for effective auditing.

In addition,
\method tries to generate a modified version for each sample in the original dataset $O$.
Our objective is to generate modified samples that remain visually similar to the original videos while keeping the label unchanged and inducing noticeable differences in the output of the target model, under constrained modification costs.
However, achieving this goal on video data presents several challenges.
First, although a video consists of many frames, only a subset is typically used for model prediction, and both the number and indices of these selected frames are often unknown.
Second, video recognition models aggregate temporal information across multiple frames during inference, making them inherently more robust than image models.
Furthermore, the architectures of target models are highly diverse, exacerbated by the difficulty of designing universally effective modifications.
These factors significantly limit the applicability of traditional image domain solutions in the video domain.

To address the above challenges, we draw inspiration from two-dimensional procedural noise~\cite{lagae2010survey}, especially Perlin noise~\cite{perlin2002improving}, which has been shown to effectively reduce prediction confidence in various image classification models~\cite{co2019procedural,chen2025MembershipTracker}. 
Due to the great versatility and low implementation cost of procedural noise,
we choose to inject it into the original video.
In particular, 
procedural noise refers to algorithmically generated patterns that simulate the randomness and irregularity found in natural phenomena. 
\autoref{fig:perlin_show1} illustrates the examples of procedural noise.

\begin{figure}[!t]
    \centering
    \includegraphics[width=0.35\textwidth]{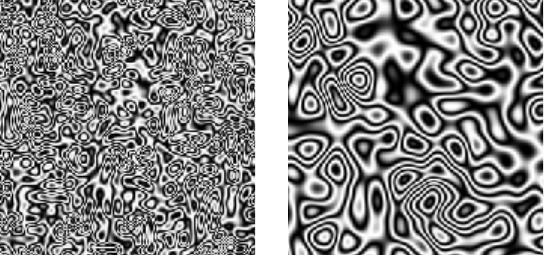}
    \caption{
    The example of Perlin noise.
    }
    \label{fig:perlin_show1}
\vspace{-0.4cm}
\end{figure}

\begin{algorithm}[!t]

        \caption{Sample Modification}
        \label{algorithm:sample_modify}
        \KwIn{Original dataset $O$, perturbation budget $\varepsilon$, the parameter set for three-dimensional Perlin noise $\delta_{\text{per}} = \{\lambda_x, \lambda_y, \lambda_t, \phi_{\text{sine}}, \Omega\}$
        }
        \KwOut{The evaluation model $M$, the modified dataset $Q$, the posterior probability vectors $P_O^M$ and $P_Q^M$}
        {
        // Model training \\
        Train an evaluation model $M$ based on the original dataset $O$  \\
        Calculate the posterior probability $P_O^M$ of the samples in the original dataset $O$ by evaluation model $M$
        \\

        // Noise injection \\
        \For{each sample o in O}
        {
        Generate the normalized three-dimensional Perlin noise with the parameter set $\delta_{\text{per}}$ \\
        Scale the noise based on the perturbation budget $\varepsilon$ \\
        Add the noise to the original sample and normalize it to $[0,255]$ to obtain the modified sample $o'$, then add $o'$ to $Q$ \\
        Calculate the posterior probability $P_{o'}^M$ of the modified sample and update it to $P_{Q}^M$
        }
       
        }
       
\end{algorithm}

In this work, 
we extend the classical two-dimensional Perlin noise construction to the spatio-temporal domain to generate dynamic, temporally-coherent perturbations suitable for video data. 
In addition to the spatial wavelengths $\lambda_x, \lambda_y$, and the number of octaves $\Omega$, we further introduce a temporal frequency control parameter $\lambda_t$ to regulate variation across the temporal axis.

Let $(x, y, t)$ be a query point in normalized coordinates, and let $(i,j,k)$ be the corresponding corner lattice points, \ie, 
$i = \lfloor x \rfloor, j = \lfloor y \rfloor, k = \lfloor t \rfloor$.
Each lattice corner is assigned a pseudo-random unit gradient vector:
\begin{equation*}
\mathbf{g}_{ijk} \sim \mathcal{N}(0, \mathbf{I}_3),
\end{equation*}
where $\mathcal{N}(\mathbf{0}, \mathbf{I}_3)$ denotes a 3-dimensional normal distribution with zero mean and identity covariance matrix, \ie, $\mathbf{g}_{ijk}$ is sampled independently from a standard multivariate Gaussian.
The relative offset from the lattice corner is:
\begin{equation*}
u = x - i, \quad v = y - j, \quad w = t - k. 
\end{equation*}

We compute the dot product between the gradient vector and the relative offset vector at each of the 8 corners of the unit cube, followed by trilinear interpolation using Perlin's fade function:
\begin{equation}
\label{eq:perlin_fade_func}
f(s) = 6s^5 - 15s^4 + 10s^3. 
\end{equation}

Let $P_n(x, y, t)$ denote the interpolated Perlin noise value. 
We then apply a fractal summation over $\Omega$ octaves:
\begin{equation}
\label{eq:S_per}
S_{\text{per}}(x, y, t) = \sum_{n=1}^{\Omega} \frac{1}{2^n} P_n\left(\frac{2^{n-1}x}{\lambda_x}, \frac{2^{n-1}y}{\lambda_y}, \frac{2^{n-1}t}{\lambda_t} \right).
\end{equation}

To enhance the %
perturbation diversity and control high-frequency content, we apply a sinusoidal transformation:
\begin{equation}
\label{eq:G_per}
G_{\text{per}}(x, y, t) = \sin\left(2\pi \cdot \phi_{\text{sine}} \cdot S_{\text{per}}(x, y, t)\right).
\end{equation}

Here, $\phi_{\text{sine}}$ is a tunable sine frequency parameter that controls the intensity and granularity of the generated texture. 
Therefore, the full parameter set for three-dimensional Perlin noise is
$
\delta_{\text{per}} = \{\lambda_x, \lambda_y, \lambda_t, \phi_{\text{sine}}, \Omega\}
$.

The final noise field is normalized to the range $[0, 1]$.
Then, we utilize the budget $\varepsilon$ as the $\ell_{\infty}$-norm constraint on the perturbation. 
Furthermore,
we inject the scaled noise into all frames of the original video and obtain the modified sample, with the constraint of $[0,255]$.
On this basis, we can obtain the posterior predictions of the modified samples.
In addition, we clarify that 3D Perlin noise can be replaced with other types of noise in practical deployment (\eg, Gabor noise), which demonstrates the generalizability and stealthiness of \method.

Through the above steps, in the first phase, we can obtain an evaluation model $M$ by training and generate a modified dataset $Q$, which can be used for subsequent selection.

\subsection{Sample Selection}
\label{subsec:sample_select}

\begin{algorithm}[!t]

        \caption{Sample Selection}
        \label{algorithm:sample_select}
        \KwIn{Original dataset $O$, the modified dataset $Q$, the posterior probability vectors $P_O^M$ and $P_Q^M$,
        the ratio of candidate set $r_c$, the ratio of reference set $r_{r}$,
        the ratio of modification set $r_{m}$
        }
        \KwOut{The published dataset $D$, the unpublished dataset $U$, the reference set $\mathbb{R}$, the modification set $\mathbb{M}$, the remaining set $\mathbb{E}$}
        {
        // Difference calculation \\
        $\Delta m \leftarrow P_O^M-P_Q^M$  \\
        Sort the values in $\Delta m$ 
        \\
        Select samples with values in the first $r_c$ proportion in $\Delta m$ as the candidate set $\mathbb{C}$, and the remaining 
$(1-r_c)\cdot|O|$ samples as the remaining set $\mathbb{E}$ \\

        // Random sampling \\
        Randomly select $r_r\cdot|O|$ samples from the candidate set $\mathbb{C}$ as the reference set $\mathbb{R}$, and select $r_m\cdot|O|$ samples as the modification set $\mathbb{M}$, where $\mathbb{R}$ and $\mathbb{M}$ do not overlap \\
        Add other $(r_c-r_m-r_r)\cdot|O|$ samples in the candidate set $\mathbb{C}$ that do not belong to $\mathbb{R}$ and $\mathbb{M}$ to the remaining set $\mathbb{E}$ \\
        // Dataset division \\
        The published dataset $D$ consists of modified samples of the modification set, as well as original samples of the remaining set and the reference set \\
        The unpublished dataset $U$ consists of original samples of the modification set, as well as modified samples of the remaining set and the reference set \\
       
        }
       
\end{algorithm}

Based on the modified samples obtained in the first phase, 
the published modified samples can be selected in this phase.
\autoref{algorithm:sample_select} provides the process of the sample selection. %

First, we compute the prediction difference $\Delta m$ between the original and modified datasets using the outputs of the evaluation model 
$M$.
The values in $\Delta m$ are then sorted, and samples with larger differences are preferred for subsequent auditing.
This strategy is based on the intuition that a larger prediction gap indicates a more significant behavioral difference between a model trained on the dataset and one that is not, providing a reliable basis for dataset auditing.
Here, we choose to select samples with values in the first $r_c$ proportion in $\Delta m$ as the candidate set $\mathbb{C}$,
the remaining samples are divided into the remaining set $\mathbb{E}$.

Next, we randomly select $r_r \cdot |O|$ samples from the candidate set $\mathbb{C}$ as the reference set $\mathbb{R}$,
and select another $r_m \cdot |O|$ samples as the modification set $\mathbb{M}$.
The reference and modification sets will be utilized for the auditing in the third phase.
Then, the other $(r_c-r_m-r_r)\cdot|O|$ samples in the candidate set $\mathbb{C}$ are assigned to the remaining set $\mathbb{E}$.

According to the original dataset $O$ and modified dataset $Q$,
along with the modification, reference, and remaining sets,
we construct two distinct datasets.
The first is the public dataset 
$D$, which is released and contains the modified samples from the modification set, as well as the original samples from the reference set and the remaining set. 
The second is the private dataset 
$U$, which is kept unpublished and includes the original versions of the modification set, along with the original samples from the reference and remaining sets.

\subsection{Copyright Verification}
\label{subsec:copyright_verify}

\begin{algorithm}[!t]

        \caption{Copyright Verification}
        \label{algorithm:copyright_verify}
        \KwIn{The published dataset $D$, the unpublished dataset $U$, the modification set $\mathbb{M}$, the reference set $\mathbb{R}$, 
        the suspect model $S$,
        the upper limit of the difference threshold $H$,
        the boundary of low probability $B$,
        the boundary coefficient $\beta$,
        significant level $\alpha$
        }
        \KwOut{Auditing result}
        {
        Initialized $\Delta s_R=\{\},\Delta s_M=\{\}$ \\
        // Model query \\
        \For{each sample pair (o, o') in the reference set}
        {
        Calculate the output probability of $o$ and $o'$ on the model $S$, \ie, $P_{o}^S$ and $P_{o'}^S$ \\
        Add $\Delta s_o=P_{o}^S-P_{o'}^S$ to $\Delta s_R$
        }
        // Threshold estimation \\
        $\bar h\leftarrow mean(\Delta s_R)$
        \\
        // Range constraint \\
        $h \leftarrow clip(\bar h,-H,H)$ \label{line:h_clip}
        \\
        // Model query \\
        \For{each sample pair (g', g) in modification set}
        {
        Calculate the output probability of $g'$ and $g$ on the model $S$, \ie, $P_{g'}^S$ and $P_{g}^S$ \\
        $\Delta s_g=P_{g}^S-P_{g'}^S$ \\
        // Post processing \\
        \If{$P_{g'}^S<B$ and $P_{g}^S<B$}
        {$\Delta s_g=(1+\beta)\bar h$}
         Add $\Delta s_g$ to $\Delta s_M$
        }
         Hypothesis testing based on difference sequence $\Delta s_M$ and threshold $h$ with significant level $\alpha$ \\

        }
       
\end{algorithm}

In this phase, we need to audit whether the published dataset was misused.
Considering the complexity and diversity of target models and data processing pipelines, we assess dataset usage by evaluating the performance differences between the reference set and the modification set on the target model.
The detailed process is shown in~\autoref{algorithm:copyright_verify}.

First,
for each sample pair $(o,o')$ consist of the original sample $o$ and modified sample $o'$ in the reference set $\mathbb{R}$,
we can obtain their predictions on the suspect model $S$, \ie, $P_o^S$ and $P_{o'}^S$.
Then, the prediction difference $\Delta s_o$ between
$o$ and $o'$ can be 
calculated.
After traversing and querying the samples in %
$\mathbb{R}$,
we can get 
a sequence of difference values, \ie, $\Delta s_R$.
The purpose of 
these
steps is to obtain a threshold $h$.

The core design intuition is that if the suspect model $S$ was not trained on the published dataset $D$,
its behavior on the reference set and the modification set should be similar.
Here, we define model behavior as the prediction difference between the original dataset $O$ and the modified dataset $Q$.
Conversely, if the suspect model was trained on $D$, 
we expect a noticeable behavioral discrepancy between the reference and modification sets.
Specifically, 
the prediction differences for the modification set are desired to be significantly smaller than those for the reference set.
This is because the published dataset $D$ includes the modified samples from the modification set and the original samples from the reference set.

To make this distinction operational, we compute the mean prediction difference $\bar h$ across all sample pairs in the reference set and adopt it as the decision threshold. 
However, we observe that the prediction differences within the reference set may exhibit particularly high values, leading to an overly large threshold. Consequently, a higher risk of false positives is prone to occur. 
To address this, we impose an upper limit $H$ on the difference threshold to improve the reliability of auditing and mitigate misclassification.
Regarding the value of upper limit $H$,
we provide a detailed analysis in~\autoref{subsec:appendix_threshold_analysis}.

After computing the threshold $h$ using the reference set, we query the suspect model with the sample pairs (\eg, $g'$ and $g$) from the modified set and obtain their corresponding prediction differences, denoted as $\Delta s_g$.
We then apply a post-processing step to refine $\Delta s_g$.
Specifically, if the predicted probabilities for both the original and modified samples are very low (\ie, below the predefined threshold $B=1/n_c$, where $n_c$ is the number of output categories),
we adjust $\Delta s_g$ to $(1+\beta)\bar h$, where $\beta$ is a small positive constant (\eg, 0.01).
Here, $\beta$ is utilized to guide the direction of subsequent hypothesis testing and avoid zero-difference exclusion.

The rationale behind this adjustment is: When both predicted probabilities are negligible, the suspect model $S$ is likely not to utilize the dataset $D$. 
However, due to the low absolute values, the resulting difference $\Delta s_g$
may also be small, even smaller than the threshold $h$.
This can mislead the auditor to determine that the dataset $D$ was used.
By adjusting $\Delta s_g$ slightly above the average difference in the reference set, 
we can effectively mitigate the risk of such false positives and improve the robustness of auditing.

Finally, we employ hypothesis testing to verify whether the suspect model 
$S$ exhibits dataset misuse. 
The goal of the hypothesis test is to verify whether the difference in the probability of sample pairs in the candidate set $\Delta s_M$ is significantly lower than a specific threshold $h$ (calculated from the difference in the probability of sample pairs in the reference set, \autoref{line:h_clip} in \autoref{algorithm:copyright_verify}).
If the probability difference is indeed significantly lower than the threshold (\ie, $\Delta s_M<h$), it means that the suspect model is likely trained using the protected dataset (\ie, the dataset is misused).
Conversely, if the probability difference is not significantly lower than the threshold, it determines that the dataset is not misused.
Specifically, we adopt the Wilcoxon Signed-Rank Test~\cite{woolson2005wilcoxon}, a non-parametric statistical method designed for paired samples, to implement the hypothesis testing.
Unlike parametric tests, the Wilcoxon Signed-Rank Test does not rely on the assumption of normality, making it well-suited for scenarios with small sample sizes or unknown data distributions. Moreover, this method effectively leverages both the sign and rank information of the sample differences, offering higher statistical efficiency and better interpretability. 
This makes it particularly appropriate for detecting subtle yet consistent behavioral deviations in the model.
The null hypothesis $H_0$ and alternative hypothesis $H_1$ are as follows:

\begin{equation*}
\begin{aligned}
&H_0: \Delta s_M > h,  \quad \text{(no significant misuse detected)} \\
&H_1: \Delta s_M < h,  \quad \text{(evidence of dataset misuse)}.
\end{aligned}
\end{equation*}

The testing procedure is described below:
\begin{enumerate}
    \item For each paired observation, compute the signed difference relative to the reference threshold:
    \[
    d_i = \Delta s_M^{(i)} -h, \quad i = 1, 2, \dots, n,
    \]
    where $\Delta s_M^{(i)}$ denotes the $i$-th observed difference.
    \item Exclude all instances where $d_i = 0$ (\ie, no difference).
    \item For the remaining non-zero differences, compute the absolute values $|d_i|$ and assign ranks $R_i$ based on these absolute values (average ranks in case of ties).
    \item Assign a sign to each rank based on the sign of $d_i$, and compute the sum of negative ranks:
    \[
    W = \sum_{i: d_i < 0} R_i.
    \]
    The statistic $W$ represents the total rank magnitude supporting $H_1$ (\ie, cases where $\Delta s_M^{(i)}<h$).
    \item Use the statistic $W$ to compute a one-sided $p$-value. 
    If $p < \alpha$ (\eg, $\alpha = 0.01$), reject the null hypothesis $H_0$ and conclude that the dataset $D$ was misused.
\end{enumerate}

\subsection{Putting Things Together}
\label{subsec:method_summary}

\begin{algorithm}[!t]

        \caption{\method}
        \label{algorithm:method_summary}
        \KwIn{Original dataset $O$, perturbation budget $\varepsilon$, the parameter set for three-dimensional Perlin noise $\delta_{\text{per}}=\{\lambda_x,\lambda_y,\lambda_t,\phi_{\text{sine}},\Omega\}$, the ratio of candidate set $r_c$, the ratio of modification set $r_m$, the ratio of reference set $r_r$, 
         the suspect model $S$,
        the upper limit of the difference threshold $H$,
        the boundary of low probability $B$,
        boundary coefficient $\beta$,
        significant level $\alpha$
        }
        \KwOut{Audit result}

        // Phase 1: Sample modification \\
        $M,Q,P_O^M,P_Q^M\leftarrow$ \autoref{algorithm:sample_modify}$(O,\varepsilon,\delta_{\text{per}})$\\

        // Phase 2: Sample selection \\
            $D,U,\mathbb{R},\mathbb{M},\mathbb{E}\leftarrow$ \autoref{algorithm:sample_select}$(O,Q,P^M,r)$\\

        // Phase 3: Copyright verification \\
       Audit result $\leftarrow$\autoref{algorithm:copyright_verify}$(D,U,\mathbb{R},\mathbb{M},H,B,\beta,\alpha)$\\
\end{algorithm}

The aforementioned three phases form the overall process of~\method. 
\autoref{algorithm:method_summary} describes the overall workflow of~\method.
In the first phase, we can train an evaluation model $M$ and generate a modified dataset $Q$ based on the original dataset $O$, perturbation budget $\varepsilon$ and noise parameters $\delta_{\text{per}}$.
Then, in the second phase, we can select the suitable samples that are likely to achieve the amplification effect according to the posterior probability of samples in $O$ and $Q$ on the evaluation model $M$.
The selected modified samples and other original samples are released as the public dataset $D$.
In the copyright verification phase,
we calculate the suspect model's behavior difference on the reference set and the modification set,
and apply hypothesis testing to verify whether the published dataset was used.

Note that \method can be extended to Top-$K$ or Label-only scenarios.
Specifically, different weights are assigned to 
$K$ publication categories with different rankings to approximate the probability (\eg, the weight is the inverse of the ranking). 
Then, our current scheme can be directly applied for auditing.

\section{Evaluation}
\label{sec:evaluation}

In this section, we first describe the experimental setup in~\autoref{subsec:experimental_setup},
and evaluate the overall auditing performance in~\autoref{subsec:end_to_end}.
Then, we perform ablation experiments to explore the influence of various components of~\method in~\autoref{subsec:ablation}. 
Moreover, we explore the auditing performance of various methods under Top-$K$ and Label-only settings in~\autoref{subsec:topk_label_only}.
Next, we verify the impact of the parameter setting of~\method in~\autoref{subsec:parameter_variation}.
Furthermore, we explore the robustness of~\method in~\autoref{subsec:robustness}.
Due to space constraints,
we defer the auditing results on the SSv2 dataset to~\autoref{subsec:result_larger_dataset_backbone}.
We also provide the robustness evaluation under common perturbations in~\autoref{subsec:robustness_common_pert} and auditing performance against adaptive attackers in~\autoref{subsec:robustness_adaptive_attack}.
In addition, we further provide the efficiency analysis in~\autoref{subsec:efficiency_analysis}.

\subsection{Experimental Setup}
\label{subsec:experimental_setup}

\mypara{Datasets}
We evaluate the performance of different methods on three standard benchmark datasets used: HMDB-51~\cite{kuehne2011hmdb}, UCF-101~\cite{soomro2012ucf101}, and SSv2~\cite{goyal2017something}. 
The first dataset contains $51$ categories and a total of nearly $7,000$ videos,
and the second dataset consists of more than $13,000$ videos in $101$ categories.
The last dataset contains more than $160,000$ videos,
which consists of $174$ categories.

These datasets all belong to action recognition,
which is a representative task and widely considered in existing video recognition research. 
The action recognition requires modeling both spatial and temporal information in videos, which allows us to effectively evaluate the performance of the proposed method. 
The core idea of \method is to determine auditing decisions based on behavioral differences of the suspect model in different samples. 
This method design has the potential for generalization since the model’s behavioral differences can be characterized in other tasks.

\mypara{Suspect Models}
We verify the effectiveness of our method on four typical models, \ie, I3D~\cite{carreira2017quo}, SlowFast~\cite{feichtenhofer2019slowfast}, TSM~\cite{lin2019tsm}, and TimeSformer~\cite{bertasius2021space}.
These models are also widely adopted in existing video studies~\cite{chen2021deep,al2024look}.

The I3D and TSM models utilize ResNet-50 pre-trained on the ImageNet dataset to initialize their backbones, 
and the TimeSformer model adopts the pretrained ViT model to initialize the backbone.
The SlowFast model employs a randomly initialized ResNet-50 backbone without pre-training.
In this work, we utilize open-sourced  MMAction2~\cite{2020mmaction2} to implement the above video recognition models.

\mypara{Baselines}
To our knowledge, there are no other dataset auditing designed for video recognition. 
Here, we apply the SOTA auditing methods (\ie, \mlda~\cite{huang2024general} and \mt~\cite{chen2025MembershipTracker}) for image data to the video domain as the baselines.
A detailed description and discussion can be found in~\autoref{subsec:appendix_discussion}.

\mypara{Metrics}
We adopt the following metrics to evaluate the auditing performance:
$\Delta$acc, true positive rate (TPR),
F1 score and false positive rate (FPR).
$\Delta$acc represents the difference between the test accuracy of the model trained using the modified dataset and the test accuracy of the model using the original dataset.
TPR measures the proportion of correctly identified positive samples (\ie, trained on the target dataset) among all actual positives.
F1 score balances precision and recall, providing a single harmonic mean that is particularly informative under class imbalance.
FPR quantifies the proportion of negative samples (\ie, trained on other datasets) incorrectly classified as positive.
For FPR, lower is better.
For the other three metrics, higher is better.

\mypara{Experimental Settings}
For \method, we set the perturbation budget $\varepsilon=10$ (\ie, $\ell_{\infty}$-norm constraint),
the ratio of modification samples $r_m=1\%$,
the upper limit of the difference threshold $H=0.05$, and the significant level $\alpha=0.01$.
We utilize the I3D model as the evaluation model by default.
For the parameter set of Perlin noise,
we set $\lambda_{x}=\lambda_y=32$, $\lambda_t=6.4$, $\phi_{sine}=1$, and $\Omega=2$.
The experiments are conducted with a server with 64-core EPYC AMD CPU and four NVIDIA A6000 GPUs.

\mypara{Setup}
We divide the dataset into two subsets: One designated as the protected dataset (\ie, positive), and the other as the unprotected dataset (\ie, negative). 
For the protected subset, we generate ten different published versions (\ie, $D_1,...,D_{10}$) and train ten corresponding suspect models (\ie, $S_{p1},...,S_{p10}$) for verification.
For the unprotected subset, we first create ten different datasets through random sampling and train ten models (\ie, $S_{n1},...,S_{n10}$).

During the verification phase, each published dataset $D_i$ ($i\in\{1,...,10\}$) is used to audit its corresponding suspect model $S_{pi}$,
as well as the ten models trained on unprotected data $S_{n1}$ through $S_{n10}$.
In total,
across $D_1$ to $D_{10}$,
we obtain 10 positive samples and 100 negative samples for evaluation.

\subsection{Overall Auditing Performance}
\label{subsec:end_to_end}

\begin{figure*}[!t]
    \centering
    \includegraphics[width=0.85\textwidth]{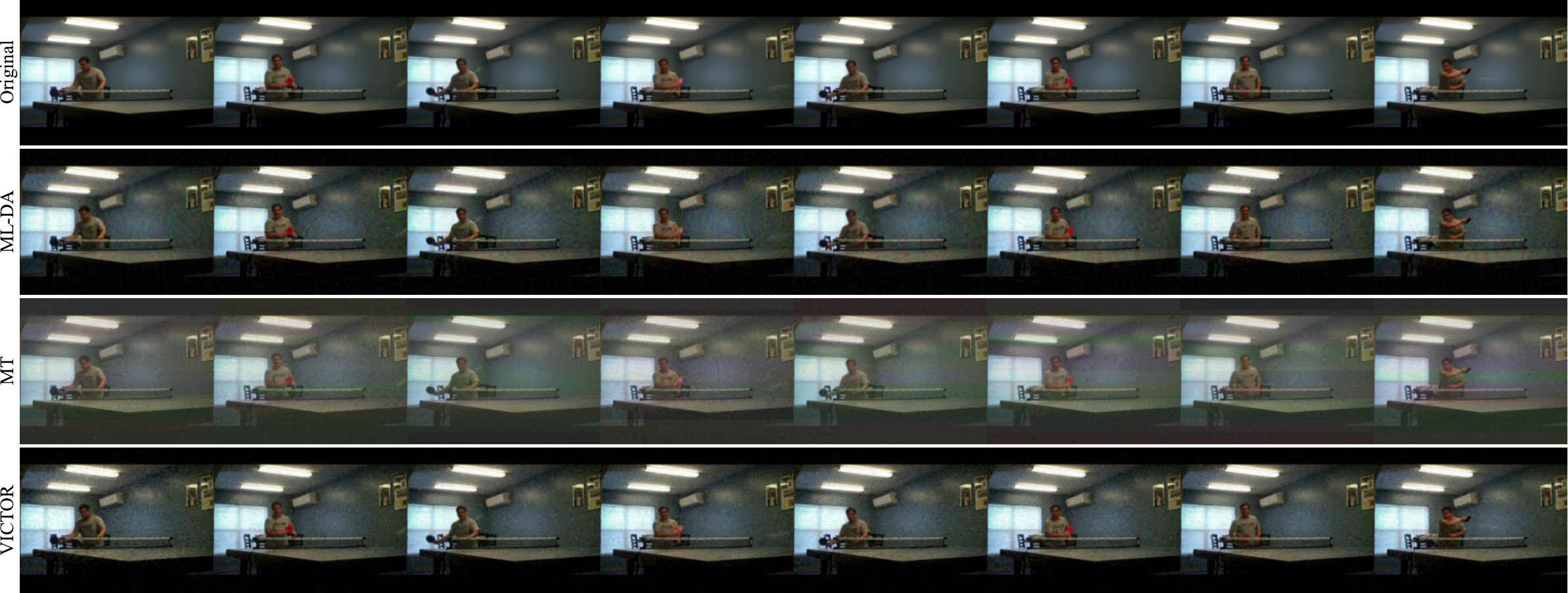}
    \caption{
    An illustration of generated videos of different methods.
    }
    \label{fig:end2end_video_show}
\vspace{-0.1cm}
\end{figure*}

\begin{table}[!t]
    \centering
    \caption{SSIM comparison 
    on various datasets.
    }
    \label{table:ssim_comparison}
    \vspace{-0.1cm}
    \footnotesize
    \setlength{\tabcolsep}{1.2em}
	\begin{tabular}{c| c | c | c   }
		\toprule
		\textbf{Dataset} & ML-DA & MT & \method   \\
		\midrule
        HMDB-51 & 0.806 & 0.681 & \textbf{0.813}  \\
        UCF-101 & 0.785 & 0.680 & \textbf{0.798}  \\
        \bottomrule
	\end{tabular}
    \vspace{-0.2cm}
\end{table}

\begin{table*}[!t]
\renewcommand{\arraystretch}{1.2}
  \centering
  \caption{Overall auditing performance on the evaluation metrics.}
  \vspace{-0.1cm}
  \scalebox{0.95}{
  \begin{tabular}{c|c|ccc|ccc|ccc}
  \hline
  \multirow{3}{*}{\textbf{Dataset}} & \textbf{Model} & \multicolumn{3}{c|}{\textbf{I3D}} & \multicolumn{3}{c|}{\textbf{SlowFast}} & \multicolumn{3}{c}{\textbf{TSM}} \\ \cline{2-11} 
                                   & \diagbox[dir=NW]{\textbf{Metric}}{\textbf{Method}} 
                                   & \textbf{\mlda~\cite{huang2024general}} & \textbf{\mt~\cite{chen2025MembershipTracker}} & \textbf{\method} 
                                   & \textbf{\mlda~\cite{huang2024general}} & \textbf{\mt~\cite{chen2025MembershipTracker}} & \textbf{\method} 
                                   & \textbf{\mlda~\cite{huang2024general}} & \textbf{\mt~\cite{chen2025MembershipTracker}} & \textbf{\method} \\ \hline
  \multirow{4}{*}{\textbf{HMDB-51}} 
    & $\Delta$acc  & -0.063 & -0.065 & \textbf{-0.020} & -0.072 & -0.052 & \textbf{-0.042} & -0.043 & -0.044 & \textbf{-0.032} \\
    & TPR          & 0.400  & \textbf{1.000}  & \textbf{1.000}  & 0.000  & \textbf{1.000}  & \textbf{1.000}  & 0.000  & \textbf{1.000}  & \textbf{1.000} \\
    & F1 Score     & 0.308  & 0.500  & \textbf{1.000}  & N/A    & 0.294    & \textbf{1.000}  & N/A    & 0.357    & \textbf{1.000} \\
    & FPR          & 0.120  & 0.200  & \textbf{0.000}  & \textbf{0.000} & 0.480 & \textbf{0.000} & 0.100 & 0.360 & \textbf{0.000} \\ \hline
  \multirow{4}{*}{\textbf{UCF-101}} 
    & $\Delta$acc  & -0.024 & -0.038 & \textbf{-0.021} & -0.014 & -0.036 & \textbf{-0.012} & -0.015 & -0.038 & \textbf{-0.011} \\
    & TPR          & 0.000  & 0.700  & \textbf{1.000}  & 0.000  & 0.800  & \textbf{1.000}  & 0.000  & 0.800  & \textbf{1.000} \\
    & F1 Score     & N/A    & 0.609   & \textbf{1.000}  & N/A    & 0.229    & \textbf{1.000}  & N/A    & 0.889    & \textbf{1.000} \\
    & FPR          & 0.040  & 0.060  & \textbf{0.000}  & 0.210  & 0.520  & \textbf{0.000}  & \textbf{0.000} & \textbf{0.000} & \textbf{0.000} \\ \hline
  \end{tabular}}
  \vspace{-0.25cm}
  \label{table:audit-performance}
\end{table*}

In the section, we explore the overall auditing performance of \method and the baseline.
First, we provide an illustration of generated videos of various methods, as shown in~\autoref{fig:end2end_video_show}.
We further provide the average SSIM comparison of different methods on various datasets, as illustrated in~\autoref{table:ssim_comparison}.
We observe that the changes introduced by~\method are relatively smooth across video frames, whereas \mlda exhibits more abrupt transitions between frames due to the independent nature of its perturbations.
\mt suffers from the highest distortion because it mixes the original image with other images.
The results in~\autoref{table:ssim_comparison} are consistent with our analysis above.
The SSIM of \method is the highest, followed by \mlda, while \mt is significantly lower than both.

\autoref{table:audit-performance} provides the overall auditing performance on the four evaluation metrics of various methods.
We make the following observations.
First, \method consistently outperforms the baselines across all evaluation metrics, demonstrating its effectiveness. 
This is attributed to its ability to amplify the influence of modified samples on the target model, thereby enabling more accurate detection of dataset misuse.
Second, the value of $\Delta \text{acc}$ on the HMDB-51 dataset is lower than that on UCF-101. 
This is likely due to the smaller scale of HMDB-51, where the impact of  modified samples is more pronounced.
Third, for \mlda, the TPR is zero in multiple scenarios, and F1 scores cannot even be computed because both precision and recall are zero.
On the one hand,
\mlda relies on comparing the predictions of published and unpublished data. 
This is less effective in the video domain, where models make decisions based on the joint information from multiple frames.
On the other hand,
\mlda requires a relatively high proportion of modified samples to be effective. 
When only a small number of samples are modified, \mlda struggles to produce meaningful results.
Moreover, for \mt,
the TPR reaches 1 on HMDB-51, but the corresponding FPR remains high. 
This is because \mt determines the decision threshold based on the loss distribution of non-member data on the suspect model. 
However, the loss of member data on a model that has not actually been trained on the target dataset can still fall below this threshold, leading to frequent misclassification. 
The issue is further exacerbated when the suspect model is trained from scratch (\ie, SlowFast), as its output tends to be more random. 
Furthermore, the accuracy of \mt drops significantly due to its heavy distortion of the original data.
In addition, suspect models trained on larger datasets (\ie, UCF-101) generally exhibit stronger generalization, which reduces the loss gap between member and non-member samples and leads to a decline in TPR.
In contrast,
\method achieve $100\%$ accuracy on multiple datasets and models, which emphasizes the practicality of~\method.

\subsection{Ablation Study}
\label{subsec:ablation}

In this section, we conduct ablation studies to evaluate the influence of different components. 
Specifically, we examine the effectiveness of the evaluation model, threshold clipping, and post processing on the overall auditing performance.

\mypara{Effectiveness of Evaluation Model}
Recalling~\autoref{subsec:sample_modification} and~\autoref{subsec:sample_select},
we adopt an evaluation model to verify the effectiveness of modified samples, which can help the selection of samples.
In this section, our aim is to explore the necessity of the evaluation model.
\autoref{fig:ablation_eval_model_ucf101} illustrates the TPR and FPR results with and without the evaluation model.
Here, ``Without Evaluation Model'' means that the modified and reference samples are randomly selected.

We observe that the verification accuracy is slightly lower when the evaluation model is removed, compared to the case where it is applied. 
Specifically, the TPR decreases when the suspect model is I3D, and the FPR becomes greater than zero when the suspect model is SlowFast. 
This is because, without guidance from the evaluation model, the selected modified and reference samples may have less consistent effects on the target model, leading to a slight decline in auditing performance. 
The above results indicate that the evaluation model is helpful to enhance auditing effectiveness.
At the same time, we would like to clarify that \method can still achieve competitive auditing performance without the evaluation model, rather than failing completely.

\begin{figure}[!t]
    \centering
    \includegraphics[width=0.35\textwidth]{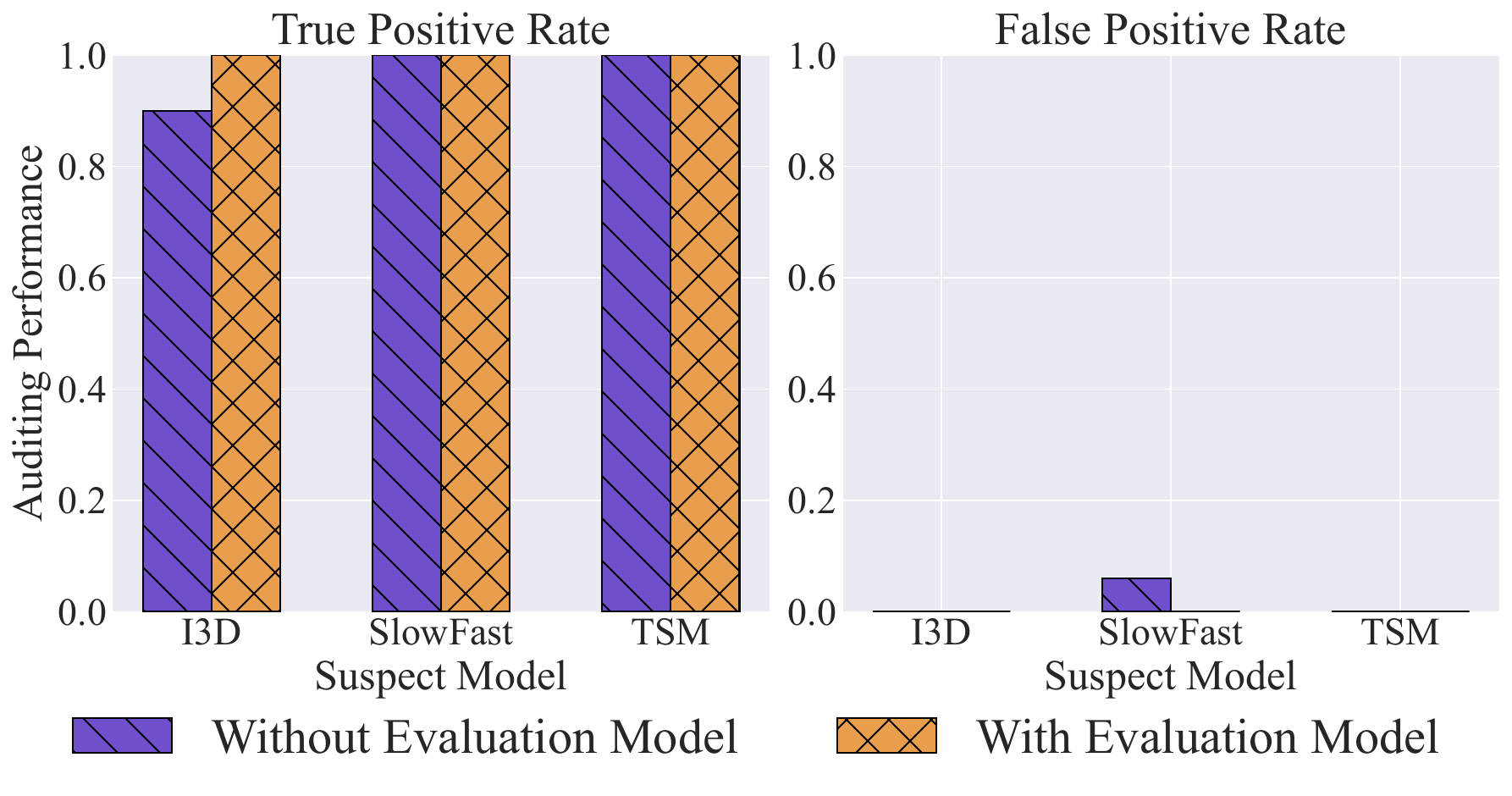}
    \vspace{-0.3cm}
    \caption{
    The effectiveness of evaluation model on the two metrics of the UCF-101 dataset.
    }
    \label{fig:ablation_eval_model_ucf101}
\vspace{-0.5cm}
\end{figure}

\mypara{Effectiveness of Threshold Clipping}
Recalling~\autoref{subsec:copyright_verify},
we introduce a threshold clipping mechanism to mitigate potential false positives. 
Here,
we examine the auditing performance with and without this mechanism. 
\autoref{fig:ablation_clip_ucf101} presents the TPR and FPR on the UCF-101 dataset under both settings. The results show that while the TPR remains consistent between the two methods, the FPR is substantially higher when threshold clipping is disabled. 
This demonstrates that threshold clipping effectively reduces the false positive rate of \method. Moreover, we observe that the impact of threshold clipping is more pronounced for the I3D and TSM models. 
This is primarily because these models are fine-tuned from pretrained networks, which endows them with stronger generalization capabilities.
In this case, their output differences between modified and original samples are smaller than SlowFast.

\begin{figure}[!t]
    \centering
    \includegraphics[width=0.35\textwidth]{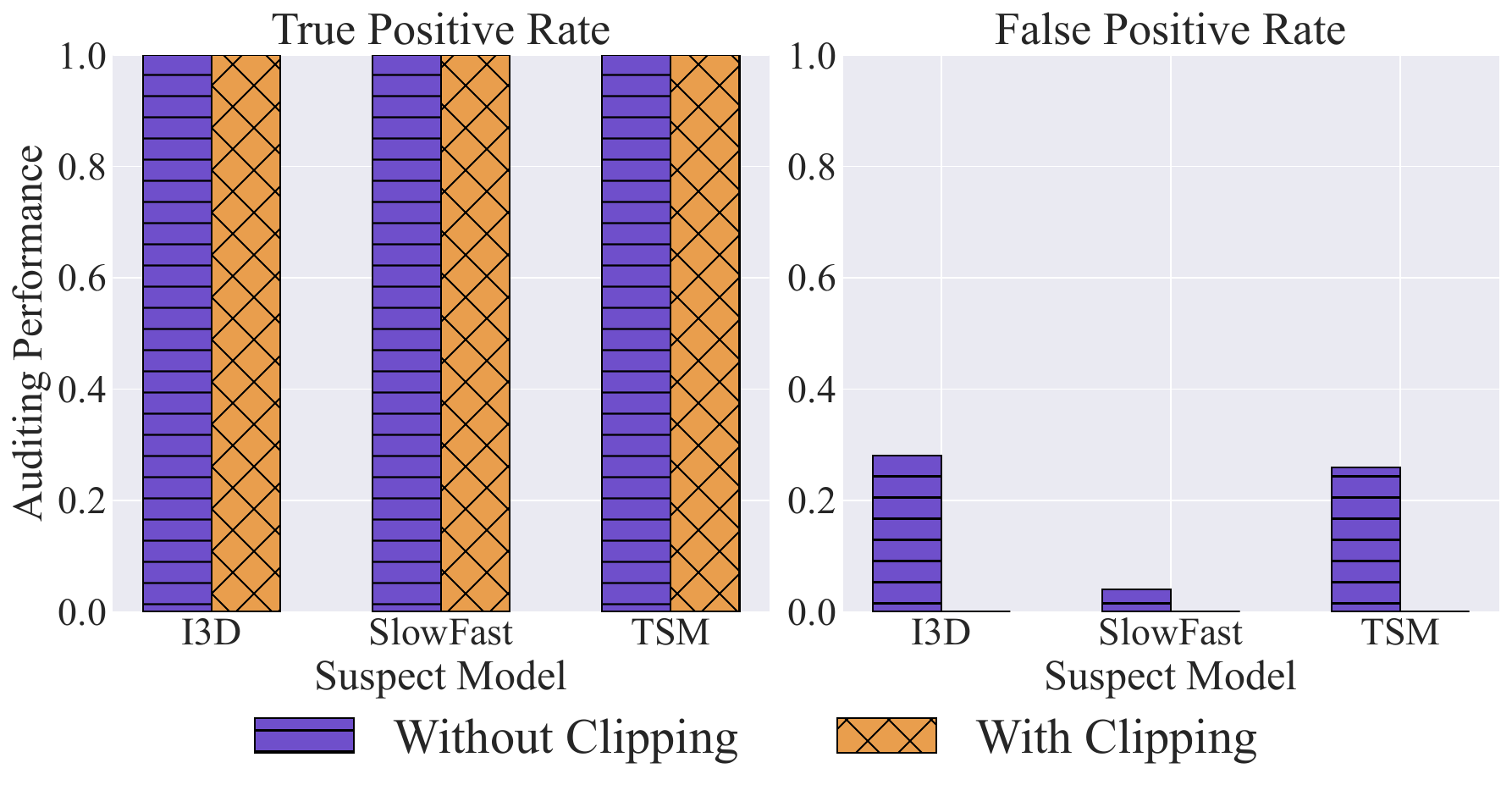}
    \vspace{-0.2cm}
    \caption{
    The effectiveness of threshold clipping on the two metrics of the UCF-101 dataset.
    }
    \label{fig:ablation_clip_ucf101}
\vspace{-0.35cm}
\end{figure}

\mypara{Effectiveness of Post Processing}
During the copyright verification phase, we apply post-processing to the output probability differences. 
In this section, we investigate the necessity of this post-processing step. \autoref{fig:ablation_pp_hmdb51} presents the auditing results on the HMDB-51 dataset with and without post-processing. 
Without post-processing, various models exhibit false positives. 
This is mainly because some suspect models have relatively weak predictive capabilities, leading to low output probabilities for both the unused original samples and their modified counterparts. Consequently, the output differences are also small, which increases the likelihood of misjudgment.
In contrast, after applying post-processing, the FPR drops to zero. 
These results emphasize the key role of post-processing in improving the auditing accuracy of \method.

\begin{figure}[!t]
    \centering
    \includegraphics[width=0.35\textwidth]{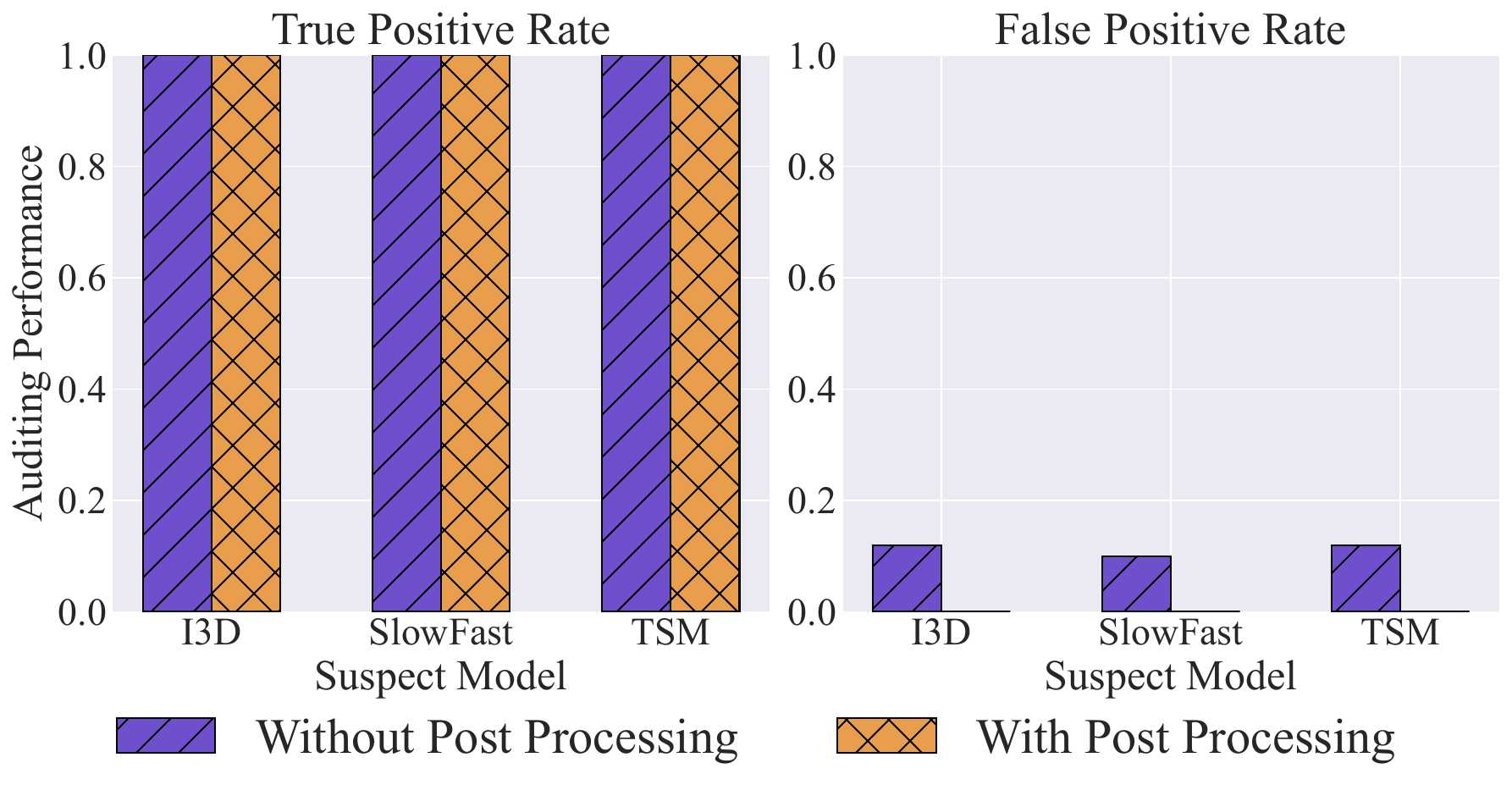}
    \vspace{-0.2cm}
    \caption{
    The effectiveness of post processing on the two metrics of the HMDB-51 dataset.
    }
    \label{fig:ablation_pp_hmdb51}
\vspace{-0.5cm}
\end{figure}

\subsection{Top-$K$ and Label-only Settings}
\label{subsec:topk_label_only}

In this section, we consider two more challenging scenarios (\ie, Top-$K$ and Label-only) to validate the effectiveness of \method.
In the Top-$K$ ($K=5$) setting,
only the labels of the top $K$ highest-probability samples are provided.
In the Label-only setting,
only the label corresponding to the highest-probability sample is output.

\autoref{fig:topk_label_hmdb51} and \autoref{fig:topk_label_ucf101}
illustrate the auditing performance on the HMDB-51 and UCF-101 datasets.
For the Top-$K$ scenario,
\method still exhibits great auditing performance across multiple suspect models and datasets (\ie, TPR = 1 and FPR = 0).
This highlights the versatility and significant advantages of \method in dataset auditing scenarios.
For the Label-only scenario,
the TPR of \method in partial cases shows a slight decrease.
The reason is that the obtained information under this scenario is limited, posing a huge challenge to accurate auditing.
Nevertheless, \method still demonstrates competitive auditing performance and low FPRs.

\begin{figure}[htbp]
    \centering
    \includegraphics[width=0.35\textwidth]{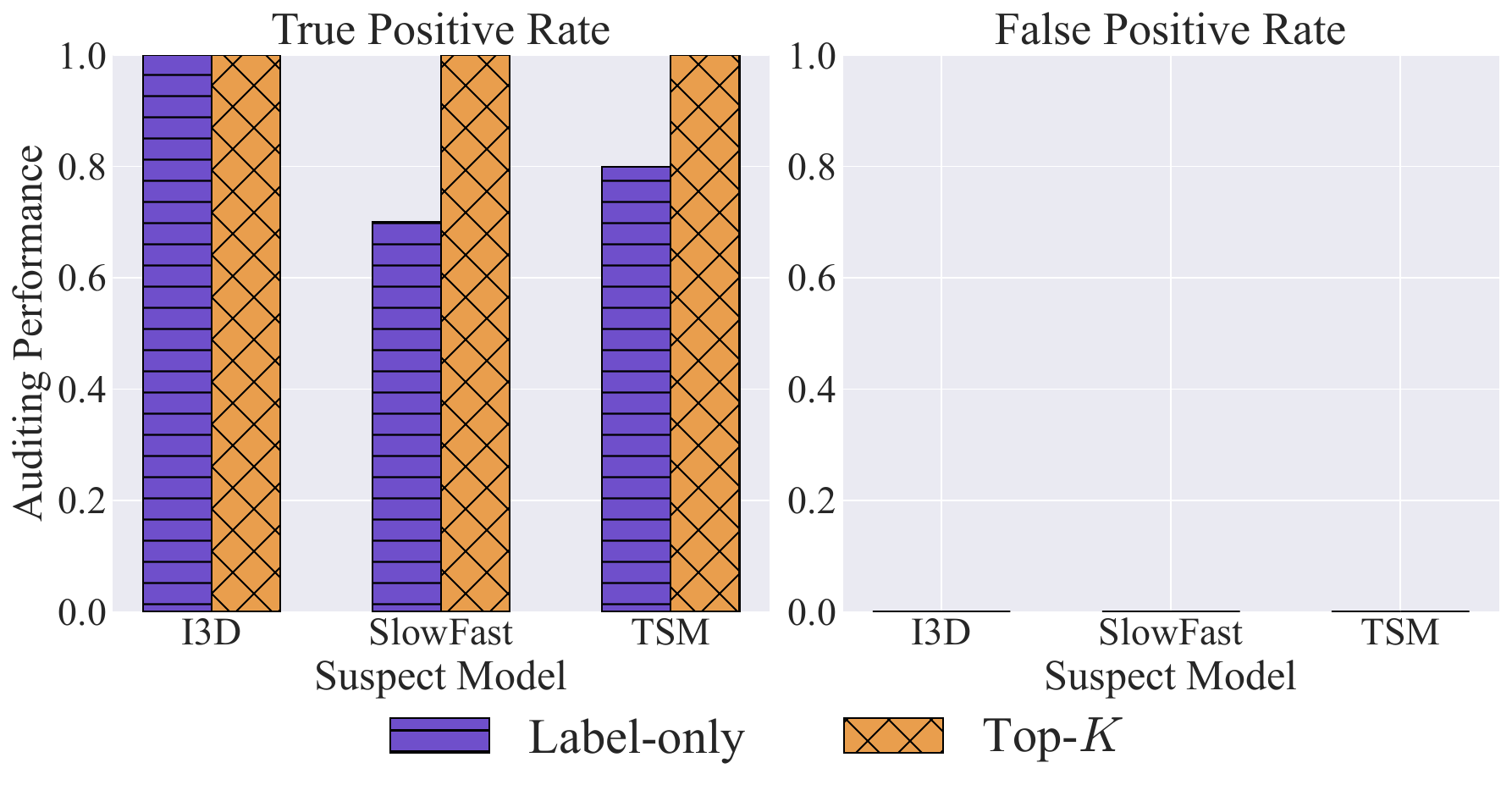}
    \vspace{-0.2cm}
    \caption{
    The auditing performance under Top-$K$ and Label-only settings of the HMDB-51 dataset.
    }
    \label{fig:topk_label_hmdb51}
\vspace{-0.5cm}
\end{figure}

\begin{figure}[htbp]
    \centering
    \includegraphics[width=0.35\textwidth]{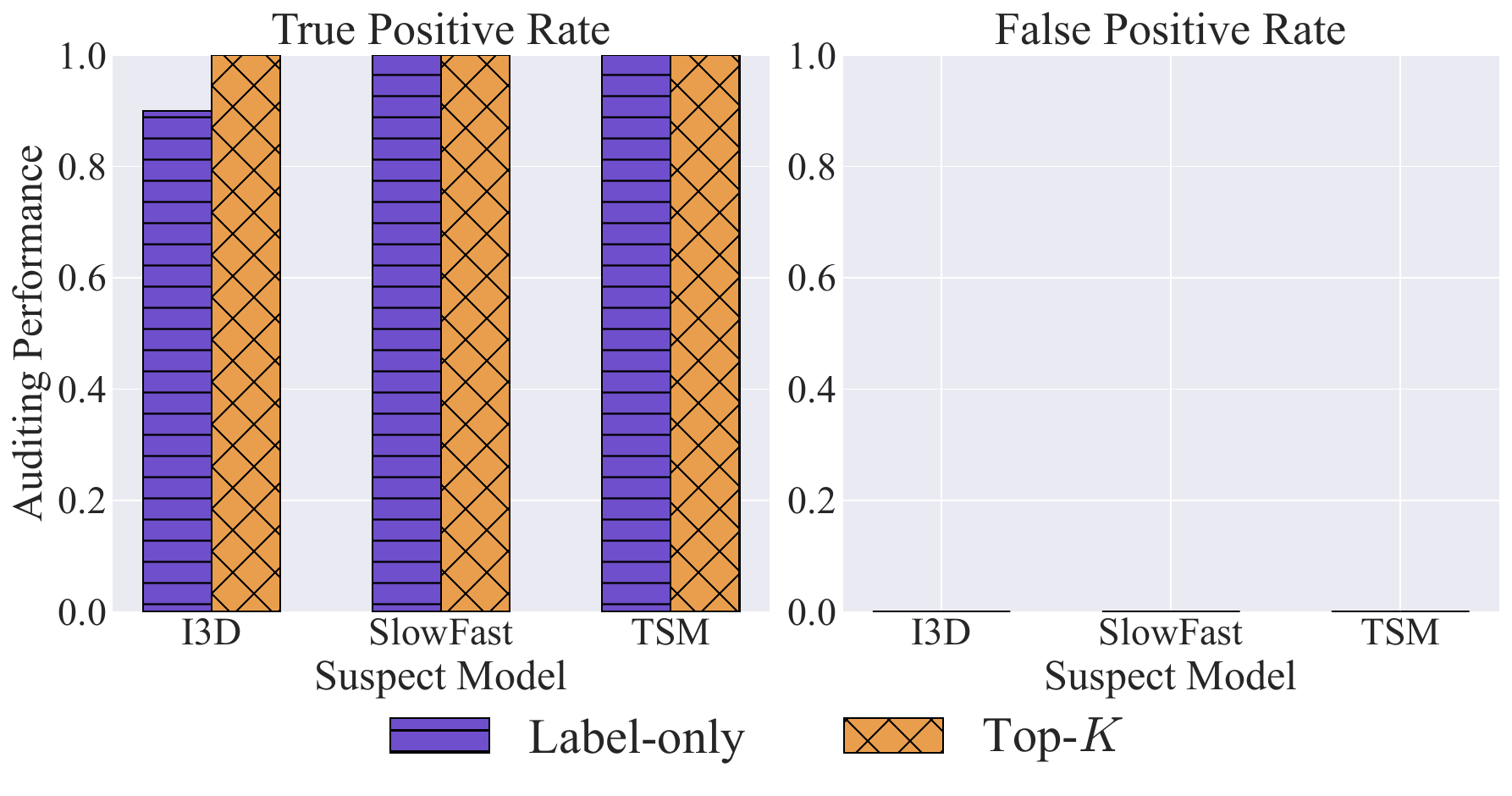}
    \vspace{-0.2cm}
    \caption{
    The auditing performance under Top-$K$ and Label-only settings of the UCF-101 dataset.
    }
    \label{fig:topk_label_ucf101}
\vspace{-0.5cm}
\end{figure}

\subsection{Parameter Variation}
\label{subsec:parameter_variation}

In this section, our aim is to analyze the impact of various parameter settings in \method on the final auditing performance.
In particular, we examine the effects of the perturbation budget and the modification ratio in the section.
The analysis of the evaluation model, the noise setting and the threshold setting is deferred to~\autoref{subsec:appendix_parameter_variation} due to space limitations.

\mypara{Impact of Perturbation Budget}
\autoref{fig:param_vary_epsilon_ucf101} illustrates the performance of~\method on four evaluation metrics on the UCF101 dataset as the perturbation budget $\varepsilon$ increases from 2 to 10. 
We observe that as the budget increases from 2 to 6, the TPR and F1 score consistently improve, the FPR decreases significantly, and $\Delta acc$ drops slightly. When the budget reaches 6, the auditing accuracy across all three models is already very high. Notably, even with a perturbation budget as low as 2, \method still achieves a TPR exceeding 0.6 and an FPR below 0.2. 
These results demonstrate that \method possesses strong auditing capabilities and can deliver impressive performance even under a limited perturbation budget.

\begin{figure*}[!t]
    \centering
    \includegraphics[width=0.85\textwidth]{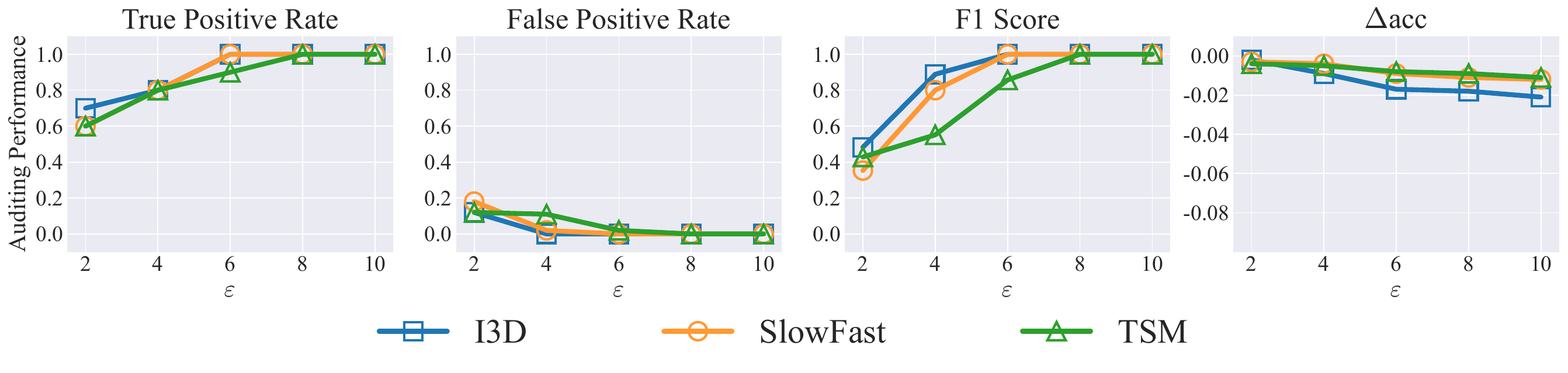}
    \vspace{-0.35cm}
    \caption{
    The impact of perturbation budget for three various suspect models on the four metrics of the UCF-101 dataset.
    }
    \label{fig:param_vary_epsilon_ucf101}
\vspace{-0.35cm}
\end{figure*}

\mypara{Impact of Modification Ratio}
\autoref{fig:param_vary_ratio_ucf101} illustrates the impact of different modification ratios on auditing performance using the UCF101 dataset. 
It can be observed that when the modification ratio is as low as 0.5\%, \method achieves an FPR of 0 across all three suspect models, along with a TPR of 100\% on the I3D and SlowFast models. 
The TPR on the TSM model in this case is 0.6, possibly because TSM is fine-tuned from a pre-trained model with relatively few fine-tuning iterations, making the influence of a low modification ratio less pronounced.
Overall, \method can achieve promising auditing performance even at a small modification ratio, showcasing its effectiveness.
When the modification ratio increases to 1\% or 2\%, \method achieves perfect auditing accuracy on all three suspect models (\ie, TPR = 1 and FPR = 0), while maintaining limited impact on normal task performance. 
These results confirm that \method can achieve high-precision dataset auditing with low modification cost.

\begin{figure*}[!t]
    \centering
    \includegraphics[width=0.85\textwidth]{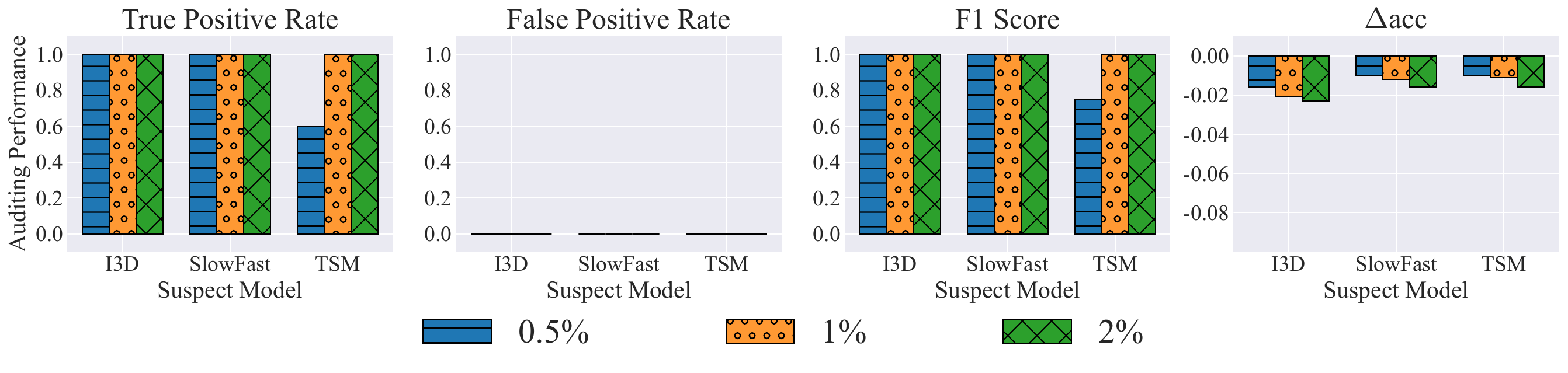}
    \vspace{-0.35cm}
    \caption{
    The impact of modification ratio for three various suspect models on the four metrics of the UCF-101 dataset.
    }
    \label{fig:param_vary_ratio_ucf101}
\vspace{-0.3cm}
\end{figure*}

\subsection{Robustness}
\label{subsec:robustness}

In this section, we investigate the robustness of \method when the pipeline of target models is perturbed to evade auditing.
Here,
we consider three types of classic mechanisms:
input preprocessing,
training intervention, post-adjustment.

\mypara{Input preprocessing}
We explore two various approaches to handle the input: input perturbation and input detection.
Regarding the input perturbation, this method tries to inject noise into each frame of all samples.
Here, the added noise is sampled from a Gaussian distribution with standard deviation $\sigma=10$.
For input detection, this approach tries to identify abnormal samples and remove them from the training. 
The details of input detection are deferred to~\autoref{subsec:appendix_robustness}.

\autoref{table:input_pert}
provides the auditing results after introducing the input perturbation. 
We summarize the following key observations:
First, adding input noise has a more pronounced negative impact on the performance of the target model in the normal task. 
Second, despite the presence of input perturbation, \method still successfully identifies all suspect models on the UCF-101 dataset. 
On the other hand, the TPR for the SlowFast and TSM models on the HMDB-51 dataset shows a slight decline. 
This may be attributed to the smaller scale of the HMDB-51 dataset, where input noise can have a stronger influence on the optimization of the target model’s parameters.
Third, \method consistently achieves an FPR of 0 across all settings. 
This highlights the robustness and reliability of \method in dataset auditing. 
In practical scenarios, falsely accusing a model of dataset misuse can lead to severe consequences, including reputational harm, financial costs, and strain of judicial resources, particularly in high-stakes intellectual property disputes.

\begin{table}[!t]
\renewcommand{\arraystretch}{1.2}
\centering
\caption{Auditing performance under input perturbation.}
\vspace{-0.15cm}
\scalebox{0.95}
{
\begin{tabular}{c|c|c|c|c}
\hline
\textbf{Dataset} & \diagbox{\textbf{Metric}}{\textbf{Model}} & \textbf{I3D} & \textbf{SlowFast} & \textbf{TSM} \\ \hline
\multirow{4}{*}{\textbf{HMDB-51}} & $\Delta$acc  & -0.053 & -0.068 & -0.052 \\
                        & TPR          & 1.000  & 0.900  & 0.900 \\
                        & F1 Score     & 1.000  & 0.947  & 0.947 \\
                        & FPR          & 0.000  & 0.000  & 0.000 \\ \hline
\multirow{4}{*}{\textbf{UCF-101}} & $\Delta$acc  & -0.060 & -0.045 & -0.028 \\
                        & TPR          & 1.000  & 1.000  & 1.000 \\
                        & F1 Score     & 1.000  & 1.000  & 1.000 \\
                        & FPR          & 0.000  & 0.000  & 0.000 \\ \hline

\end{tabular}}
\label{table:input_pert}
\end{table}

\begin{table}[!t]
\renewcommand{\arraystretch}{1.2}
\centering
\caption{Auditing performance under early stopping.}
\vspace{-0.15cm}
\scalebox{0.95}
{
\begin{tabular}{c|c|c|c|c}
\hline
\textbf{Dataset} & \diagbox{\textbf{Metric}}{\textbf{Model}} & \textbf{I3D} & \textbf{SlowFast} & \textbf{TSM} \\ \hline
\multirow{4}{*}{\textbf{HMDB-51}} & $\Delta$acc  & -0.162 & -0.361 & -0.185 \\
                        & TPR          & 0.500  & 0.300  & 0.400 \\
                        & F1 Score     & 0.667  & 0.462  & 0.571 \\
                        & FPR          & 0.000  & 0.000  & 0.000 \\ \hline
\multirow{4}{*}{\textbf{UCF-101}} & $\Delta$acc  & -0.160 & -0.388 & -0.073 \\
                        & TPR          & 1.000  &  0.600  & 1.000 \\
                        & F1 Score     & 1.000  & 0.750  & 1.000 \\
                        & FPR          & 0.000  & 0.000  & 0.000 \\ \hline

\end{tabular}}
\label{table:early_stop}
\end{table}

\mypara{Training Intervention}
In this countermeasure, a malicious attacker can train the video recognition model for a small number of epochs to prevent the model from overfitting the training samples (\ie, early stopping).
Here, we choose to train the models only for 10 epochs.

\autoref{table:early_stop} presents the auditing performance on the two datasets when early stopping is applied. 
We make the following observations:
First, the normal performance of all three target models declines significantly under early stopping, with the SlowFast model experiencing the most severe drop. 
This is primarily because SlowFast is trained from scratch without the benefit of pretrained weights, and thus requires more training epochs to learn and update the model parameters.
In this case, \method demonstrates better auditing performance on the I3D and TSM models compared to the SlowFast model. 
For example, on the UCF-101 dataset, \method achieves a TPR of 1.0 for both I3D and TSM, while the TPR for SlowFast drops to 0.6.
In addition, we find that \method shows more effective auditing on the UCF-101 dataset than on the HMDB51 dataset. 
This is likely due to the larger scale of UCF-101, which allows the target model to learn more knowledge within the same number of training rounds, thereby enabling \method to more accurately detect dataset misuse.
Actually, when early stopping is applied, the accuracy for the video models like SlowFast on the HMDB-51 dataset is very low (\ie, around $20$\%).
This indicates that the model has not fully captured the characteristics of the dataset, making the auditing effect less than ideal. 
At the same time, it should also be noted that the inherent value of this type of model is limited.
Finally, \method maintains an FPR of 0 across all datasets and models, demonstrating its reliability and practicality.

\begin{table*}[!htbp]
\renewcommand{\arraystretch}{1.2}
\centering
\caption{Auditing performance under post-adjustment.}
\vspace{-0.1cm}
\label{table:output_pert_combine}
\begin{tabular}{c|c|ccc|ccc|ccc}
\hline
\multirow{3}{*}{\textbf{Dataset}} & \textbf{Type} & \multicolumn{3}{c|}{\textbf{Fine-tuning}} & \multicolumn{3}{c|}{\textbf{Model pruning}} & \multicolumn{3}{c}{\textbf{Output noise}} \\
\cline{2-11}
 & \diagbox[width=8em]{\textbf{Metric}}{\textbf{Model}} 
 & \textbf{I3D} & \textbf{SlowFast} & \textbf{TSM} 
 & \textbf{I3D} & \textbf{SlowFast} & \textbf{TSM} 
 & \textbf{I3D} & \textbf{SlowFast} & \textbf{TSM} \\
\hline
\multirow{4}{*}{\textbf{HMDB-51}} & $\Delta$acc  & -0.032 & -0.018 & -0.047 & -0.370 & -0.396 & -0.424 & -0.181 & -0.287 & -0.222 \\
                        & TPR          & 0.800  & 1.000  & 0.900  & 0.500  & 0.100  & 0.600  & 1.000  & 0.800  & 0.900  \\
                        & F1 Score     & 0.889  & 1.000  & 0.947  & 0.667  & 0.182  & 0.750  & 1.000  & 0.889  & 0.947  \\
                        & FPR          & 0.000  & 0.000  & 0.000  & 0.000  & 0.000  & 0.000  & 0.000  & 0.000  & 0.000  \\ 
\hline
\multirow{4}{*}{\textbf{UCF-101}} & $\Delta$acc  & -0.041 & -0.002 & -0.011 & -0.593 & -0.582 & -0.616 & -0.156 & -0.185 & -0.126 \\
                        & TPR          & 1.000  & 0.900  & 1.000  & 1.000  & 0.900  & 1.000  & 1.000  & 1.000  & 1.000  \\
                        & F1 Score     & 1.000  & 0.947  & 1.000  & 1.000  & 0.947  & 1.000  & 1.000  & 1.000  & 1.000  \\
                        & FPR          & 0.000  & 0.000  & 0.000  & 0.000  & 0.000  & 0.000  & 0.000  & 0.000  & 0.000  \\ 
\hline
\end{tabular}
\end{table*}

\mypara{Post Adjustment}
\autoref{table:output_pert_combine} provides the auditing performance under various post-adjustment methods.
First, we investigate the impact of model fine-tuning as a potential evasion strategy. 
Specifically, we assume that a malicious attacker possesses a dataset with a distribution similar to the published dataset (approximately 10\% of the original dataset's scale) and uses it to fine-tune the suspect model.

The three columns corresponding to ``Fine-tuning'' report the performance of \method across four evaluation metrics on the fine-tuned models. 
First, we observe that the performance of the SlowFast model on normal tasks improves after fine-tuning. 
This is mainly because the SlowFast model is trained from scratch, and exposure to a dataset with a similar distribution can further enhance its learning.
Second, the TPR of \method shows a slight decline on some scenarios. 
This is attributed to the fine-tuning process, which diminishes the influence of the original published dataset on the target model, thereby reducing the detectability of dataset misuse.
Finally, \method consistently achieves an FPR of 0 across all settings. 
Overall, the above results and analysis suggest that \method keeps promising performance against model fine-tuning.

Next, we explore the influence of another evasion strategy (\ie, model pruning). 
The middle three columns (corresponding to ``Model pruning'') in~\autoref{table:output_pert_combine} present the auditing performance of \method on the HMDB51 and UCF-101 datasets when the pruning ratio is set to 40\%. 
We observe that the pruned models exhibit a substantial performance drop on the normal task.
Under this condition, the TPR of \method decreases significantly on the HMDB-51 dataset, while it remains high on the UCF-101 dataset. 
This discrepancy is primarily due to the smaller scale of the HMDB-51 dataset, which limits the performance of the trained model. 
After pruning, the test accuracy on HMDB-51 becomes extremely low, posing considerable challenges to the audit process. In contrast, although model performance on UCF-101 also declines after pruning, it remains superior to that of the pruned models on HMDB-51, enabling \method to maintain high auditing effectiveness.
Moreover, \method still achieves zero false positives across all three models on the two datasets. 
In summary, while model pruning can reduce detection effectiveness to some extent, this operation also severely impairs the model’s performance on normal tasks, limiting its practicality.

Finally, we investigated the impact of introducing noise into the model's output. Specifically, we injected zero-mean Gaussian noise with a standard deviation of 0.1 into the model outputs to simulate this disturbance. 
The last three columns (\ie, ``Output noise'') in~\autoref{table:output_pert_combine} present the auditing results of \method under this setting.
On the one hand, the injected noise significantly degrades the model’s performance on the original normal task. 
On the other hand, \method maintains 100\% auditing accuracy on the UCF-101 dataset, while its TPR exhibits a slight decline on the HMDB-51 dataset. 
This difference can be attributed to the varying scales of these two datasets, which lead to differences in the performance of the corresponding target models.
These results demonstrate that \method remains robust even in the presence of output noise, further validating its effectiveness in practical audit scenarios.

\section{Discussion}
\label{sec:discussion}

\method is the first auditing mechanism designed for video data that verifies whether a deployed video recognition model was trained on the specific dataset.
At the same time,
\method still has several limitations that warrant further investigation.
First, black-box query access may be unavailable for fully offline-deployed systems. 
Further, sophisticated adversaries aware of our auditing mechanism could develop adaptive evasion strategies. 
Therefore, evaluating robustness against more sophisticated adaptive evasion strategies and reducing computational overhead for scalable deployment will be valuable future work.

\section{Conclusion}
\label{sec:conclusion}

In this paper, we propose~\method, a method for verifying whether a target video dataset is used to train a suspect recognition model.
The core idea is to amplify the influence of published modified samples on the
prediction behavior of the target model.
By injecting carefully designed noise and selecting specific samples, combined with key hypothesis testing,
\method can achieve 100\% auditing accuracy across two datasets and three target models while
keeping the modifications nearly imperceptible.
Extensive ablation experiments validate the effectiveness of various components.
Furthermore, we also explore the impact of different parameter settings.
Finally, we evaluate the robustness of proposed \method.

\section*{Ethics Considerations}
This paper focuses on dataset copyright auditing in video recognition systems.
We strictly followed ethical guidelines by using publicly available, open-source datasets, under licenses that permit research and educational use.
As these datasets were curated and released by third parties, direct informed consent was not applicable. 
However, we are committed to ethical data use and will
comply with all licensing terms for any future modifications
or redistribution.
We aim to advance technological development while upholding academic ethics and data copyright norms, thereby supporting a healthy research community and fostering a culture of lawful and compliant dataset sharing.

\section*{Acknowledgments} 
We thank the anonymous reviewers for their constructive and insightful feedback. 
This work is supported 
in part 
by the National Natural Science Foundation of China under Grants No. (62402431, 62441618, 62025206, U23A20296, U24A20237, 62402379, 72594583011, 7257010373),
Key R\&D Program of Zhejiang Province under Grants No. (2024C01259, 2025C01061, 2024C01065, 2024C01012, 2025C01089),
the project CiCS of the research programme Gravitation which is (partly) financed by the Dutch Research Council (NWO) under Grant 024.006.037,
the China Postdoctoral Science Foundation under Grants No. (2025M771501, BX20250380),
and Zhejiang University.

{
    \footnotesize
    \bibliography{easy}

\begin{thebibliography}{10}
\providecommand{\url}[1]{#1}
\csname url@samestyle\endcsname
\providecommand{\newblock}{\relax}
\providecommand{\bibinfo}[2]{#2}
\providecommand{\BIBentrySTDinterwordspacing}{\spaceskip=0pt\relax}
\providecommand{\BIBentryALTinterwordstretchfactor}{4}
\providecommand{\BIBentryALTinterwordspacing}{\spaceskip=\fontdimen2\font plus
\BIBentryALTinterwordstretchfactor\fontdimen3\font minus \fontdimen4\font\relax}
\providecommand{\BIBforeignlanguage}[2]{{%
\expandafter\ifx\csname l@#1\endcsname\relax
\typeout{** WARNING: IEEEtran.bst: No hyphenation pattern has been}%
\typeout{** loaded for the language `#1'. Using the pattern for}%
\typeout{** the default language instead.}%
\else
\language=\csname l@#1\endcsname
\fi
#2}}
\providecommand{\BIBdecl}{\relax}
\BIBdecl

\bibitem{wu2022survey}
F.~Wu, Q.~Wang, J.~Bian, N.~Ding, F.~Lu, J.~Cheng, D.~Dou, and H.~Xiong, ``{A Survey on Video Action Recognition in Sports: Datasets, Methods and Applications},'' \emph{IEEE Transactions on Multimedia}, vol.~25, pp. 7943--7966, 2022.

\bibitem{deldjoo2016content}
Y.~Deldjoo, M.~Elahi, P.~Cremonesi, F.~Garzotto, P.~Piazzolla, and M.~Quadrana, ``{Content-Based Video Recommendation System Based on Stylistic Visual Features},'' \emph{Journal on Data Semantics}, 2016.

\bibitem{sun2022human}
Z.~Sun, Q.~Ke, H.~Rahmani, M.~Bennamoun, G.~Wang, and J.~Liu, ``{Human Action Recognition from Various Data Modalities: A Review},'' \emph{IEEE TPAMI}, vol.~45, no.~3, pp. 3200--3225, 2022.

\bibitem{biparva2022video}
M.~Biparva, D.~Fern{\'a}ndez-Llorca, R.~I. Gonzalo, and J.~K. Tsotsos, ``{Video Action Recognition for Lane-Change Classification and Prediction of Surrounding Vehicles},'' \emph{IEEE Transactions on Intelligent Vehicles}, vol.~7, no.~3, pp. 569--578, 2022.

\bibitem{elharrouss2021review}
O.~Elharrouss, N.~Almaadeed, and S.~Al-Maadeed, ``{A Review of Video Surveillance Systems},'' \emph{Journal of Visual Communication and Image Representation}, vol.~77, p. 103116, 2021.

\bibitem{kay2017kinetics}
W.~Kay, J.~Carreira, K.~Simonyan, B.~Zhang, C.~Hillier, S.~Vijayanarasimhan, F.~Viola, T.~Green, T.~Back, P.~Natsev \emph{et~al.}, ``{The Kinetics Human Action Video Dataset},'' \emph{CoRR abs/1705.06950}, 2017.

\bibitem{sigurdsson2016hollywood}
G.~A. Sigurdsson, G.~Varol, X.~Wang, A.~Farhadi, I.~Laptev, and A.~Gupta, ``{Hollywood in Homes: Crowdsourcing Data Collection for Activity Understanding},'' in \emph{ECCV}, 2016, pp. 510--526.

\bibitem{karpathy2014large}
A.~Karpathy, G.~Toderici, S.~Shetty, T.~Leung, R.~Sukthankar, and F.~Li, ``{Large-Scale Video Classification with Convolutional Neural Networks},'' in \emph{CVPR}, 2014, pp. 1725--1732.

\bibitem{2024youtube}
A.~Gilbertson and A.~Reisner, ``{Apple, Nvidia, Anthropic Used Thousands of Swiped YouTube Videos to Train AI},'' {https://www.proofnews.org/apple-nvidia-anthropic-used-thousands-of-swiped-youtube-videos-to-train-ai}, 2024.

\bibitem{sablayrolles2020radioactive}
A.~Sablayrolles, M.~Douze, C.~Schmid, and H.~J{\'e}gou, ``{Radioactive Data: Tracing through Training},'' in \emph{ICML}, 2020, pp. 8326--8335.

\bibitem{li2023black}
Y.~Li, M.~Zhu, X.~Yang, Y.~Jiang, T.~Wei, and S.-T. Xia, ``{Black-Box Dataset Ownership Verification via Backdoor Watermarking},'' \emph{IEEE TIFS}, vol.~18, pp. 2318--2332, 2023.

\bibitem{guo2023domain}
J.~Guo, Y.~Li, L.~Wang, S.-T. Xia, H.~Huang, C.~Liu, and B.~Li, ``{Domain Watermark: Effective and Harmless Dataset Copyright Protection is Closed at Hand},'' in \emph{NeurIPS}, 2023, pp. 54\,421--54\,450.

\bibitem{guo2025audio}
H.~Guo, J.~Guo, B.~Chen, Y.~Wang, X.~Chen, H.~Huang, Q.~Yan, and L.~Xiao, ``{AUDIO WATERMARK: Dynamic and Harmless Watermark for Black-Box Voice Dataset Copyright Protection},'' in \emph{USENIX Security}, 2025.

\bibitem{miao2021audio}
Y.~Miao, M.~Xue, C.~Chen, L.~Pan, J.~Zhang, B.~Z.~H. Zhao, D.~Kaafar, and Y.~Xiang, ``{The Audio Auditor: User-Level Membership Inference in Internet of Things Voice Services},'' in \emph{PoPETS}, 2021.

\bibitem{du2025sok}
L.~Du, X.~Zhou, M.~Chen, C.~Zhang, Z.~Su, P.~Cheng, J.~Chen, and Z.~Zhang, ``{SoK: Dataset Copyright Auditing in Machine Learning Systems},'' in \emph{IEEE S{\&}P}, 2025.

\bibitem{huang2024general}
Z.~Huang, N.~Z. Gong, and M.~K. Reiter, ``{A General Framework for Data-Use Auditing of ML Models},'' in \emph{ACM CCS}, 2024.

\bibitem{shokri2017membership}
R.~Shokri, M.~Stronati, C.~Song, and V.~Shmatikov, ``{Membership Inference Attacks Against Machine Learning Models},'' in \emph{IEEE S{\&}P}, 2017, pp. 3--18.

\bibitem{song2019auditing}
C.~Song and V.~Shmatikov, ``{Auditing Data Provenance in Text-Generation Models},'' in \emph{SIGKDD}, 2019, pp. 196--206.

\bibitem{chen2023face}
M.~Chen, Z.~Zhang, T.~Wang, M.~Backes, and Y.~Zhang, ``{FACE-AUDITOR: Data Auditing in Facial Recognition Systems},'' in \emph{USENIX Security}, 2023, pp. 7195--7212.

\bibitem{li2022untargeted}
Y.~Li, Y.~Bai, Y.~Jiang, Y.~Yang, S.-T. Xia, and B.~Li, ``{Untargeted Backdoor Watermark: Towards Harmless and Stealthy Dataset Copyright Protection},'' in \emph{NeurIPS}, 2022.

\bibitem{co2019procedural}
K.~T. Co, L.~Mu{\~n}oz-Gonz{\'a}lez, S.~de~Maupeou, and E.~C. Lupu, ``{Procedural Noise Adversarial Examples for Black-Box Attacks on Deep Convolutional Networks},'' in \emph{ACM CCS}, 2019, pp. 275--289.

\bibitem{carlini2022membership}
N.~Carlini, S.~Chien, M.~Nasr, S.~Song, A.~Terzis, and F.~Tramer, ``{Membership Inference Attacks from First Principles},'' in \emph{IEEE S{\&}P}, 2022, pp. 1897--1914.

\bibitem{zhang2022inference}
Z.~Zhang, M.~Chen, M.~Backes, Y.~Shen, and Y.~Zhang, ``{Inference Attacks Against Graph Neural Networks},'' in \emph{USENIX Security Symposium}, 2022, pp. 4543--4560.

\bibitem{liu2022ml}
Y.~Liu, R.~Wen, X.~He, A.~Salem, Z.~Zhang, M.~Backes, E.~De~Cristofaro, M.~Fritz, and Y.~Zhang, ``{ML-Doctor: Holistic Risk Assessment of Inference Attacks Against Machine Learning Models},'' in \emph{USENIX Security Symposium}, 2022, pp. 4525--4542.

\bibitem{maini2021dataset}
P.~Maini, M.~Yaghini, and N.~Papernot, ``{Dataset Inference: Ownership Resolution in Machine Learning},'' in \emph{ICLR}, 2021.

\bibitem{li2021membership}
Z.~Li and Y.~Zhang, ``{Membership Leakage in Label-Only Exposures},'' in \emph{ACM CCS}, 2021, pp. 880--895.

\bibitem{choquette2021label}
C.~A. Choquette-Choo, F.~Tramer, N.~Carlini, and N.~Papernot, ``{Label-Only Membership Inference Attacks},'' in \emph{ICML}, 2021.

\bibitem{tian2023knowledge}
Z.~Tian, Z.~Wang, A.~M. Abdelmoniem, G.~Liu, and C.~Wang, ``{Knowledge Representation of Training Data with Adversarial Examples Supporting Decision Boundary},'' \emph{IEEE TIFS}, 2023.

\bibitem{szyller2023robustness}
S.~Szyller, R.~Zhang, J.~Liu, and N.~Asokan, ``{On the Robustness of Dataset Inference},'' \emph{TMLR}, 2023.

\bibitem{du2024orl}
L.~Du, M.~Chen, M.~Sun, S.~Ji, P.~Cheng, J.~Chen, and Z.~Zhang, ``{ORL-AUDITOR: Dataset Auditing in Offline Deep Reinforcement Learning},'' in \emph{NDSS}, 2024.

\bibitem{liu2021encodermi}
H.~Liu, J.~Jia, W.~Qu, and N.~Z. Gong, ``{EncoderMI: Membership Inference Against Pre-trained Encoders in Contrastive Learning},'' in \emph{ACM CCS}, 2021, pp. 2081--2095.

\bibitem{song2021systematic}
L.~Song and P.~Mittal, ``{Systematic Evaluation of Privacy Risks of Machine Learning Models},'' in \emph{USENIX Security Symposium}, 2021, pp. 2615--2632.

\bibitem{du2025artistauditor}
L.~Du, Z.~Zhu, M.~Chen, Z.~Su, S.~Ji, P.~Cheng, J.~Chen, and Z.~Zhang, ``{ArtistAuditor: Auditing Artist Style Pirate in Text-to-Image Generation Models},'' in \emph{WWW}, 2025, pp. 2500--2513.

\bibitem{li2025vid}
Q.~Li, R.~Yu, and X.~Wang, ``{Vid-SME: Membership Inference Attacks Against Large Video Understanding Models},'' \emph{CoRR abs/2506.03179}, 2025.

\bibitem{sablayrolles2019white}
A.~Sablayrolles, M.~Douze, C.~Schmid, Y.~Ollivier, and H.~J{\'e}gou, ``{White-Box vs Black-Box: Bayes Optimal Strategies for Membership Inference},'' in \emph{ICML}, 2019, pp. 5558--5567.

\bibitem{dziedzic2022dataset}
A.~Dziedzic, H.~Duan, M.~A. Kaleem, N.~Dhawan, J.~Guan, Y.~Cattan, F.~Boenisch, and N.~Papernot, ``{Dataset Inference for Self-Supervised Models},'' in \emph{NeurIPS}, 2022.

\bibitem{li2022user}
G.~Li, S.~Rezaei, and X.~Liu, ``{User-Level Membership Inference Attack Against Metric Embedding Learning},'' in \emph{ICLR PAIR2Struct Workshop}, 2022.

\bibitem{dong2023rai2}
T.~Dong, S.~Li, G.~Chen, M.~Xue, H.~Zhu, and Z.~Liu, ``{RAI2: Responsible Identity Audit Governing the Artificial Intelligence},'' in \emph{NDSS}, 2023.

\bibitem{liu2022your}
G.~Liu, T.~Xu, X.~Ma, and C.~Wang, ``{Your Model Trains on My Data? Protecting Intellectual Property of Training Data via Membership Fingerprint Authentication},'' \emph{IEEE TIFS}, vol.~17, pp. 1024--1037, 2022.

\bibitem{salem2019ml}
A.~Salem, Y.~Zhang, M.~Humbert, P.~Berrang, M.~Fritz, and M.~Backes, ``{ML-Leaks: Model and Data Independent Membership Inference Attacks and Defenses on Machine Learning Models},'' in \emph{NDSS}, 2019.

\bibitem{wenger2024data}
E.~Wenger, X.~Li, B.~Y. Zhao, and V.~Shmatikov, ``{Data Isotopes for Data Provenance in DNNs},'' in \emph{PoPETS}, 2024, pp. 413--429.

\bibitem{guo2024zeromark}
J.~Guo, Y.~Li, R.~Chen, Y.~Wu, C.~Liu, and H.~Huang, ``{ZeroMark: Towards Dataset Ownership Verification without Disclosing Watermark},'' in \emph{NeurIPS}, 2024.

\bibitem{chen2025MembershipTracker}
Z.~Chen and K.~Pattabiraman, ``{Anonymity Unveiled: A Practical Framework for Auditing Data Use in Deep Learning Models},'' in \emph{ACM CCS}, 2025.

\bibitem{gu2019badnets}
T.~Gu, K.~Liu, B.~Dolan-Gavitt, and S.~Garg, ``{BadNets: Evaluating Backdooring Attacks on Deep Neural Networks},'' \emph{IEEE Access}, vol.~7, pp. 47\,230--47\,244, 2019.

\bibitem{li2021invisible}
Y.~Li, Y.~Li, B.~Wu, L.~Li, R.~He, and S.~Lyu, ``{Invisible Backdoor Attack with Sample-Specific Triggers},'' in \emph{CVPR}, 2021, pp. 16\,463--16\,472.

\bibitem{li2020open}
Y.~Li, Z.~Zhang, J.~Bai, B.~Wu, Y.~Jiang, and S.-T. Xia, ``{Open-Sourced Dataset Protection via Backdoor Watermarking},'' \emph{CoRR abs/2010.05821}, 2020.

\bibitem{li2022black}
Y.~Li, M.~Zhu, X.~Yang, Y.~Jiang, and S.-T. Xia, ``{Black-Box Ownership Verification for Dataset Protection via Backdoor Watermarking},'' \emph{CoRR abs/2209.06015}, 2022.

\bibitem{al2024look}
H.~A. Al~Kader~Hammoud, S.~Liu, M.~Alkhrashi, F.~Albalawi, and B.~Ghanem, ``{Look Listen and Attack: Backdoor Attacks Against Video Action Recognition},'' in \emph{CVPR Workshop}, 2024, pp. 3439--3450.

\bibitem{souri2022sleeper}
H.~Souri, L.~Fowl, R.~Chellappa, M.~Goldblum, and T.~Goldstein, ``{Sleeper Agent: Scalable Hidden Trigger Backdoors for Neural Networks Trained from Scratch},'' in \emph{NeurIPS}, 2022.

\bibitem{tang2023did}
R.~Tang, Q.~Feng, N.~Liu, F.~Yang, and X.~Hu, ``{Did You Train on My Dataset? Towards Public Dataset Protection with Cleanlabel Backdoor Watermarking},'' in \emph{ACM SIGKDD}, 2023.

\bibitem{bouaziz2025data}
W.~Bouaziz, N.~Usunier, and E.-M. El-Mhamdi, ``{Data Taggants: Dataset Ownership Verification via Harmless Targeted Data Poisoning},'' in \emph{ICLR}, 2025.

\bibitem{sadhu2021visual}
A.~Sadhu, T.~Gupta, M.~Yatskar, R.~Nevatia, and A.~Kembhavi, ``{Visual Semantic Role Labeling for Video Understanding},'' in \emph{CVPR}, 2021, pp. 5589--5600.

\bibitem{pareek2021survey}
P.~Pareek and A.~Thakkar, ``{A Survey on Video-Based Human Action Recognition: Recent Updates, Datasets, Challenges, and Applications},'' \emph{Artificial Intelligence Review}, vol.~54, no.~3, pp. 2259--2322, 2021.

\bibitem{parashar2023data}
A.~Parashar, A.~Parashar, W.~Ding, M.~Shabaz, and I.~Rida, ``{Data Preprocessing and Feature Selection Techniques in Gait Recognition: A Comparative Study of Machine Learning and Deep Learning Approaches},'' \emph{Pattern Recognition Letters}, vol. 172, pp. 65--73, 2023.

\bibitem{liang2023survey}
H.~Liang, Z.~Zhang, C.~Hu, Y.~Gong, and D.~Cheng, ``{A Survey on Spatio-Temporal Big Data Analytics Ecosystem: Resource Management, Processing Platform, and Applications},'' \emph{IEEE Transactions on Big Data}, vol.~10, no.~2, pp. 174--193, 2023.

\bibitem{lin2019tsm}
J.~Lin, C.~Gan, and S.~Han, ``{TSM: Temporal Shift Module for Efficient Video Understanding},'' in \emph{CVPR}, 2019, pp. 7083--7093.

\bibitem{luo2019grouped}
C.~Luo and A.~L. Yuille, ``{Grouped Spatial-Temporal Aggregation for Efficient Action Recognition},'' in \emph{CVPR}, 2019.

\bibitem{wang2016temporal}
L.~Wang, Y.~Xiong, Z.~Wang, Y.~Qiao, D.~Lin, X.~Tang, and L.~Van~Gool, ``{Temporal Segment Networks: Towards Good Practices for Deep Action Recognition},'' in \emph{ECCV}, 2016, pp. 20--36.

\bibitem{liu2016spatio}
J.~Liu, A.~Shahroudy, D.~Xu, and G.~Wang, ``{Spatio-Temporal LSTM with Trust Gates for 3d Human Action Recognition},'' in \emph{ECCV}, 2016.

\bibitem{feichtenhofer2020x3d}
C.~Feichtenhofer, ``{X3D: Expanding Architectures for Efficient Video Recognition},'' in \emph{CVPR}, 2020, pp. 203--213.

\bibitem{feichtenhofer2019slowfast}
C.~Feichtenhofer, H.~Fan, J.~Malik, and K.~He, ``{SlowFast Networks for Video Recognition},'' in \emph{CVPR}, 2019, pp. 6202--6211.

\bibitem{tran2018closer}
D.~Tran, H.~Wang, L.~Torresani, J.~Ray, Y.~LeCun, and M.~Paluri, ``{A Closer Look at Spatiotemporal Convolutions for Action Recognition},'' in \emph{CVPR}, 2018, pp. 6450--6459.

\bibitem{tran2019video}
D.~Tran, H.~Wang, L.~Torresani, and M.~Feiszli, ``{Video Classification with Channel-Separated Convolutional Networks},'' in \emph{CVPR}, 2019, pp. 5552--5561.

\bibitem{carreira2017quo}
J.~Carreira and A.~Zisserman, ``{Quo Vadis, Action Recognition? A New Model and The Kinetics Dataset},'' in \emph{CVPR}, 2017.

\bibitem{liu2022video}
Z.~Liu, J.~Ning, Y.~Cao, Y.~Wei, Z.~Zhang, S.~Lin, and H.~Hu, ``{Video Swin Transformer},'' in \emph{CVPR}, 2022, pp. 3202--3211.

\bibitem{bertasius2021space}
G.~Bertasius, H.~Wang, and L.~Torresani, ``{Is Space-Time Attention All You Need for Video Understanding?}'' in \emph{ICML}, 2021.

\bibitem{arnab2021vivit}
A.~Arnab, M.~Dehghani, G.~Heigold, C.~Sun, M.~Lu{\v{c}}i{\'c}, and C.~Schmid, ``{ViViT: A Video Vision Transformer},'' in \emph{CVPR}, 2021, pp. 6836--6846.

\bibitem{lagae2010survey}
A.~Lagae, S.~Lefebvre, R.~Cook, T.~DeRose, G.~Drettakis, D.~S. Ebert, J.~P. Lewis, K.~Perlin, and M.~Zwicker, ``{A Survey of Procedural Noise Functions},'' in \emph{Computer Graphics Forum}, vol.~29, no.~8, 2010, pp. 2579--2600.

\bibitem{perlin2002improving}
K.~Perlin, ``{Improving Noise},'' in \emph{ACM SIGGRAPH}, 2002.

\bibitem{woolson2005wilcoxon}
R.~F. Woolson, ``{Wilcoxon Signed-Rank Test},'' \emph{Encyclopedia of Biostatistics}, vol.~8, 2005.

\bibitem{kuehne2011hmdb}
H.~Kuehne, H.~Jhuang, E.~Garrote, T.~Poggio, and T.~Serre, ``{HMDB: A Large Video Database for Human Motion Recognition},'' in \emph{ICCV}, 2011, pp. 2556--2563.

\bibitem{soomro2012ucf101}
K.~Soomro, A.~R. Zamir, and M.~Shah, ``{UCF101: A Dataset of 101 Human Actions Classes from Videos in the Wild},'' \emph{CoRR/abs:1212.0402}, 2012.

\bibitem{goyal2017something}
R.~Goyal \emph{et~al.}, ``{The `Something Something' Video Database for Learning and Evaluating Visual Common Sense},'' in \emph{ICCV}, 2017, pp. 5842--5850.

\bibitem{chen2021deep}
C.-F.~R. Chen, R.~Panda, K.~Ramakrishnan, R.~Feris, J.~Cohn, A.~Oliva, and Q.~Fan, ``{Deep Analysis of Cnn-Based Spatio-Temporal Representations for Action Recognition},'' in \emph{CVPR}, 2021.

\bibitem{2020mmaction2}
M.~Contributors, ``{OpenMMLab's Next Generation Video Understanding Toolbox and Benchmark},'' {https://github.com/open-mmlab/mmaction2}, 2020.

\bibitem{Jordan2024on}
K.~Jordan, ``{On the Variance of Neural Network Training with respect to Test Sets and Distributions},'' in \emph{ICLR}, 2024.

\bibitem{hendrycks2016baseline}
D.~Hendrycks and K.~Gimpel, ``{A Baseline for Detecting Misclassified and Out-of-Distribution Examples in Neural Networks},'' in \emph{ICLR}, 2017.

\bibitem{sun2022out}
Y.~Sun, Y.~Ming, X.~Zhu, and Y.~Li, ``{Out-of-Distribution Detection with Deep Nearest Neighbors},'' in \emph{ICML}, 2022.

\bibitem{hendrycks2022scaling}
D.~Hendrycks, S.~Basart, M.~Mazeika, A.~Zou, J.~Kwon, M.~Mostajabi, J.~Steinhardt, and D.~Song, ``{Scaling Out-of-Distribution Detection for Real-World Settings},'' in \emph{ICML}, 2022.

\bibitem{lee2018simple}
K.~Lee, K.~Lee, H.~Lee, and J.~Shin, ``{A Simple Unified Framework for Detecting Out-of-Distribution Samples and Adversarial Attacks},'' in \emph{NeurIPS}, 2018.

\bibitem{wang2022vim}
H.~Wang, Z.~Li, L.~Feng, and W.~Zhang, ``{ViM: Out-of-Distribution with Virtual-Logit Matching},'' in \emph{CVPR}, 2022, pp. 4921--4930.

\end{thebibliography}
    \bibliographystyle{IEEEtran}
}

\appendix

\subsection{Threshold Analysis}
\label{subsec:appendix_threshold_analysis}

\begin{theorem}[Threshold range under TPR and FPR Constraints]
    Let $\bar h$ denote the average probability difference between two models over $n$ samples. 
    Assume that under the null hypothesis (\ie, same training dataset), $\bar h \sim \mathcal{N}(\mu_0,\sigma_0^2/n)$;
    and under the alternative hypothesis (\ie, different training datasets),
    $\bar h \sim \mathcal{N}(\mu_1,\sigma_1^2/n)$, with $\mu_0 < \mu_1$.
    For given significance levels $a$ (FPR) and $b$ (FNR), any threshold $\tau$ satisfying 
    \begin{equation*}
        \tau \in [\mu_0+z_{1-b}\frac{\sigma_0}{\sqrt {n}},\mu_1+z_a\frac{\sigma_1}{\sqrt{n}}]
    \end{equation*}
    ensuring that $TPR=P(\bar h \leq \tau | H_0)\geq 1-b$ and $FPR=P(\bar h \leq \tau | H_1)\leq a$.
    \label{theorem_threshold}
\end{theorem}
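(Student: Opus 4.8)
The plan is to reduce both probabilistic constraints to one-dimensional inequalities on $\tau$ by standardizing the two Gaussians and invoking the monotonicity of the standard normal CDF $\Phi$. First I would fix notation, writing $z_p = \Phi^{-1}(p)$ for the $p$-quantile of the standard normal, so that $z_a$ is negative for small $a$ and $z_{1-b}$ is positive for small $b$. I would also make the decision rule explicit: since a model trained on the dataset yields smaller differences (the modified samples were seen in training), we declare ``misuse'' exactly when $\bar h \leq \tau$, which is consistent with the assumption $\mu_0 < \mu_1$.

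Next I would handle the TPR (power) condition. Under $H_0$ we have $\bar h \sim \mathcal{N}(\mu_0,\sigma_0^2/n)$, so standardizing gives
\begin{equation*}
TPR = P(\bar h \leq \tau \mid H_0) = \Phi\!\left(\frac{\tau - \mu_0}{\sigma_0/\sqrt{n}}\right).
\end{equation*}
Because $\Phi$ is strictly increasing, the requirement $TPR \geq 1-b$ is equivalent to $(\tau - \mu_0)/(\sigma_0/\sqrt{n}) \geq z_{1-b}$, which rearranges to the lower endpoint $\tau \geq \mu_0 + z_{1-b}\,\sigma_0/\sqrt{n}$.

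I would then treat the FPR (size) condition symmetrically. Under $H_1$ we have $\bar h \sim \mathcal{N}(\mu_1,\sigma_1^2/n)$, so
\begin{equation*}
FPR = P(\bar h \leq \tau \mid H_1) = \Phi\!\left(\frac{\tau - \mu_1}{\sigma_1/\sqrt{n}}\right),
\end{equation*}
and $FPR \leq a$ is equivalent, again by monotonicity of $\Phi$, to $(\tau - \mu_1)/(\sigma_1/\sqrt{n}) \leq z_a$, i.e.\ the upper endpoint $\tau \leq \mu_1 + z_a\,\sigma_1/\sqrt{n}$. Intersecting the two half-lines yields precisely the claimed interval.

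The calculation itself is routine, so the points I would be most careful about are conventional rather than technical: ensuring that $z_a$ denotes the lower $a$-quantile (hence $z_a < 0$) rather than an upper-tail critical value, and that the direction of each inequality is correctly preserved when applying the monotone $\Phi$. The only genuinely substantive issue, which I would flag as a remark rather than fold into the main argument, is nonemptiness of the interval: the lower endpoint can exceed the upper one when $n$ is small, and the interval is guaranteed nonempty exactly when $\mu_1 - \mu_0 \geq (z_{1-b}\sigma_0 - z_a\sigma_1)/\sqrt{n}$. Since the right-hand side decays like $1/\sqrt{n}$ while the separation $\mu_1-\mu_0$ is fixed, this condition always holds once $n$ is large enough, which is the practically relevant regime.
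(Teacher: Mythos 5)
Your proposal is correct and follows essentially the same route as the paper's proof: standardize $\bar h$ under each hypothesis, use monotonicity of $\Phi$ to turn the TPR and FPR constraints into the two linear inequalities on $\tau$, and intersect them. Your additional remark on nonemptiness of the interval (and on the sign convention for $z_a$) is a worthwhile clarification that the paper omits, but it does not change the argument.
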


\begin{proof}
    The TPR and FPR can be expressed as follows:
    \begin{align*}
        TPR(\tau)=P(\bar h \leq \tau | H_0)&=\Phi(\frac{\tau-\mu_0}{\sigma_0/\sqrt{n}}), \\
        FPR(\tau)=P(\bar h \leq \tau | H_1)&=\Phi(\frac{\tau-\mu_1}{\sigma_1/\sqrt{n}}),
    \end{align*}
where $\Phi(\cdot)$ is the cumulative distribution function of the standard normal distribution.
According to the constraints of $TPR\geq 1-b$
and $FPR \leq a$,
we can obtain
\begin{align*}
    \frac{\tau-\mu_0}{\sigma_0/\sqrt{n}} &\geq z_{1-b}, \\
    \frac{\tau-\mu_1}{\sigma_1/\sqrt{n}} &\leq z_a,
\end{align*}
which directly yield the bounds as follows:
\begin{equation*}
    \tau_{min} = \mu_0 + z_{1-b}\frac{\sigma_0}{\sqrt{n}},
    \tau_{max} = \mu_1 + z_{a}\frac{\sigma_1}{\sqrt{n}}.
\end{equation*}

\end{proof}

\mypara{Determination of the upper threshold}
In practice, due to the diversity of video models and datasets, it is difficult to obtain accurate values of the above parameters.
According to existing research~\cite{feichtenhofer2019slowfast,carreira2017quo,Jordan2024on},
the accuracy of training with the same dataset fluctuates by about 1\%-2\%, while the migration between different training datasets can cause the accuracy deviation to exceed 5\% or even 10\%.
Therefore, here we assume that $\mu_0=0.02,\sigma_0=0.01,\mu_1=0.08,\sigma_1=0.02,n=100$, let $a=b=0.05$, we can calculate $\tau_{min}\approx0.022$ and $\tau_{max}\approx0.077$.
To achieve a suitable trade-off between TPR and FPR, we choose the upper threshold limit $H$ as the midpoint between the minimum and maximum values, \ie, 
$H=(\tau_{min}+\tau_{max})/2\approx0.05$.

\subsection{Theoretical Bound of FPR}
\label{subsec:FPR_theoretical_bound}

\mypara{Notations and Assumptions}
Let $\{\Delta s_{R,i}\}_{i=1}^{n_R}$ denote the pairwise score differences on the reference set used to estimate the threshold:
\[
\bar h = \frac{1}{n_R}\sum_{i=1}^{n_R}\Delta s_{R,i},\qquad
h = {clip}(\bar h,-H,H).
\]
Let $\{\Delta s_{M,i}\}_{i=1}^{n_M}$ denote the non-zero paired differences on the modification set,
for $i\in\{1,\cdots,n_M\}$,
we have
\[d_i = h - \Delta s_{M,i},\]
and let $W$ be the Wilcoxon signed-rank statistic (the sum of ranks of positive $d_i$’s).
In the absence of ties,
\[
\mu_W = \frac{n_M(n_M+1)}{4},\quad
\sigma_W^2 = \frac{n_M(n_M+1)(2n_M+1)}{24}.
\]

To obtain a bounded conclusion, we impose the following suitable assumptions.

\begin{enumerate}
    \item The reference mean $\bar h$ is sub-Gaussian around its population mean $\mu$,
    \ie, for any $\delta_h>0$,
    \[
    \Pr(|\bar h-\mu|>\delta_h)\le 2\exp\!\Big(-\frac{n_R\delta_h^2}{2c_h^2}\Big),
  \]
  where $c_h$
  is a theoretical
  constant characterizing the concentration of the estimated reference threshold
  $\bar h$.
  A smaller $c_h$ implies tighter concentration and a smaller FPR correction term.
  The value of $c_h$ can be estimated as $s_R / \sqrt{n_R}$, where $s_R^2$ is their sample variance.
  This assumption is suitable since each
    \(\Delta s_{R,i}\) is a bounded or light-tailed random variable representing a difference
    between two model outputs on the reference set.
    According to Hoeffding Lemma,
    if \(|\Delta s_{R,i}|\le M\),
    then \(\Delta s_{R}\) is \(M\)-sub-Gaussian.
\label{assump:sub-Gaussian}
    
    \item 

    The density \(f_{\Delta}\) of the modification-set differences \(\Delta s_{M,i}\) is uniformly bounded near \(\mu\), \ie, there exists \(f_{\max}>0\) such that \(f_{\Delta}(x)\le f_{\max}\) for \(x\) in a neighborhood of \(\mu\).

    \item The Algorithm 3 modifies at most $k_{\mathrm{pp}}$ of the $n_M$ samples through post-processing.
    \label{assump:k_pp}

    \item 
    According to the asymptotic normality,
the Wilcoxon signed-rank statistic $W$ is asymptotically normally
distributed under the null hypothesis $H_0$ when the number of matched samples $n_M$ is large.
\label{assump:asym_norm}
 
\end{enumerate}

\mypara{Auxiliary Lemmas}
To obtain the final conclusion, we further introduce several auxiliary lemmas as follows.

\begin{lemma}[Event decomposition]
\label{lem:decomp_clip}
The overall FPR can be decomposed into three %
cases.
For any $\delta_h>0$,
\[
\begin{aligned}
\Pr_{H_0}(\text{reject}) &\le
\Pr(\text{reject}\mid |\bar h-\mu|\le\delta_h,\,|\bar h|\le H) \\
&+\Pr(|\bar h-\mu|>\delta_h)
+\Pr(|\bar h|>H).
\end{aligned}
\]
\end{lemma}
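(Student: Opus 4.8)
The plan is to prove this purely by the law of total probability together with a union bound, so that the single false-positive probability under $H_0$ is split into a well-behaved conditional term plus two tail terms that later lemmas can bound separately. First I would name the two favorable events $A=\{|\bar h-\mu|\le\delta_h\}$ and $B=\{|\bar h|\le H\}$, and set $G=A\cap B$; on $G$ the clip in \autoref{algorithm:copyright_verify} is inactive (so $h=\bar h$) and $\bar h$ concentrates near its population mean $\mu$. The decomposition is then carried out with respect to $G$ and its complement $G^{c}$.

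I would begin from the trivial partition $\Pr_{H_0}(\text{reject})=\Pr_{H_0}(\text{reject}\cap G)+\Pr_{H_0}(\text{reject}\cap G^{c})$. For the first summand, writing it as a conditional probability and using $\Pr_{H_0}(G)\le 1$ gives $\Pr_{H_0}(\text{reject}\cap G)=\Pr_{H_0}(\text{reject}\mid G)\,\Pr_{H_0}(G)\le\Pr_{H_0}(\text{reject}\mid G)$, and $\Pr_{H_0}(\text{reject}\mid G)$ is exactly the conditional term $\Pr(\text{reject}\mid |\bar h-\mu|\le\delta_h,\,|\bar h|\le H)$ in the statement. For the second summand I would simply discard the rejection factor, bounding $\Pr_{H_0}(\text{reject}\cap G^{c})\le\Pr_{H_0}(G^{c})$, and then apply De Morgan's law together with the union bound: since $G^{c}=A^{c}\cup B^{c}$, we get $\Pr_{H_0}(G^{c})\le\Pr_{H_0}(A^{c})+\Pr_{H_0}(B^{c})=\Pr(|\bar h-\mu|>\delta_h)+\Pr(|\bar h|>H)$. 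Summing the two estimates yields the claimed inequality.

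There is no genuine obstacle here: this is a bookkeeping lemma whose sole purpose is to isolate the three regimes. Its value lies in what comes afterward, namely that the conditional term can be controlled via the asymptotic normality of the Wilcoxon statistic $W$ (Assumption \ref{assump:asym_norm}) together with the density bound $f_{\max}$, while the two tail terms are handled by the sub-Gaussian concentration of $\bar h$ (Assumption \ref{assump:sub-Gaussian}) and by the design choice of the clip level $H$. The only point requiring minor care is keeping the conditioning consistent: the conditional probability in the first term must be taken against precisely the event $G=A\cap B$ whose complement produces the two tail terms, so that the three pieces genuinely recombine to cover the entire event $\{\text{reject}\}$ with no double counting or gap. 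Once that is checked, the inequality follows immediately and for \emph{any} choice of $\delta_h>0$, leaving $\delta_h$ free to be optimized later when the tail term $\Pr(|\bar h-\mu|>\delta_h)$ is balanced against the conditional rejection probability.
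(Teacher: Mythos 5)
Your proof is correct and is precisely the standard partition-plus-union-bound argument that the paper relies on (the paper states this lemma without an explicit proof, treating it as exactly the bookkeeping step you describe): splitting on $G=\{|\bar h-\mu|\le\delta_h\}\cap\{|\bar h|\le H\}$, bounding $\Pr(\text{reject}\cap G)\le\Pr(\text{reject}\mid G)$ and $\Pr(\text{reject}\cap G^{c})\le\Pr(G^{c})\le\Pr(|\bar h-\mu|>\delta_h)+\Pr(|\bar h|>H)$. No gaps; your remark about keeping the conditioning event consistent with the two tail events is the only point of care, and you handle it correctly.
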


\begin{lemma}[Affected-sample bounds]
\label{lem:affected_clip}
If $|\bar h-\mu|\le\delta_h$ and $|\bar h|\le H$, then 
the maximum number of samples $k(\delta_h)$ that can change sign or rank is  
\[
k(\delta_h)= n_M \int_{\mu-|\delta_h|}^{\mu+|\delta_h|} f_\Delta(x)\,dx \le n_M \min\{1,\,2f_{\max}\delta_h\}.
\]

Similarly, if clipping occurs (\ie, $|\bar h|>H$), then the maximum affected samples $k_{\mathrm{clip}}$ is bounded by
\[
k_{\mathrm{clip}} \le n_M \min\{1,\,2f_{\max}(H-|\mu|)\}.
\]
\end{lemma}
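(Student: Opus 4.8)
The plan is to reduce both bounds to a single geometric observation: a paired difference $\Delta s_{M,i}$ can change its contribution to the Wilcoxon statistic $W$ only when it lies in the interval swept out as the threshold moves from its ideal population value $\mu$ to the value $h$ actually used. First I would fix the reference threshold at $\mu$ and note that the $i$-th signed difference is $d_i = h - \Delta s_{M,i}$, so the sign of $d_i$ is determined solely by whether $\Delta s_{M,i}$ lies above or below the threshold. Replacing $\mu$ by $h$ therefore flips the sign of $d_i$ if and only if $\Delta s_{M,i}$ lies between $\min(\mu,h)$ and $\max(\mu,h)$, an interval of width $|h-\mu|$. Since a sign flip is exactly the event that relocates a term between the positive-rank and negative-rank tails of $W$, the samples that ``change sign or rank'' are precisely those falling in this interval.

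In the non-clipped regime the hypotheses give $|\bar h-\mu|\le\delta_h$ and $|\bar h|\le H$, so $h=\bar h$ and the sweep interval is contained in $[\mu-\delta_h,\mu+\delta_h]$. I would then count the affected samples by their expected number, $k(\delta_h)=n_M\Pr(\Delta s_{M,i}\in[\mu-\delta_h,\mu+\delta_h])=n_M\int_{\mu-\delta_h}^{\mu+\delta_h} f_\Delta(x)\,dx$, which is the claimed equality. Two elementary estimates then close the non-clipped case: the probability is at most $1$, giving $k(\delta_h)\le n_M$; and the uniform boundedness assumption $f_\Delta(x)\le f_{\max}$ near $\mu$ gives $\int_{\mu-\delta_h}^{\mu+\delta_h} f_\Delta\le 2f_{\max}\delta_h$. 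Taking the smaller of the two yields $k(\delta_h)\le n_M\min\{1,\,2f_{\max}\delta_h\}$. For the clipping case I would rerun the argument with $\delta_h$ replaced by the effective displacement $|h-\mu|$: when $|\bar h|>H$ the clip sets $h=\pm H$, and under the $H_0$ model of \autoref{theorem_threshold} (where $\mu=\mu_0>0$ and $\mu\in[-H,H]$) the relevant displacement is $|h-\mu|=H-\mu=H-|\mu|$, so the same counting-and-density estimate gives $k_{\mathrm{clip}}\le n_M\min\{1,\,2f_{\max}(H-|\mu|)\}$.

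The main obstacle is making ``change sign or rank'' precise and verifying that rank reshuffling does not enlarge the affected set beyond the sign-flip interval. Because shifting the threshold perturbs every $d_i$ by the same constant $h-\mu$, the ordering of the $|d_i|$ can in principle permute; however, only samples that actually cross the threshold migrate between the two signed-rank tails, while samples that retain their sign retain their tail membership, and $W$ is insensitive to rank permutations occurring entirely within a single tail beyond the contribution already captured by the sign-flip count. I would therefore argue that bounding the sign-flip set suffices, and that the width-$|h-\mu|$ interval controls it. A secondary subtlety is whether $k(\delta_h)$ denotes an expectation or an almost-sure bound; I would state it as the expected affected count (the natural reading of the integral) and remark that a high-probability version follows from a standard Chernoff tail on the Binomial count $\#\{i:\Delta s_{M,i}\in[\mu-\delta_h,\mu+\delta_h]\}$ should an almost-sure statement be needed for the downstream FPR bound.
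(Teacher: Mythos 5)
The paper states this lemma without proof and invokes it directly in the proof of Theorem~\ref{thm:main_clip}, so there is no in-paper argument to compare against; your reconstruction of the sign-flip counting is the natural and almost certainly intended one. Your core reduction is sound: since $d_i = h-\Delta s_{M,i}$, moving the threshold from $\mu$ to $h$ flips the sign of $d_i$ exactly when $\Delta s_{M,i}$ lies in the interval between $\mu$ and $h$, which under $|\bar h-\mu|\le\delta_h$ and $h=\bar h$ is contained in $[\mu-\delta_h,\mu+\delta_h]$; reading $k(\delta_h)$ as the expected count gives the stated integral, and the two elementary bounds (probability at most $1$, and $f_\Delta\le f_{\max}$ near $\mu$) give the $\min\{1,2f_{\max}\delta_h\}$ estimate. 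The clipping case with displacement $H-|\mu|$ is handled the same way, and your remark about needing a Chernoff bound to upgrade the expectation to a high-probability count is a fair reading of what the downstream FPR bound actually requires.

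The one genuine soft spot is your treatment of ``or rank.'' You argue that only sign-flipping samples matter because $W$ is insensitive to rank permutations within a single tail. Within-tail order is indeed preserved (a uniform shift of all $d_i$ preserves the order of $|d_i|$ among samples of a common sign), but \emph{cross-tail} swaps are not: if $d_i>0$ and $d_j<0$ both retain their signs, a shift by $c>0$ increases $|d_i|$ and decreases $|d_j|$, so their absolute values can exchange order, changing $R_j$ and hence $W=\sum_{j:d_j<0}R_j$ without any sign flip. Such pairs are not confined to the interval $[\mu-\delta_h,\mu+\delta_h]$, so the set of samples whose rank changes can strictly exceed the set counted by $k(\delta_h)$. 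This is arguably an imprecision in the lemma's own phrase ``change sign or rank'' rather than a defect unique to your proof, but since Lemma~\ref{lem:rankperturb_clip} bounds $\Delta W$ only over the identified affected set, a fully rigorous argument would either (i) show that the aggregate $W$-change from cross-tail rank swaps is itself controlled by the number of near-threshold samples (each swap requires $\bigl||d_i|-|d_j|\bigr|<2|h-\mu|$, so the total number of swapped pairs is bounded by the number of close absolute-value coincidences), or (ii) restate the lemma as bounding only the sign-flip set and absorb the residual rank drift into the perturbation budget of Lemma~\ref{lem:rankperturb_clip}. You should flag this rather than assert that the sign-flip set suffices.
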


\begin{lemma}[Rank-sum perturbation bound]
\label{lem:rankperturb_clip}
If at most $K$ samples are arbitrarily changed or permuted, the Wilcoxon statistic $W$
can change by at most
\[
\Delta W_{\max}(K) = \frac{K(2n_M - K + 1)}{2}.
\]
\end{lemma}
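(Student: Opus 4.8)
The plan is to avoid reasoning directly about how individual ranks shift when samples are perturbed, and instead exploit the pairwise (Walsh-average) representation of the signed-rank statistic, which converts the desired inequality into an elementary counting argument.

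First I would establish the identity
\begin{equation*}
W = \sum_{1 \le i \le j \le n_M} \mathbf{1}[\, d_i + d_j > 0 \,],
\end{equation*}
which re-expresses $W$ (the sum of ranks of the positive $d_i$) as the number of Walsh pairs with positive sum. This follows by writing each rank as $R_i = \sum_{j} \mathbf{1}[\,|d_j| \le |d_i|\,]$, expanding $W = \sum_{i} \mathbf{1}[d_i > 0]\, R_i$ over ordered pairs, and observing that in the tie-free setting the larger-magnitude element of any unordered pair dictates the sign of $d_i + d_j$; the diagonal term $i = j$ reduces to $\mathbf{1}[d_i > 0]$. This is a standard representation of the signed-rank statistic, so this step is mechanical once Assumption (zero differences excluded, distinct ranks) is invoked.

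Next I would note that this representation is a sum of $\tfrac{n_M(n_M+1)}{2}$ indicator terms, one per pair $(i,j)$ with $i \le j$, and that an arbitrary change to a set $\mathcal{K}$ of at most $K$ samples can alter a term $\mathbf{1}[d_i + d_j > 0]$ only when at least one of $i, j$ lies in $\mathcal{K}$; every term indexed by two unchanged samples is identical before and after the perturbation. The pairs untouched by the perturbation are exactly those among the $n_M - K$ unchanged indices, numbering $\tfrac{(n_M-K)(n_M-K+1)}{2}$, so the count of potentially affected pairs is
\begin{equation*}
\frac{n_M(n_M+1)}{2} - \frac{(n_M-K)(n_M-K+1)}{2} = \frac{K(2n_M - K + 1)}{2}.
\end{equation*}
Since each affected indicator changes $W$ by at most $1$ in absolute value, the triangle inequality gives $|W' - W| \le \tfrac{K(2n_M - K + 1)}{2} = \Delta W_{\max}(K)$, which is precisely the claimed bound.

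The hard part will be making the Walsh-average identity fully self-contained, in particular verifying that ties (equal absolute values, or Walsh sums $d_i + d_j = 0$) do not corrupt the count. Under the paper's stated tie-free convention this is clean and the counting step above is exact; if ties were permitted one would instead work with average ranks and a symmetric tie-breaking convention for boundary pairs, which would complicate the bookkeeping of the first step but would not affect the pair-counting bound and hence would leave $\Delta W_{\max}(K)$ unchanged.
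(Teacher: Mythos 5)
Your proof is correct, and it is worth noting that the paper states \autoref{lem:rankperturb_clip} without supplying any argument at all, so there is no in-paper proof to match against; the only hint of the intended reasoning is that the expression $\tfrac{K(2n_M-K+1)}{2}$ equals the sum of the $K$ largest ranks $n_M+(n_M-1)+\cdots+(n_M-K+1)$, suggesting a ``the perturbed samples can at worst acquire the top $K$ ranks'' argument. That naive reading is actually incomplete: altering $K$ samples also re-ranks the $n_M-K$ untouched samples (each of their ranks can shift by up to $K$), so bounding only the direct contribution of the changed samples does not by itself yield the stated constant. Your route through the Tukey/Walsh-pair representation $W=\sum_{i\le j}\mathbf{1}[\,d_i+d_j>0\,]$ sidesteps exactly this difficulty, since every pair of unchanged indices contributes an identical indicator before and after the perturbation, and the count of pairs meeting the changed set, $\tfrac{n_M(n_M+1)}{2}-\tfrac{(n_M-K)(n_M-K+1)}{2}=\tfrac{K(2n_M-K+1)}{2}$, combined with the triangle inequality gives the bound exactly. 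Your identity derivation and the tie-free caveat are both handled appropriately (the paper's test already excludes zero differences and uses average ranks only for ties), so the argument is self-contained; the one boundary case neither you nor the lemma addresses is a perturbation that drives some $d_i$ to exactly zero and thereby changes $n_M$ itself, but that is a defect of the lemma's statement rather than of your proof.
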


\mypara{Main Theorem}
We can obtain the following theorem:

\begin{theorem}[FPR bound]
\label{thm:main_clip}
Based on the above assumptions and lemmas, for any $\delta_h>0$, the false positive rate of the Wilcoxon-based test
satisfies
\begin{equation}\label{eq:main_bound}
\begin{aligned}
\mathrm{FPR}\;\le\;&
\alpha 
+ \frac{\Delta W_{\max}\big(k(\delta_h)+k_{\mathrm{pp}}+k_{\mathrm{clip}}\big)}{\sigma_W\sqrt{2\pi}}\\
&\;+\; 
2\exp\!\Big(-\frac{n_R\delta_h^2}{2c_h^2}\Big)
+ 2\exp\!\Big(-\frac{n_R(H-|\mu|)^2}{2c_h^2}\Big).
\end{aligned}
\end{equation}

\end{theorem}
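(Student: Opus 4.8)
The plan is to assemble \eqref{eq:main_bound} from the three auxiliary lemmas together with the two sub-Gaussian tail estimates, viewing a false rejection as a bounded perturbation of an idealized level-$\alpha$ test. First I would apply the event decomposition of Lemma~\ref{lem:decomp_clip}, which splits $\Pr_{H_0}(\mathrm{reject})$ into (i) the rejection probability on the good event $\{|\bar h-\mu|\le\delta_h\}\cap\{|\bar h|\le H\}$, (ii) the concentration-failure probability $\Pr(|\bar h-\mu|>\delta_h)$, and (iii) the clipping probability $\Pr(|\bar h|>H)$. Terms (ii) and (iii) follow immediately from the sub-Gaussian assumption (Assumption~\ref{assump:sub-Gaussian}): (ii) is at most $2\exp(-n_R\delta_h^2/(2c_h^2))$, while for (iii) I would note that under $H_0$ the population mean obeys $|\mu|<H$, so $\{|\bar h|>H\}\subseteq\{|\bar h-\mu|>H-|\mu|\}$ by the triangle inequality, and the same tail bound yields $2\exp(-n_R(H-|\mu|)^2/(2c_h^2))$.

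The core is term (i). On the good event the estimation error is small, so switching from the population threshold $\mu$ to the empirical threshold flips the sign or alters the rank of at most $k(\delta_h)$ of the $n_M$ paired differences (Lemma~\ref{lem:affected_clip}); post-processing touches at most $k_{\mathrm{pp}}$ additional samples (Assumption~\ref{assump:k_pp}); and because $\Delta W_{\max}$ is monotone increasing in its argument for $K\le n_M$, I may conservatively fold in the clipping correction $k_{\mathrm{clip}}$ to obtain a uniform statement, so the total number of affected samples is bounded by $K=k(\delta_h)+k_{\mathrm{pp}}+k_{\mathrm{clip}}$. Lemma~\ref{lem:rankperturb_clip} then converts this sample count into the deterministic estimate $|W-W_0|\le\Delta W_{\max}(K)$, where $W_0$ is the signed-rank statistic of the idealized test that uses $\mu$ directly and omits post-processing.

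Next I would translate the rank-statistic perturbation into an excess rejection probability. The idealized test rejects for large $W$, with critical value $w_\alpha$ satisfying $\Pr_{H_0}(W_0\ge w_\alpha)\le\alpha$. Since a rejection $W\ge w_\alpha$ together with $W\le W_0+\Delta W_{\max}(K)$ forces $W_0\ge w_\alpha-\Delta W_{\max}(K)$, the good-event rejection probability is at most $\Pr_{H_0}(W_0\ge w_\alpha)+\Pr_{H_0}(w_\alpha-\Delta W_{\max}(K)\le W_0<w_\alpha)$. Invoking the asymptotic normality of $W_0$ (Assumption~\ref{assump:asym_norm}), its density is everywhere bounded by the peak $1/(\sigma_W\sqrt{2\pi})$ of the $\mathcal N(\mu_W,\sigma_W^2)$ law, so the second probability is at most $\Delta W_{\max}(K)/(\sigma_W\sqrt{2\pi})$. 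Summing the bounds on (i), (ii), and (iii) reproduces \eqref{eq:main_bound}.

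I expect the genuine obstacle to be the last step: controlling the excess rejection probability through the maximal normal density. The clean factor $1/(\sigma_W\sqrt{2\pi})$ treats $W_0$ as exactly Gaussian, whereas Assumption~\ref{assump:asym_norm} supplies only asymptotic normality as $n_M$ grows; a fully rigorous version would need a quantitative normal approximation (for instance a Berry--Esseen bound for the signed-rank statistic) so that the finite-sample discretization and approximation error can be absorbed into the stated inequality. A secondary subtlety is that $k(\delta_h)$ is itself a random count of differences lying within $\delta_h$ of $\mu$; the bounded-density assumption is exactly what lets Lemma~\ref{lem:affected_clip} replace it by the deterministic surrogate $n_M\min\{1,2f_{\max}\delta_h\}$, and I would verify that this replacement holds uniformly on the good event before passing it through the monotone map $\Delta W_{\max}$.
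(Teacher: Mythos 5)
Your proposal follows essentially the same route as the paper's proof: the same event decomposition via Lemma~\ref{lem:decomp_clip}, the same sub-Gaussian tail bounds for the threshold-deviation and clipping events, the same accounting of affected samples $k(\delta_h)+k_{\mathrm{pp}}+k_{\mathrm{clip}}$ fed through Lemma~\ref{lem:rankperturb_clip}, and the same final step --- your ``interval probability $\le$ width times peak normal density'' argument is exactly the paper's inequality $\Phi(a+b)\le\Phi(a)+b/\sqrt{2\pi}$ applied to the standardized Wilcoxon statistic. Your closing remarks about the need for a quantitative (Berry--Esseen) normal approximation and the randomness of $k(\delta_h)$ correctly identify the places where the paper's argument is an approximation rather than a fully rigorous bound.
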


\begin{proof}

Applying Lemma~\ref{lem:decomp_clip},
\[
\begin{aligned}
\Pr(\text{reject})
&\le \Pr(\text{reject}\mid |\bar h-\mu|\le\delta_h,\,|\bar h|\le H) \\
&+ \Pr(|\bar h-\mu|>\delta_h)
+ \Pr(|\bar h|>H).
\end{aligned}
\]

On the event $\{|\bar h-\mu_0|\le\delta_h,\,|\bar h|\le H\}$,
at most $k(\delta_h)+k_{\mathrm{pp}}$ samples are affected
according to Lemma~\ref{lem:affected_clip} and
Assumption~\ref{assump:k_pp}.
By Assumption~\ref{assump:asym_norm}, the standardized
Wilcoxon statistic is asymptotically normal under $H_0$,
thus its rejection probability can be approximated as
\[
1 - \Phi\!\Big(z_\alpha - 
\frac{\Delta W_{\max}(k(\delta_h)+k_{\mathrm{pp}})}{\sigma_W}\Big),
\]
where $\Phi(\cdot)$ denotes the cumulative distribution function of
the standard normal distribution, and $z_\alpha$ is the $(1-\alpha)$
quantile satisfying $\Pr(Z>z_\alpha)=\alpha$ for
$Z\sim\mathcal{N}(0,1)$.
Applying the inequality $\Phi(a+b)\le\Phi(a)+b/\sqrt{2\pi}$
with $a=z_\alpha$ and rearranging terms gives
\[
\begin{aligned}
\Pr(\text{reject}\mid |\bar h-\mu_0|\le\delta_h,\,|\bar h|\le H)
&\le 
\alpha \\
+ 
\frac{\Delta W_{\max}\big(k(\delta_h)+k_{\mathrm{pp}}\big)}
{\sigma_W\sqrt{2\pi}}.
\end{aligned}
\]

If clipping occurs (\ie, $|\bar h|>H$), additional at most $k_{\mathrm{clip}}$ samples may be
affected.  
Including this in the rank perturbation budget, we further obtain
\[
\alpha
+ \frac{\Delta W_{\max}(k(\delta_h)+k_{\mathrm{pp}}+k_{\mathrm{clip}})}{\sigma_W\sqrt{2\pi}}.
\]

According to Assumption \ref{assump:sub-Gaussian},
the last two probability terms of \autoref{eq:main_bound} can be bounded by
\begin{equation*}
\begin{aligned}
     \Pr(|\bar h-\mu|>\delta)&\le 2\exp\!\Big(-\frac{n_R\delta_h^2}{2c_h^2}\Big), \\
    \Pr(|\bar h|>H) & \le 2\exp\!\Big(-\frac{n_R(H-|\mu|)^2}{2c_h^2}\Big).
\end{aligned}
\end{equation*}

Based on the above derivation, we can obtain the final FPR bound.
 Overall, the bound explicitly separates contributions from:
    (i) the nominal test level $\alpha$,
    (ii) rank perturbation due to threshold estimation, post-processing, and clipping,
    and (iii) the probabilities of large threshold deviation and clipping.

In practical deployment, the significance level $\alpha$ and the clipping threshold $H$ are manually specified parameters. 
The standard deviation $\sigma_W$ and the sample sizes $n_R$ and $n_M$ can be directly computed from the data, whereas 
$\delta_h$, $c_h$, and $\mu_h$ are theoretical parameters determined by the underlying assumptions. 
The total number of potentially affected samples is upper bounded by $(k(\delta_h)+k_{\mathrm{pp}}+k_{\mathrm{clip}})$. 
Given these quantities, the theoretical upper bound of the FPR can be computed accordingly. 
Note that the final calculated result is the upper bound of FPR, and the actual FPR may be much smaller than it.

\end{proof}

\subsection{Discussion of Other Methods}
\label{subsec:appendix_discussion}

In this work, we apply the SOTA auditing methods (\ie, \mlda~\cite{huang2024general} and \mt~\cite{chen2025MembershipTracker}) for image data to the video domain as the baselines.
For these methods, we process each
frame in the video separately and then merge them together.

The idea of \mlda is to generate two images as far apart as
possible for the same sample and then randomly select one to
publish, and finally use the suspect model’s output on the two
generated images for auditing.
On the one hand, 
this method requires a large modification ratio in the published version.
On the other hand,
video models are more robust in terms of recognition accuracy,
making this method less effective.

For~\mt, a data marking technique based on image blending and noise injection is designed to mark target data.
Then, a membership inference-based inference process is utilized to achieve the verification by observing the loss values.
Though \mt only requires to mark a small fraction of data,
it obviously degrades the marked image quality,
which reduces the stealthiness.
In addition, due to the complexity of the video model and the diversity of video data, the outputs of member samples and some non-member samples are prone to present similar outputs (such as both high or low), which can interfere with the audit results.
Moreover, for \mt, calculating the threshold based on the non-member loss distribution cannot consistently yield promising results across different video models and datasets. 
Furthermore, video data consists of a sequence of frames, and injecting different noise into consecutive frames independently can easily disrupt the coherence of the video.

Many gradient optimization-based auditing methods~\cite{sablayrolles2020radioactive,li2022untargeted} designed for image data are difficult to directly apply to video.
This is because raw videos typically undergo frame sampling and cropping, which disrupt the input continuity. 
As a result, generating effective perturbations for the entire video based on limited input information becomes highly challenging.
In addition, many backdoor-based methods require modifying the true labels of the original data, which is often impractical in real-world scenarios. 
Such approaches are prone to introducing harmful side effects and potential security risks.

\subsection{Robustness of Input Detection}
\label{subsec:appendix_robustness}

A malicious attacker may deploy outlier detection methods to filter the training data.
Here,
we apply multiple common detection approaches including MSP~\cite{hendrycks2016baseline}, KNN~\cite{sun2022out}, KLM~\cite{hendrycks2022scaling}, Mahalanobis~\cite{lee2018simple} and ViM~\cite{wang2022vim}.
From different perspectives such as feature space, output probability and logits, the above methods achieve outlier detection based on the statistical characteristics of the normal sample distribution.
A pre-trained R3D-18 model is utilized to obtain the feature representation, and the malicious  attacker has access to a small set of unmodified samples to fit the detector.

\autoref{table:OOD_detection} illustrates the auditing performance for the I3D model on the UCF-101 dataset when the attacker applies the input detection mechanism.
We find that \method can still achieve the ideal auditing performance under input detection since these detection methods cannot effectively remove modified samples from the training set.
This highlights that \method is robust to common outlier detection methods.

\begin{table}[htb]
    \caption{Auditing performance under input detection.}
    \vspace{-0.1cm}
    \centering
    \begin{tabular}{c | c | c | c | c | c }
    \toprule
     \textbf{Method} & \textbf{MSP} & \textbf{KNN} &  \textbf{KLM} & \textbf{Mahalanobi} & \textbf{ViM}
     \\
     \midrule
       TPR & $1.000$ & $1.000$ & $1.000$ & $1.000$ & $1.000$ \\
       FPR & $0.000$ & $0.000$ & $0.000$ & $0.000$ & $0.000$ \\
       
      \bottomrule
    \end{tabular}
    \label{table:OOD_detection}
\end{table}

\subsection{Parameter Variation}
\label{subsec:appendix_parameter_variation}

\subsubsection{Impact of Perturbation Budget}
\label{subsubsec:appendix_impact_pert_budget}

\autoref{fig:param_vary_epsilon_hmdb51} illustrates the impact of various perturbation budgets for three suspect models on the HMDB-51 dataset.
We find that even with a perturbation budget of only 2, \method could still achieve a TPR of 1 on the I3D and SlowFast models while maintaining a low FPR. 
This is because the HMDB51 dataset is relatively small in scale, and the target model’s performance is generally lower than on the UCF-101 dataset. 
In such a setting, \method is more likely to succeed in amplifying the effect of the modified samples. 
The TSM model exhibits a lower TPR compared to the other two models. 
We speculate that this is due to fewer training epochs, making the impact of a low perturbation budget less noticeable. 
As the value of $\varepsilon$ increases, the auditing performance of \method continues to improve. 
When $\varepsilon$ reaches 4, \method achieves 100\% auditing accuracy on both the I3D and SlowFast models, further demonstrating its effectiveness. Additionally, the modified samples lead to the most significant performance drop on the SlowFast model, likely because it is trained from scratch without a pre-trained backbone, resulting in higher instability during training.
We further provide the visualization of generated videos under different perturbation budgets for HMDB-51, as shown in~\autoref{fig:vary_epsilon_video_show2}.

\begin{figure*}[!t]
    \centering
    \includegraphics[width=0.85\textwidth]{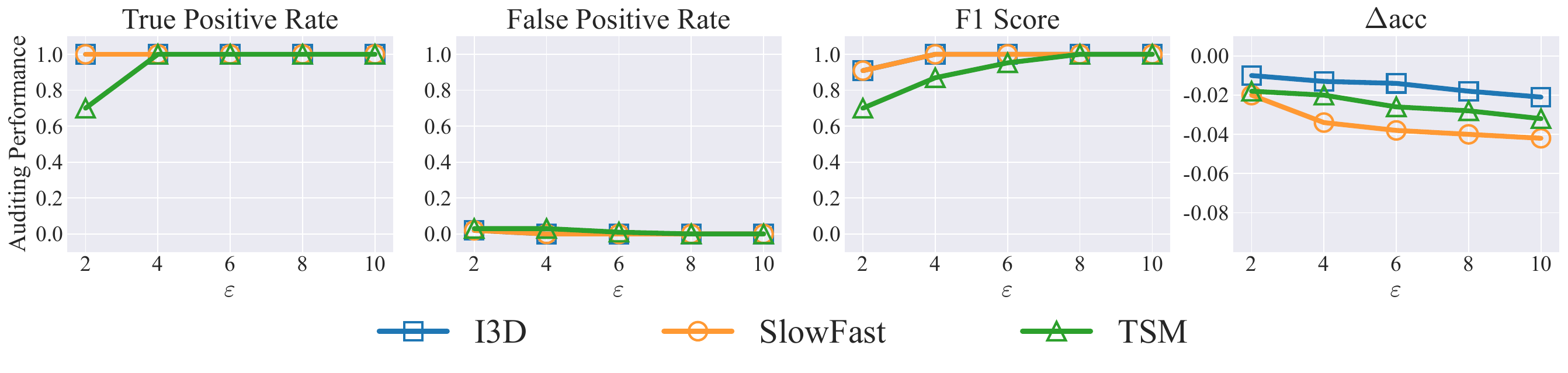}
    \vspace{-0.3cm}
    \caption{
    The impact of perturbation budget for three various suspect models on the four metrics of the HMDB-51 dataset.
    }
    \label{fig:param_vary_epsilon_hmdb51}
\vspace{-0.35cm}
\end{figure*}

\subsubsection{Impact of Modification Ratio}
\label{subsubsec:appendix_impact_modify_ratio}
\autoref{fig:param_vary_ratio_hmdb51} presents the auditing performance under different modification ratios on the HMDB-51 dataset. 
We observe that when the modification ratio is 0.5\%, the TPR of the three models does not reach 1, remaining around 0.6–0.8. 
This is primarily because HMDB-51 is relatively small in size, and a 0.5\% modification ratio corresponds to only about ten samples. 
Under such conditions, achieving high-precision auditing is challenging, as numerous factors influence model training, and the amplification effect of only a few modified samples tends to be unstable. 
Nevertheless, \method still achieves a TPR above 0.5 and an FPR below 0.05 across all three suspect models, demonstrating its strong auditing capability. 
As the modification ratio increases to 1\% and 2\%, \method can achieve a 100\% detection accuracy (\ie, TPR = 1 and FPR = 0), further highlighting its effectiveness. 

\begin{figure*}[!t]
    \centering
    \includegraphics[width=0.85\textwidth]{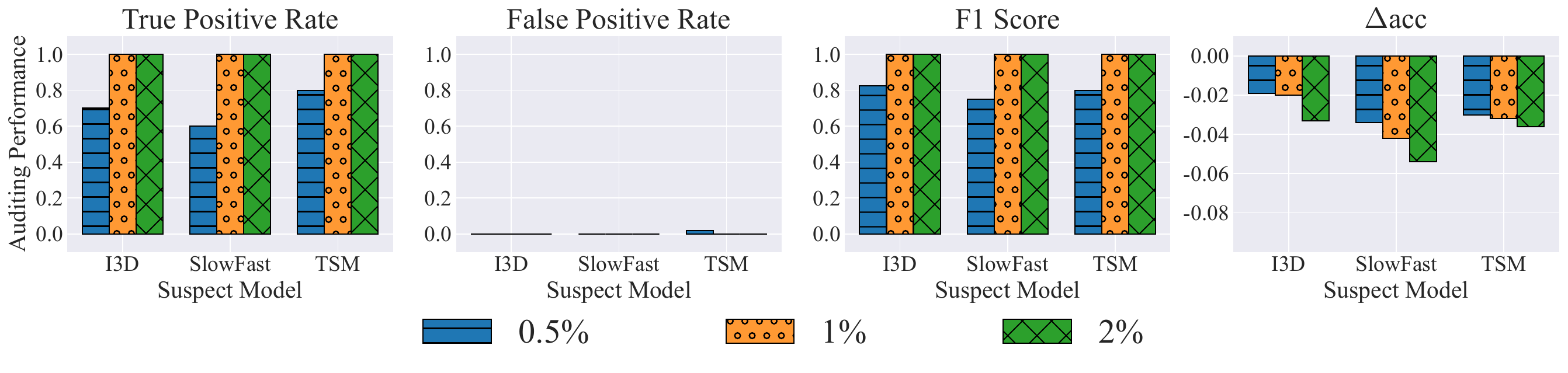}
    \vspace{-0.3cm}
    \caption{
    The impact of modification ratio for three 
    various 
    suspect models on 
    the 
    four metrics of the HMDB-51 dataset.
    }
    \label{fig:param_vary_ratio_hmdb51}
\vspace{-0.2cm}
\end{figure*}

\subsubsection{Impact of Various Evaluation Models}
\label{subsubsec:appendix_impact_various_eval_model}

\autoref{fig:param_vary_eval_model_ucf101} presents the results for four metrics on the UCF101 dataset using different evaluation models. 
We observe that when I3D and TSM are used as evaluation models, both the TPR and FPR achieve optimal performance. 
In contrast, when SlowFast is applied, the auditing accuracy slightly declines. 
This is likely because 
SlowFast
is trained from scratch without pre-trained weights, resulting in weaker generalization ability. 
Consequently, the selected samples are less effective compared to those identified by the other two models. 
These findings suggest that stronger evaluation models lead to more appropriate sample selection, thereby enhancing the final auditing.

\begin{figure*}[!t]
    \centering
    \includegraphics[width=0.85\textwidth]{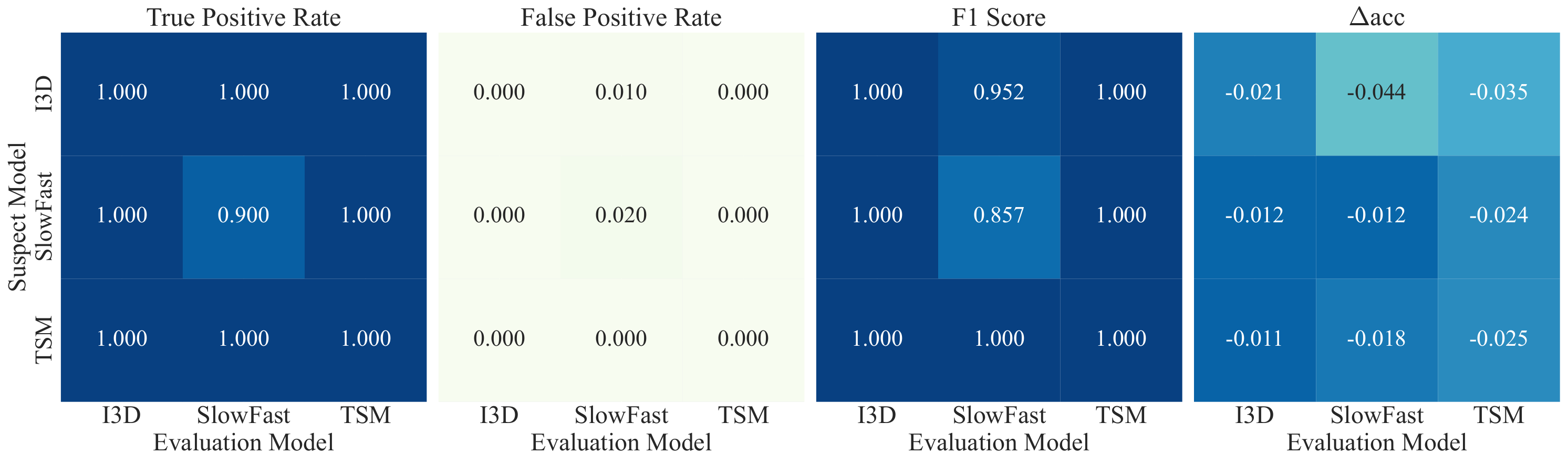}
    \vspace{-0.3cm}
    \caption{
    The impact of different evaluation models for three %
    suspect models on four metrics of the UCF-101 dataset.
    }
    \label{fig:param_vary_eval_model_ucf101}
\vspace{-0.2cm}
\end{figure*}

\autoref{fig:param_vary_eval_model_hmdb51}
presents the auditing results of different evaluation models for various suspect models on the HMDB-51 dataset. 
First, we observe that all evaluation models achieve 100\% detection accuracy across different suspect models, fully demonstrating the superiority and robustness of~\method. 
Second, we find that when the suspect model is SlowFast, the performance on normal tasks degrades most significantly. 
This is mainly because, unlike the other two models, SlowFast is trained from scratch without leveraging a pre-trained model, making its training more unstable.

\begin{figure*}[!t]
    \centering
    \includegraphics[width=0.85\textwidth]{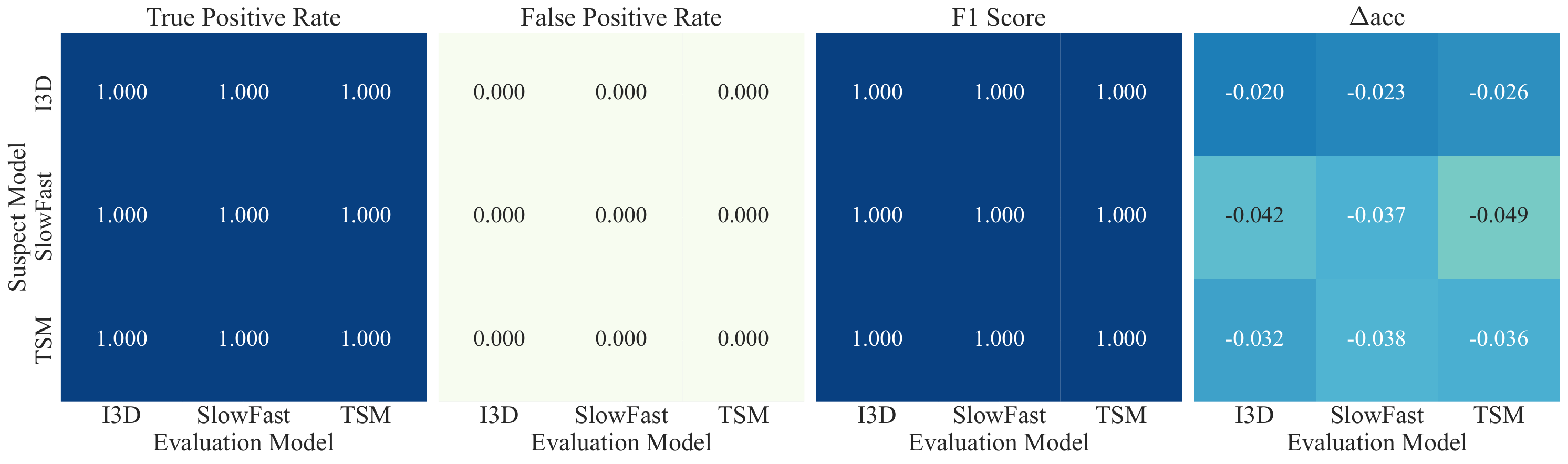}
    \vspace{-0.3cm}
    \caption{
    The impact of different evaluation models for three 
    suspect models on 
    four metrics of the HMDB-51 dataset.
    }
    \label{fig:param_vary_eval_model_hmdb51}
\vspace{-0.4cm}
\end{figure*}

\subsubsection{Impact of Noise Setting}
\label{subsubsec:appendix_impact_noise_setting}

In this section, we investigate the impact of different noise parameter settings on auditing performance. \autoref{table:varrious_noise_param} lists four parameter configurations (denoted as S1, S2, S3, and S4).
The parameters $\lambda_x$ and $\lambda_y$ control the spatial variation scale, while $\lambda_t$ determines the variation along the temporal axis. 
The parameter $\phi_{sine}$ introduces nonlinear mapping of the noise, and $\Omega$ is utilized to overlay multiple layers of noise with varying frequencies. 

\autoref{fig:param_vary_noise_ucf101} presents the TPR and FPR results on the UCF101 dataset under each configuration. 
We observe that \method achieves ideal auditing performance across all settings, consistently reaching TPR = 1 and FPR = 0. 
These results demonstrate the robustness and practicality of \method. 
In real-world deployment, dataset owners can flexibly adjust noise parameters according to specific requirements.

\begin{table}[!t]
    \centering
    \caption{Summary of various noise parameter settings.}
    \label{table:varrious_noise_param}
    \vspace{-0.1cm}
    \footnotesize
    \setlength{\tabcolsep}{1.2em}
	\begin{tabular}{cc}
		\toprule
		\textbf{Type} & \textbf{Setting}  \\
		\midrule
            
		S1 & $\lambda_{x}=\lambda_y=32,\lambda_t=6.4,\phi_{sine}=1,\Omega=2$ \\
		S2 & $\lambda_{x}=\lambda_y=32,\lambda_t=3.2,\phi_{sine}=1,\Omega=4$  \\
             S3 & $\lambda_{x}=\lambda_y=16,\lambda_t=3.2,\phi_{sine}=0.7,\Omega=5$  \\
            S4 & $\lambda_{x}=\lambda_y=16,\lambda_t=1.6,\phi_{sine}=1,\Omega=2$ \\
        
		\bottomrule
	\end{tabular}
    \vspace{-0.2cm}
\end{table}

\begin{figure}[!t]
    \centering
    \includegraphics[width=0.35\textwidth]{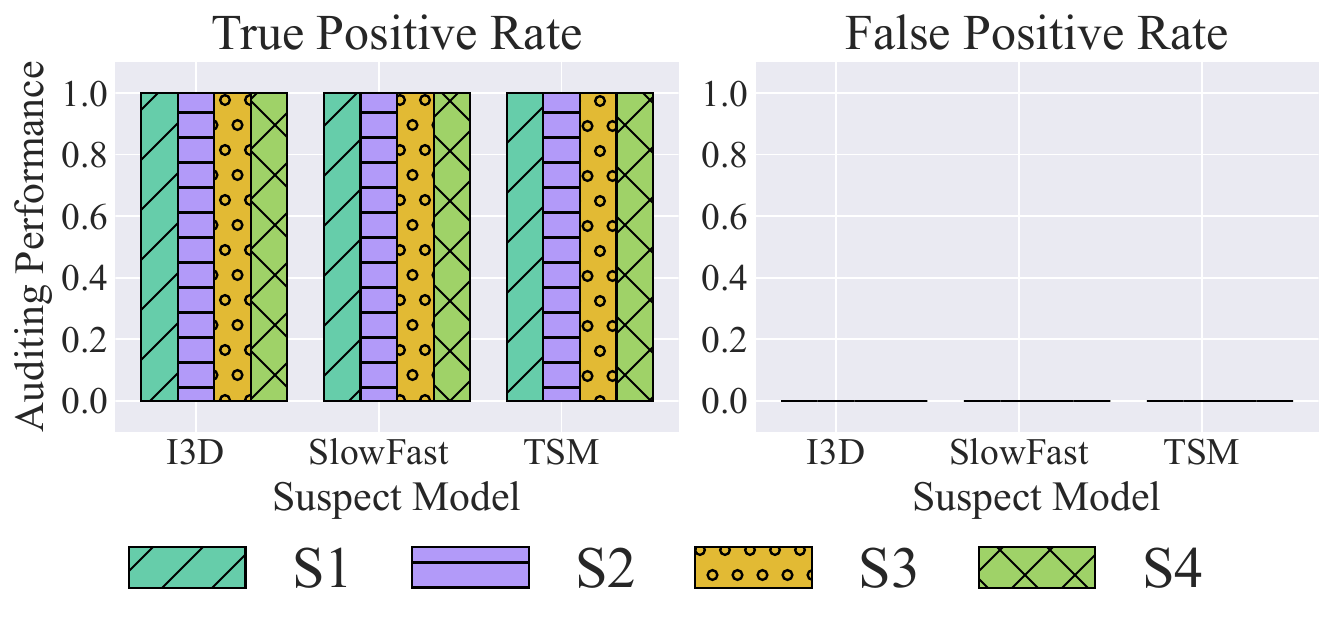}
    \vspace{-0.2cm}
    \caption{
    The impact of different noise settings for three 
    suspect models on two metrics of the UCF-101 dataset.
    }
    \label{fig:param_vary_noise_ucf101}
\vspace{-0.2cm}
\end{figure}

\begin{figure}[!htbp]
    \centering
    \includegraphics[width=0.35\textwidth]{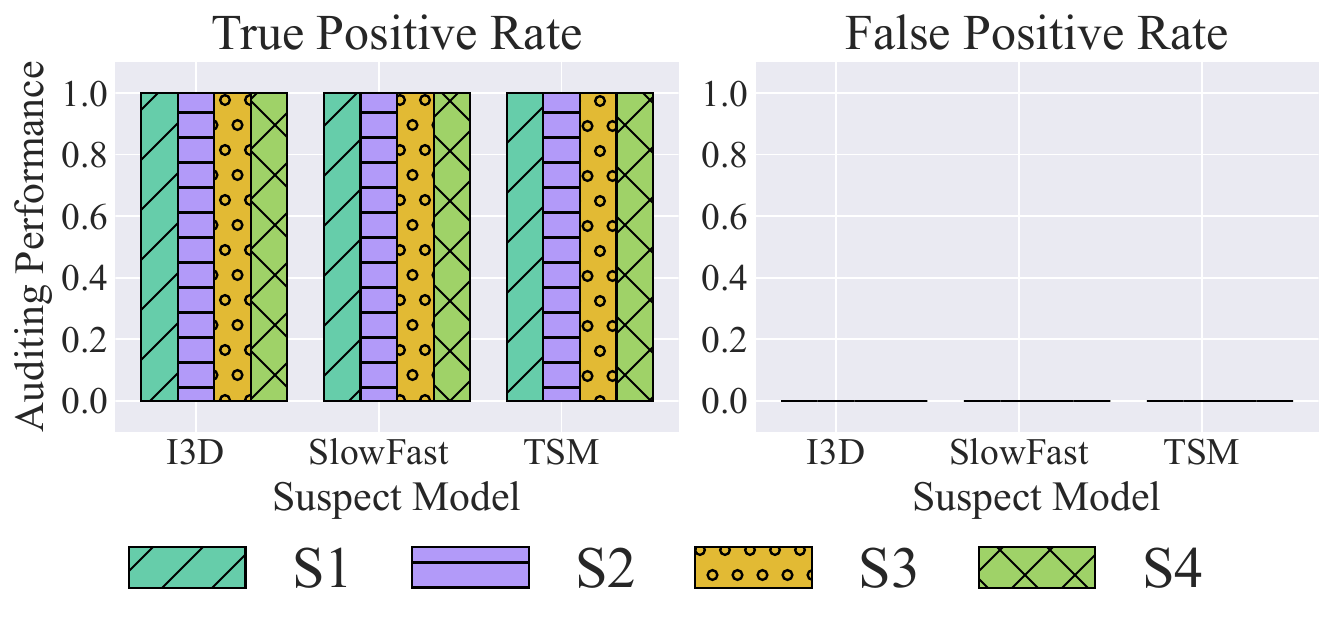}
    \vspace{-0.2cm}
    \caption{
    The impact of different noise settings for three %
    suspect models on 
    two metrics of the HMDB-51 dataset.
    }
    \label{fig:param_vary_noise_hmdb51}
\vspace{-0.2cm}
\end{figure}

\autoref{fig:param_vary_noise_hmdb51}
shows the TPRs and FPRs of three suspect models on the HMDB-51 dataset under different noise parameter settings. 
It can be seen that \method consistently achieves 100\% auditing accuracy (\ie, TPR = 1 and FPR = 0) across all suspect models. 
By injecting three-dimensional Perlin noise, the influence of the modified samples is effectively amplified, ensuring \method to accurately determine whether the published dataset has been misused based on the model's behavior. 
The above results and analysis demonstrate the versatility and robustness of~\method. 
In practical scenarios, the dataset owner can flexibly set the noise parameters according to their specific requirements.
We also provide an illustration of generated videos under various noise parameter settings for the HMDB-51 dataset in~\autoref{fig:vary_perlin_video_show2}.

\subsubsection{Impact of Threshold Setting}
\label{subsubsec:appendix_impact_threshold_setting}

As discussed in~\autoref{subsec:copyright_verify}, we apply a threshold clipping mechanism during hypothesis testing. 
In this section, we examine how different settings of the threshold upper bound $H$ affect the audit results. 
Since \method performs well under various thresholds when the perturbation budget $\varepsilon$ is 10, here we focus on a more challenging scenario and present the changes in TPR and FPR for three suspect models on the UCF-101 dataset when the perturbation budget $\varepsilon$ is set to 4.

As shown in~\autoref{fig:param_vary_clip_th_ucf101},
when $H$ is set to a very small value (\eg, 0), it imposes a strict condition where the output probability of the original sample must be significantly lower than that of the modified sample for a dataset misuse to be detected. 
In this case, the FPR of all three suspect models is 0, but the TPR drops to 0.6. 
Conversely, when $H$ is set to a larger value (\eg, 0.2), the TPR increases, but the FPR also rises significantly. These results indicate that setting
$H$ too low or too high leads to suboptimal auditing outcomes, \ie, either a reduced detection rate (low TPR) or increased false alarms (high FPR).
Therefore, selecting an appropriate value for $H$ is critical to balancing TPR and FPR. 
We find that 
$H=0.05$ provides a good trade-off between these two aspects, which is consistent with our analysis in~\autoref{subsec:appendix_threshold_analysis}.
Therefore, we adopt this value as the threshold upper bound. %

\begin{figure}[!t]
    \centering
    \includegraphics[width=0.35\textwidth]{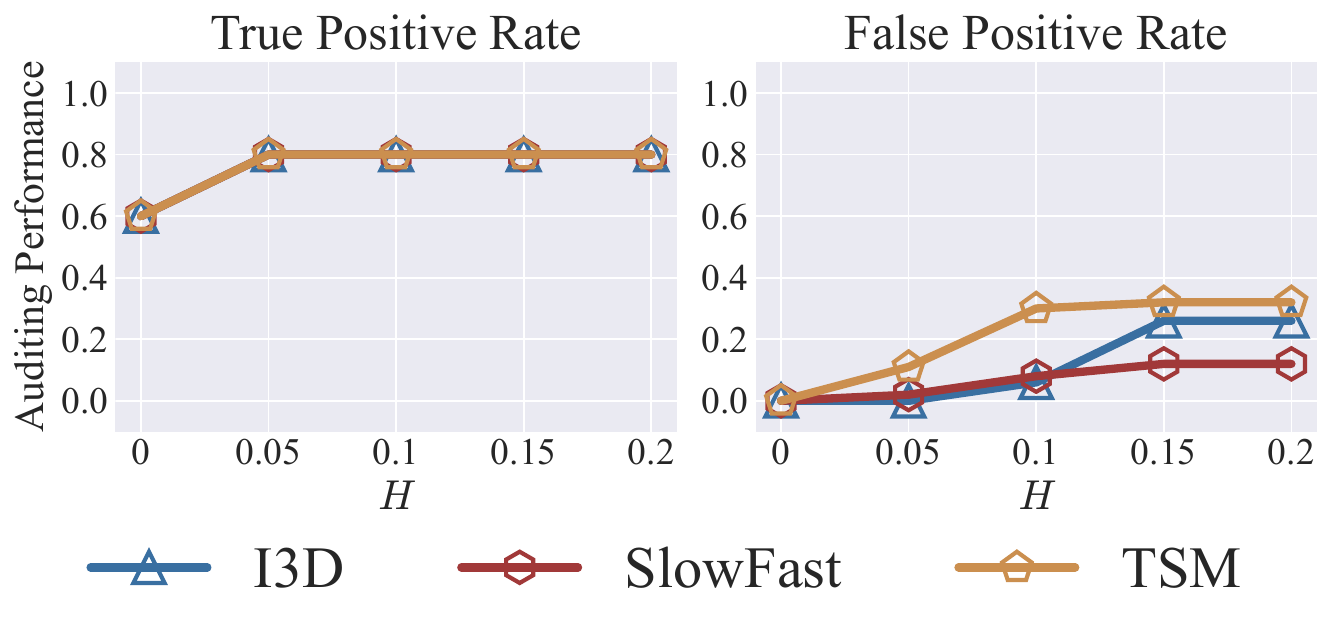}
    \vspace{-0.2cm}
    \caption{
    The impact of different threshold uppers for three various suspect models on the two metrics of the UCF-101 dataset when the noise perturbation budget $\varepsilon=4$.
    }
    \label{fig:param_vary_clip_th_ucf101}
\vspace{-0.4cm}
\end{figure}

\begin{figure}[!htbp]
    \centering
    \includegraphics[width=0.35\textwidth]{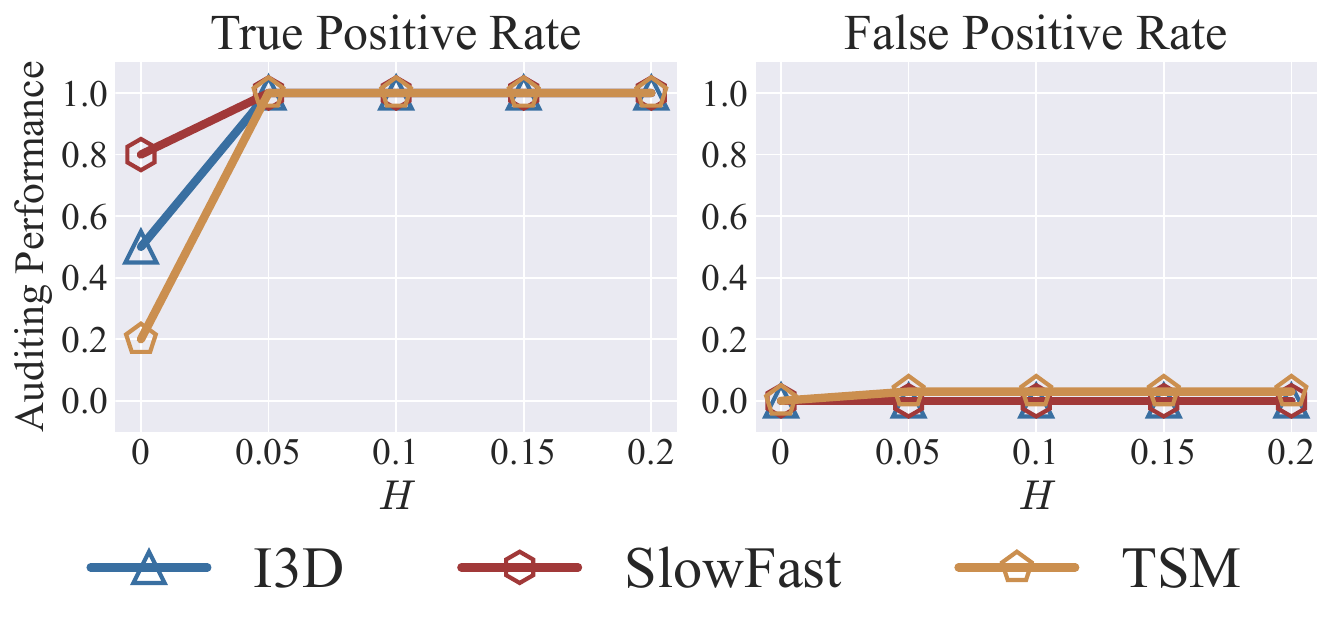}
    \caption{
    The impact of different threshold uppers for three various suspect models on the two metrics of the HMDB-51 dataset when the noise perturbation budget $\varepsilon=4$.
    }
    \label{fig:param_vary_clip_th_hmdb51}
\vspace{-0.3cm}
\end{figure}

\autoref{fig:param_vary_clip_th_hmdb51} presents the TPR and FPR results under different threshold upper bounds on the HMDB-51 dataset. Consistent with the trends observed on the UCF-101 dataset, we found that when the threshold upper bound $H$ is set to a very small value (\eg, 0), the FPR remains very low, but the TPR drops significantly. 
As the threshold increases, the TPR improves notably, but at the cost of a higher FPR. 
To strike a good balance between TPR and FPR across different datasets and suspect models, we consistently set $H=0.05$ in the experiments.

\subsection{Results on Larger Dataset and Backbones}
\label{subsec:result_larger_dataset_backbone}

In this section,
we explore the effectiveness of \method on a larger dataset (\ie, SSv2) and the transformer-based backbone (\ie, TimeSformer).

\autoref{table:ssv2_audit} provides the overall auditing performance of \method.
It can be seen that \method still achieves perfect auditing results on the large datasets, which highlights the versatility of \method.
Furthermore,
we also explore the auditing performance under 
Top-$K$ ($K=5$) and Label-only settings.
As shown in~\autoref{table:ssv2_topk_label},
\method still exhibits great auditing accuracy.
\method achieves 100\% auditing accuracy with the Top-$K$ setting, while a slight decrease in auditing performance occurs with the Label-only setting.
This indicates that \method remains robust to stricter constraints. 

\begin{table}[htbp]
    \centering
    \caption{Auditing performance for the TimeSformer model on the SSv2 dataset.}
    \label{table:ssv2_audit}
    \footnotesize
    \setlength{\tabcolsep}{1.2em}
	\begin{tabular}{c|cccc}
		\toprule
		\textbf{Metric} & TPR & FPR & F1 & $\Delta$acc  \\
		\midrule
        \textbf{Result} & 1.000 & 0.000 & 1.000 & $-0.003$ \\
        \bottomrule
	\end{tabular}
    \vspace{-0.2cm}
\end{table}

\begin{table}[htbp]
    \centering
    \caption{Auditing performance for the TimeSformer model on the SSv2 dataset under Top-$K$ and Label-only settings.}
    \label{table:ssv2_topk_label}
    \footnotesize
    \setlength{\tabcolsep}{1.2em}
	\begin{tabular}{c|cc}
		\toprule
		\textbf{Setting} & TPR & FPR   \\
		\midrule
        \textbf{Top-$K$} & 1.000 & 0.000  \\
        \textbf{Label-only} & 1.000 & 0.001  \\
        \bottomrule
	\end{tabular}
    \vspace{-0.2cm}
\end{table}

\subsection{Robustness under Common Perturbations}
\label{subsec:robustness_common_pert}
In real deployment,
there exist a series of common perturbations that may affect the auditing performance.
In this section,
we explore the robustness of \method under three types of common perturbations.

\subsubsection{Output Quantization}
In this setting,
the attacker tries to reduce information leakage by quantifying precise probability values.
The output probabilities are rounded to one decimal place.
\autoref{fig:output_quantization} illustrates the auditing performance under output quantization on the HMDB-51 and UCF-101 datasets.

It can be seen that \method still achieves strong auditing performance across multiple suspect models and datasets.
The reason is that the quantized results still reflect the behavioral characteristics of model,
which serves as the crucial basis in our dataset auditing. 
This also shows the robustness of our proposed method.

\begin{figure}[htbp]
    \centering
    \includegraphics[width=0.35\textwidth]{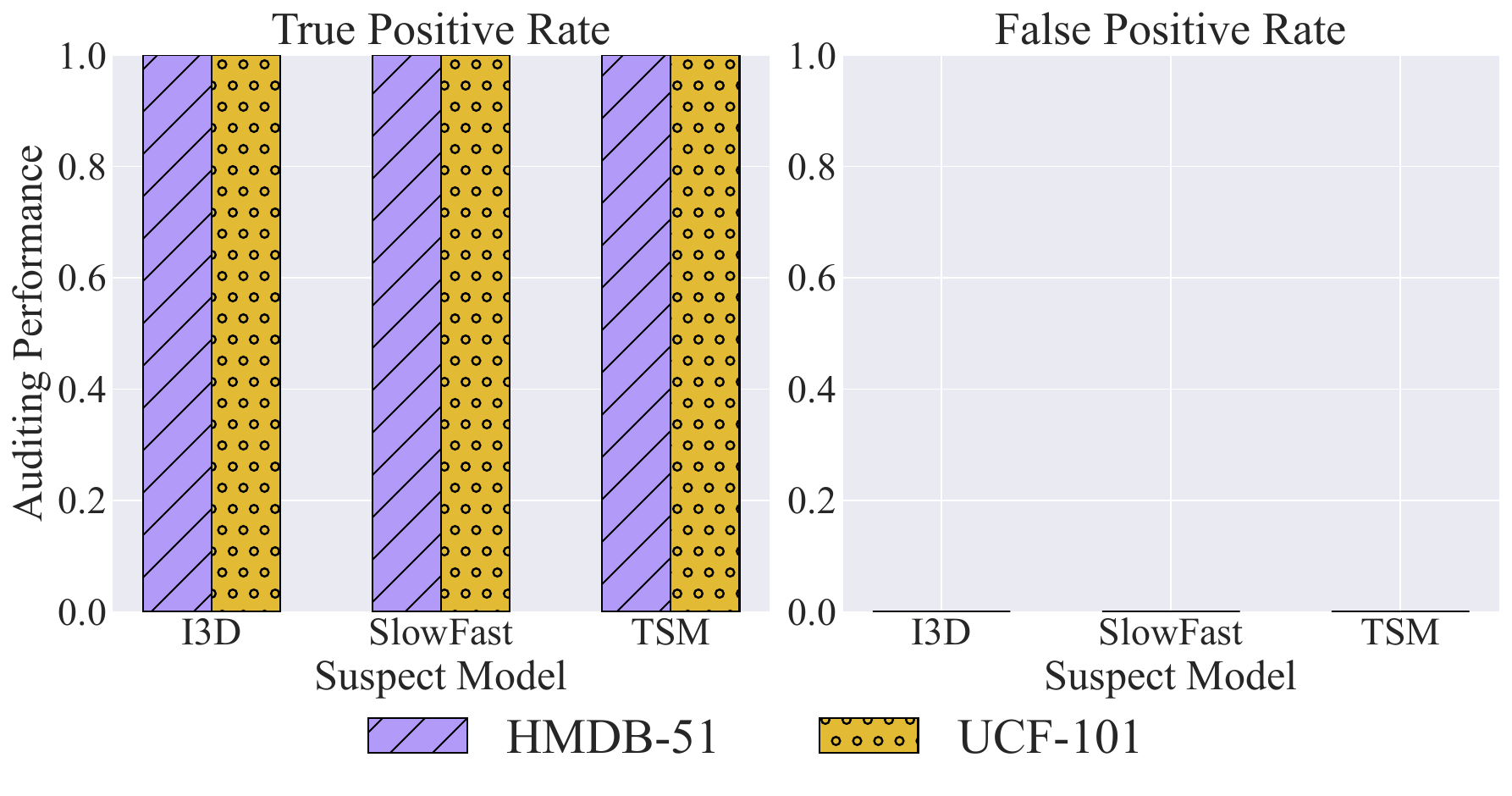}
    \vspace{-0.3cm}
    \caption{
    The auditing performance under output quantization. %
    }
    \label{fig:output_quantization}
\vspace{-0.3cm}
\end{figure}

\subsubsection{Limited Query}
In this scenario, the number of queries performed on the suspect model is limited.
Considering that the HMDB-51 and UCF-101 datasets require a relatively low number of queries (\ie, less than 100 or approximately 250), here we choose to explore the auditing effect on the SSv2 dataset with limited queries.

\autoref{fig:query_limit_ssv2} provides the auditing performance under limited query settings.
We find that \method exhibits promising performance under a small query limit (\eg, 200).
Even for a smaller query limit (\ie, 100),
\method can still identify most cases of dataset misuse and achieve a low FPR.
Therefore, 
\method remains robust under limited query constraints.

\begin{figure}[htbp]
    \centering
    \includegraphics[width=0.2\textwidth]
    {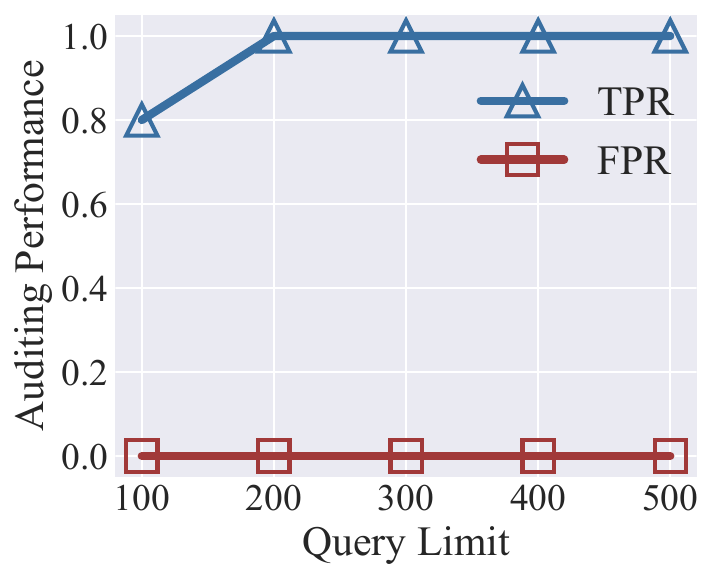}
    \vspace{-0.2cm}
    \caption{
    The auditing performance under limited query settings of the SSv2 dataset.
    }
    \label{fig:query_limit_ssv2}
\vspace{-0.3cm}
\end{figure}

\subsubsection{Lossy Re-encoding}
In our test pipeline,
common video preprocessing steps (\eg, cropping and sampling) are already included.
Here, another common step (\ie, lossy compression) is applied to reduce video size, which may have a potential impact on dataset auditing. 
\autoref{table:ssv2_lossy_encode} illustrates the auditing performance for the TimeSformer model on the SSv2 dataset under this setting.
We find that \method still achieves great auditing accuracy in this re-encoding scenario since 
the compressed video still retains the feature information of the original video.
In addition, it can be seen that this compression has a negative impact on accuracy for normal tasks.

\begin{table}[htbp]
    \centering
    \caption{Auditing performance for the TimeSformer model on the SSv2 dataset under lossy re-encoding.}
    \label{table:ssv2_lossy_encode}
    \vspace{-0.1cm}
    \footnotesize
    \setlength{\tabcolsep}{1.2em}
	\begin{tabular}{c|cccc}
		\toprule
		\textbf{Metric} & TPR & FPR & F1 & $\Delta$acc  \\
		\midrule
        \textbf{Result} & 1.000 & 0.000 & 1.000 & $-0.015$ \\
        \bottomrule
	\end{tabular}
    \vspace{-0.2cm}
\end{table}

\subsection{Robustness Against Adaptive Attackers}
\label{subsec:robustness_adaptive_attack}

In this section,
we consider an adaptive and stronger attacker,
which can adopt a series of procedures to evade auditing.
Here,
we implement four types of adaptive attack settings (denoted as A1-A4) to verify the robustness of \method.
In particular,
A1 denotes the content-aware denoising (\ie,  bilateral filter),
and the neighborhood diameter of the filter is set to $5$.
A2 denotes the mild blur,
and the kernel size is $(5, 5)$.
A3 denotes the frequency-domain suppression (\ie, low-pass), 
and the normalized cutoff frequency is $0.1$.
A4 denotes the adversarial adaptation (\ie, adversarial training),
and 10\% of the published samples are randomly selected and injected with similar Perlin noise.

\autoref{table:ssv2_adaptive_attack} provides the auditing performance under various adaptive attack settings.
\method achieves 100\% audit accuracy in all four adaptive settings. 
The reason is that these removal operations do not completely eliminate the memory of suspect models on the published samples. 
In this case, the model's behavioral differences between published and unpublished samples remain significant, which can be utilized as a basis for successful auditing.
We observe that these adaptive attack methods exert a more pronounced impact on the performance of normal training than using the original modified samples. 
Moreover, compared with the HMDB-51 and UCF-101 datasets, different training methods exhibit smaller performance variations on normal tasks when evaluated on SSv2. 
This can be attributed to the larger scale of SSv2, which promotes better generalization performance.

\begin{table}[htbp]
    \centering
    \caption{Auditing performance for the TimeSformer model on the SSv2 dataset under adaptive attack settings.}
    \label{table:ssv2_adaptive_attack}
    \vspace{-0.1cm}
    \footnotesize
    \setlength{\tabcolsep}{1.2em}
	\begin{tabular}{c|cccc}
		\toprule
		\textbf{Type} & TPR & FPR & F1 & $\Delta$acc  \\
		\midrule
        \textbf{A1} & 1.000 & 0.000 & 1.000 & $-0.009$ \\
        \textbf{A2} & 1.000 & 0.000 & 1.000 & $-0.008$ \\
        \textbf{A3} & 1.000 & 0.000 & 1.000 & $-0.013$ \\
        \textbf{A4} & 1.000 & 0.000 & 1.000 & $-0.011$ \\
        \bottomrule
	\end{tabular}
    \vspace{-0.2cm}
\end{table}

\subsection{Efficiency Analysis}
\label{subsec:efficiency_analysis}

In this section, we explore the computation efficiency of \method on various datasets and models.
The experimental results are obtained on a single A6000 GPU.

\autoref{table:ssv2_runtime} illustrates the running time of different phases for \method on multiple datasets, 
and \autoref{table:ssv2_memory} provides the memory consumption when training on various models.
We observe that the running time on the SSv2 dataset is obviously higher than other two datasets since the scale of SSv2 is large.
The phase of evaluation model training consumes the majority of the time.
For large-scale datasets that may be encountered in real-world deployments, the  time of evaluation model training can be effectively reduced by decreasing the training set size, changing the pretrained model, and reducing the number of training epochs.
The results in~\autoref{fig:param_vary_eval_model_ucf101} indicate that different evaluation models exhibit good transferability.
According to~\autoref{fig:ablation_eval_model_ucf101},
\method can still achieve competitive auditing results even without the evaluation model.

\begin{table}[htbp]
    \centering
    \caption{Running time of different phases for \method.}
    \label{table:ssv2_runtime}
    \vspace{-0.1cm}
    \footnotesize
    \setlength{\tabcolsep}{1.2em}
	\begin{tabular}{c| c | c | c }
		\toprule
		\textbf{Phase} & HMDB-51 & UCF-101 & SSv2   \\
		\midrule
        \textbf{Training} & 1 h 50 min & 3 h 46 min & 25 h 12 min  \\
        \textbf{Modification} & 9 min & 26 min & 4 h 2 min  \\
        \textbf{Selection} & 6 min  & 25 min & 4 h 9 min  \\
        \textbf{Verification} & 0.5 min & 2 min & 21 min  \\
        \bottomrule
	\end{tabular}
    \vspace{-0.2cm}
\end{table}

\begin{table}[htbp]
    \centering
    \caption{Memory consumption (unit: Gigabytes).
    }
    \label{table:ssv2_memory}
    \vspace{-0.1cm}
    \footnotesize
    \setlength{\tabcolsep}{1.2em}
	\begin{tabular}{c| c | c | c | c  }
		\toprule
		\textbf{Model} & I3D & SlowFast & TSM  & TimeSformer  \\
		\midrule
        \textbf{Memory} & 9.66 & 9.31 & 10.25 & 16.69  \\
        
        \bottomrule
	\end{tabular}
    \vspace{-0.2cm}
\end{table}

For the sample modification and selection phases,
the running time shown in~\autoref{table:ssv2_runtime} corresponds to the time taken to calculate all samples in the dataset.
For large datasets, the time required for these two phases can be significantly reduced by decreasing the size of the modified samples (\eg, by 20\% of the original dataset) and parallel computation.
Similarly, 
the running time of verification phase can also be decreased by reducing the number of samples used for verification.
The experimental results in~\autoref{fig:query_limit_ssv2} show that \method remains robust under such limited query conditions.
Therefore, the scalability of 
\method is promising. 
Our method can effectively handle dataset auditing of varying sizes.

\begin{figure*}[htbp]
    \centering
    \includegraphics[width=0.95\textwidth]{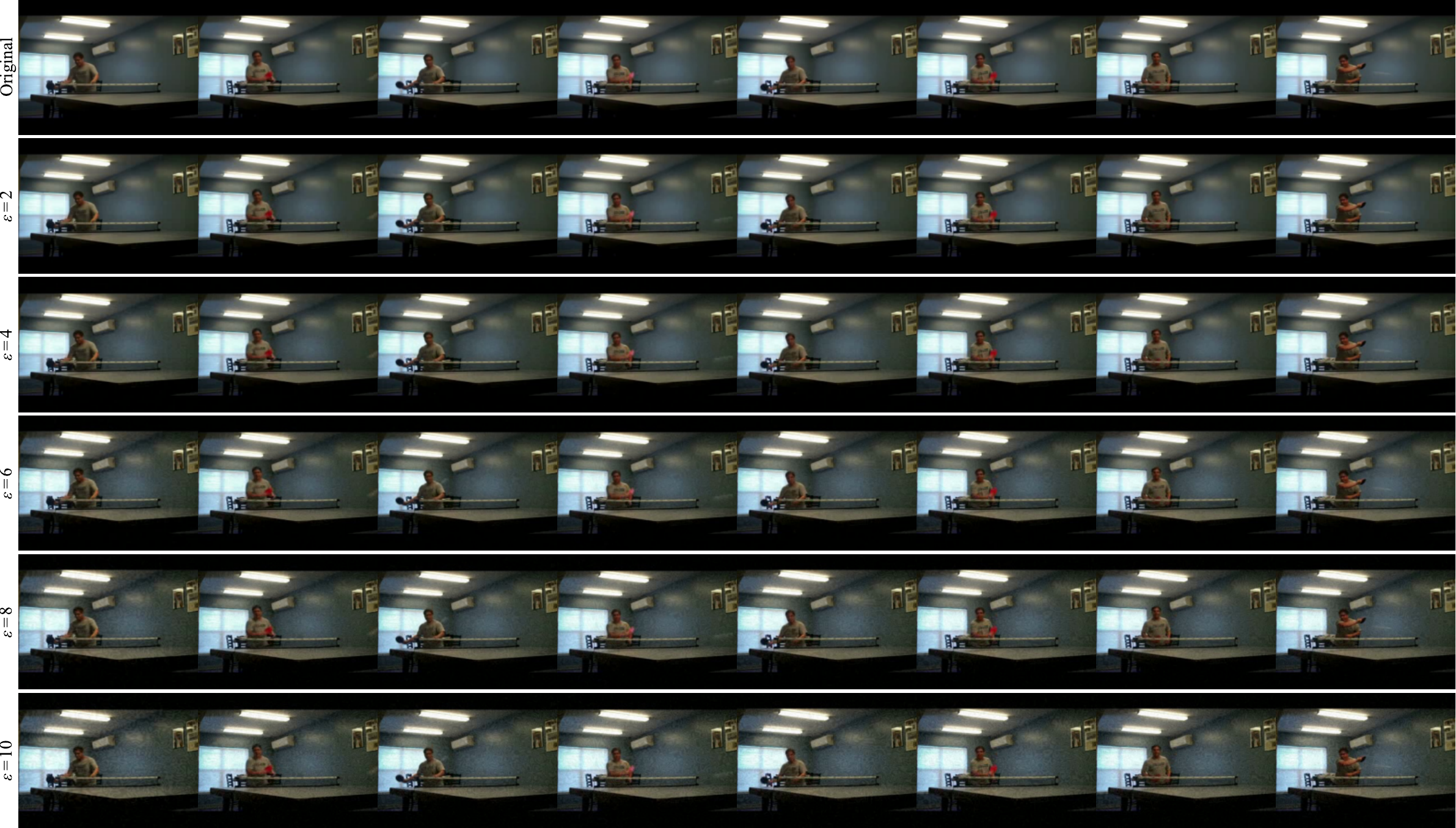}
    \caption{
    An illustration of generated videos under different perturbation budgets for UCF-101.
    }
    \label{fig:vary_epsilon_video_show}
\end{figure*}

\begin{figure*}[htbp]
    \centering
    \includegraphics[width=0.95\textwidth]%
{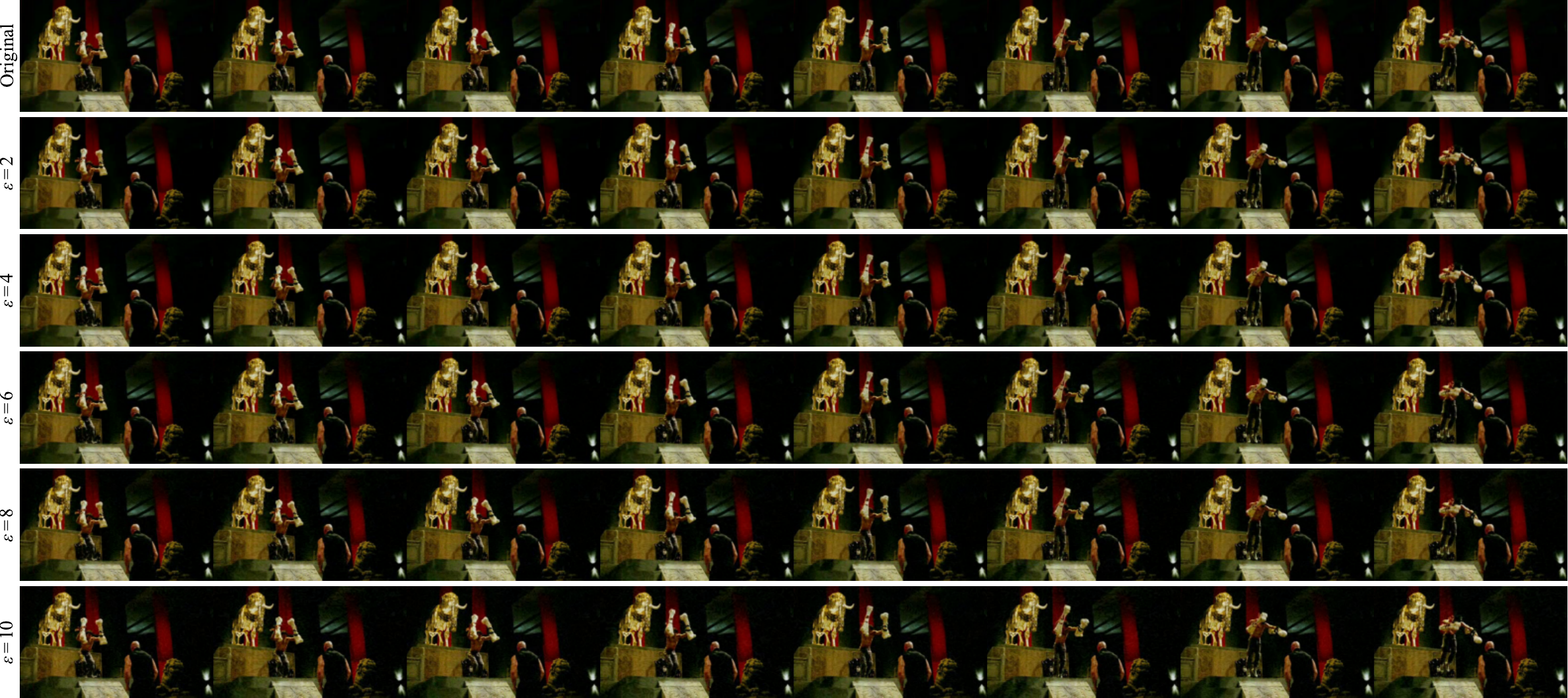}
    \caption{
    An illustration of generated videos under different perturbation budgets for HMDB-51.
    }
    \label{fig:vary_epsilon_video_show2}
\end{figure*}

\begin{figure*}[htbp]
    \centering
    \includegraphics[width=0.95\textwidth]{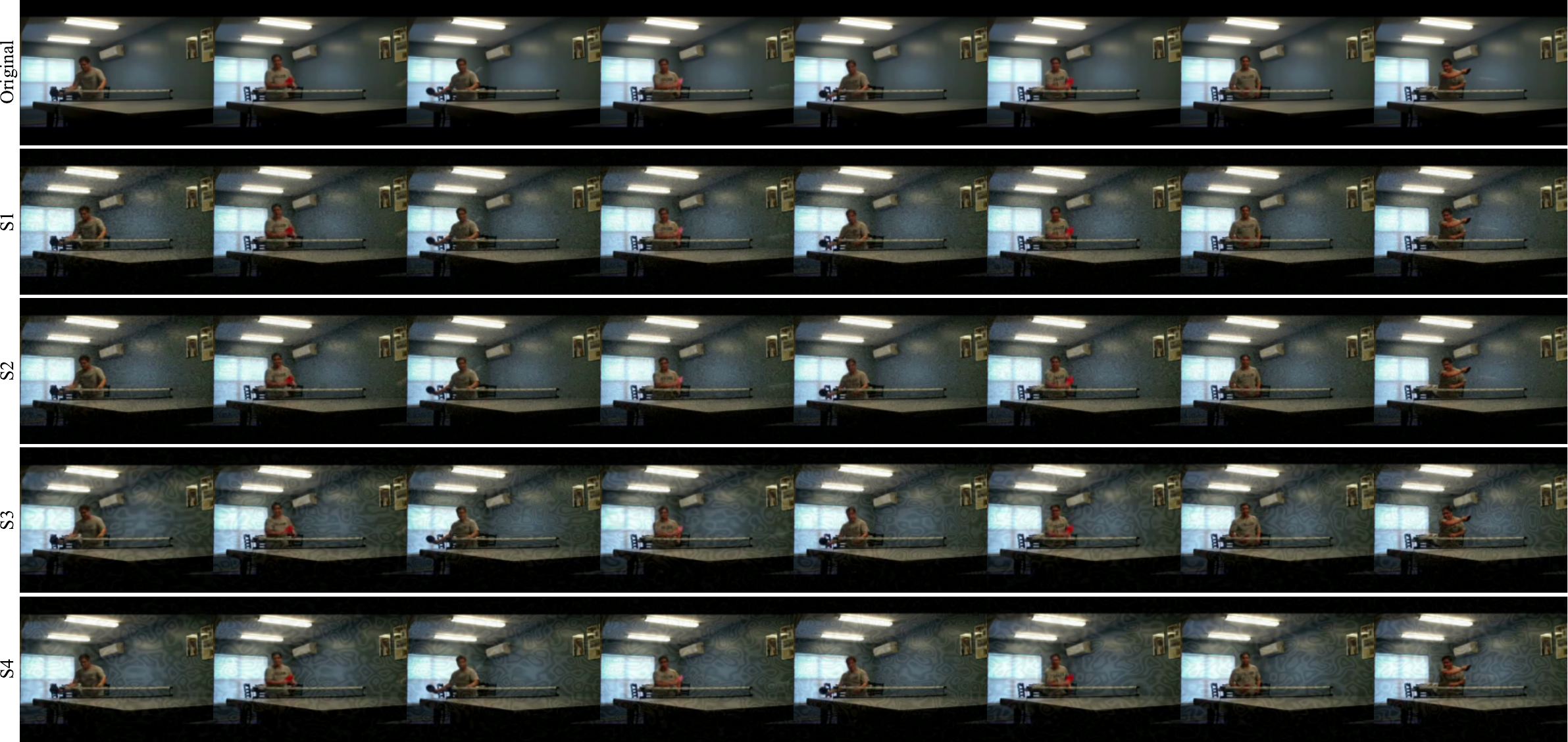}
    \caption{
    An illustration of generated videos under different noise parameter settings for UCF-101.
    }
    \label{fig:vary_perlin_video_show}
\end{figure*}

\begin{figure*}[htbp]
    \centering
    \includegraphics[width=0.95\textwidth]%
{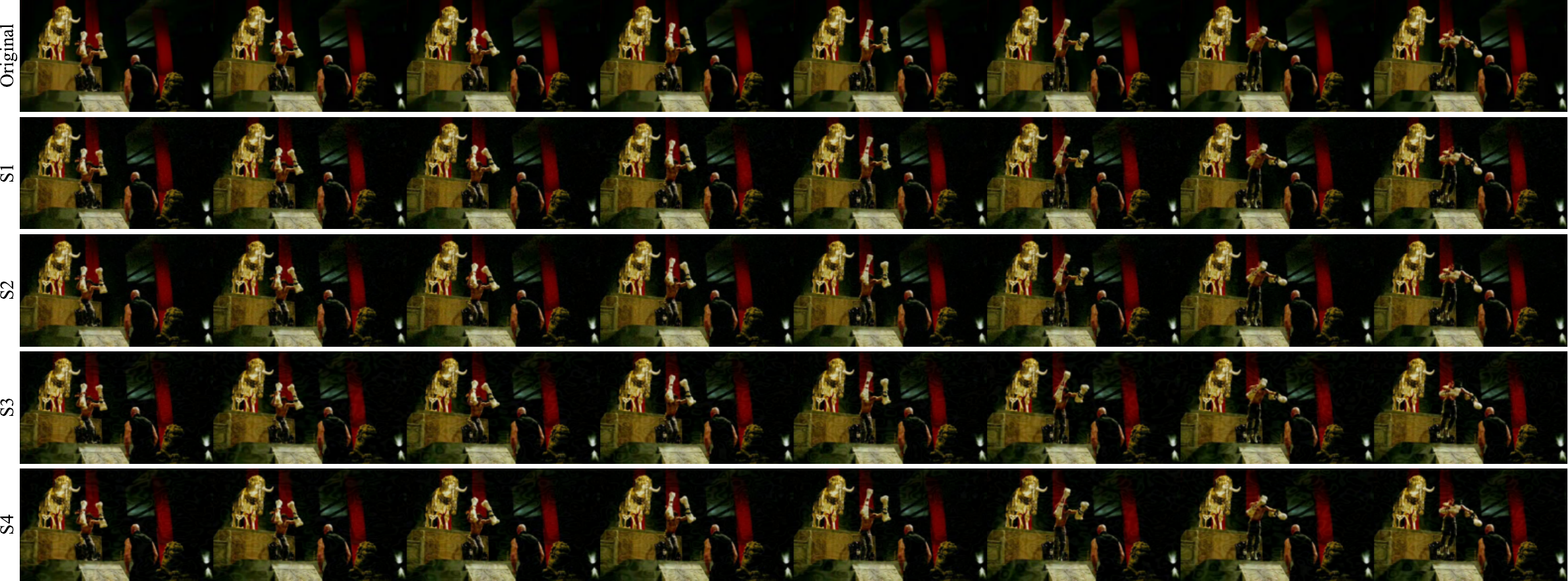}
    \caption{
    An illustration of generated videos under different noise parameter settings for HMDB-51.
    }
    \label{fig:vary_perlin_video_show2}
\end{figure*}

\end{document}